\newcolumntype{H}{>{\setbox0=\hbox\bgroup}c<{\egroup}@{}}
\renewrobustcmd{\bfseries}{\fontseries{b}\selectfont}
\newrobustcmd{\B}{\bfseries}
\newcommand{\commentout}[1]{}
\def\NAT@spacechar{~}
\crefname{theorem}{Theorem}{Theorems}
\newtheorem{proposition}{Proposition}
\newtheorem{definition}{Definition}
\newtheorem{theorem}{Theorem}
\newtheorem{lemma}{Lemma}
\newtheorem{example}{Example}
\crefname{example}{Example}{Examples}
\newcommand{\DTW}{\textsc{DTW-Mean}\xspace}
\newcommand{\wDTW}{\textsc{Weighted~DTW-Mean}\xspace}
\newcommand{\BDTW}{\textsc{Binary DTW-Mean}\xspace}
\newcommand{\MSA}{\textsc{Multiple Sequence Alignment}\xspace}
\newcommand{\MSAacr}{\textsc{MSA}\xspace}
\newcommand{\STS}{\textsc{Steiner String}\xspace}
\newcommand{\STSacr}{\textsc{STS}\xspace}
\newcommand{\N}{\mathbb{N}}
\newcommand{\Q}{\mathbb{Q}}
\newcommand{\T}{\mathcal{T}}
\DeclareMathOperator{\dtw}{dtw}
\providecommand{\norm}[1]{\left\Vert #1 \right\Vert}
\DeclareMathOperator*{\argmin}{arg\,min}
\newcommand{\Optproblemdef}[3]{
	\begin{center}
  \begin{minipage}{0.95\textwidth}
    \noindent
    \textsc{#1}

			\vspace{2pt}
			\setlength{\tabcolsep}{3pt}
			\begin{tabularx}{\textwidth}{@{}lX@{}}
					\textbf{Input:} 		& #2 \\
					\textbf{Task:} 	& #3
				\end{tabularx}
  \end{minipage}
	\end{center}
}
\newcommand{\aedp}{\textsc{EDP}\xspace}
\newcommand{\asym}{\textsc{MAL}\xspace}
\newcommand{\adba}{\textsc{DBA}\xspace}
\newcommand{\asoft}{\textsc{SDTW}\xspace}
\newcommand{\absg}{\textsc{BSG}\xspace}
\newcommand{\assg}{\textsc{SSG}\xspace}
\renewcommand{\S}[1]{{\mathcal{#1}}}
\newcommand{\rw}{\text{rw}}
\newcommand{\ucr}{\text{ucr}}
\newcommand{\ucm}{\text{ucm}}
\newcommand{\ncm}{\text{ncm}}
\newcommand{\cm}{\text{cm}}
\newcommand{\med}{\text{med}}
\newcommand{\avg}{\text{avg}}
\newcommand{\std}{\text{std}}
\newcommand{\total}{\text{total}}
\newcommand{\struct}{\text{struct}}
\newcommand{\abs}[1]{\mathop{\left\lvert #1 \right\rvert}} 
\newcommand{\args}[1]{\mathop{\left( #1 \right)}} 
\newcommand{\cbrace}[1]{\mathop{\left\{ #1 \right\}}}
\newcommand{\Oh}{O}
\newcommand{\thetitle}{Exact Mean Computation in Dynamic Time Warping Spaces}
\title{\thetitle}
\author[1]{Markus~Brill}
\author[2]{Till~Fluschnik}
\author[2]{Vincent~Froese}
\author[3]{Brijnesh~Jain}
\author[2]{Rolf~Niedermeier}
\author[3]{David~Schultz}
\affil[1]{\small Efficient Algorithms, TU~Berlin, Germany, \texttt{brill@tu-berlin.de}}
\affil[2]{\small Algorithmics and Computational Complexity, TU~Berlin, Germany, \texttt{\{till.fluschnik,vincent.froese,rolf.niedermeier\}@tu-berlin.de}}
\affil[3]{\small Distributed Artificial Intelligence Laboratory, TU Berlin, Germany, \texttt{\{brijnesh-johannes.jain,david.schultz\}@dai-labor.de}}
\date{}
\begin{document}
\maketitle

\begin{abstract}
  Dynamic time warping constitutes a major tool for analyzing time series.
  In particular, computing a mean series of a given sample of series in dynamic time warping spaces (by minimizing the Fr\'echet function) is a challenging computational problem, so far solved by several heuristic and inexact strategies.
  We spot some inaccuracies in the literature on exact mean computation in dynamic time warping spaces.
  Our contributions comprise an exact dynamic program computing a mean (useful for benchmarking and evaluating known heuristics).
  Based on this dynamic program, we empirically study properties like uniqueness and length of a mean.
  Moreover, experimental evaluations reveal substantial deficits of state-of-the-art heuristics in terms of their output quality.
  We also give an exact polynomial-time algorithm for the special case of binary time series.
  
  \medskip
  \noindent
  \emph{Keywords}: time series analysis, Fr\'echet function, dynamic programming, exact exponential-time algorithm, empirical evaluation of heuristics
\end{abstract}

\section{Introduction}
Time series such as acoustic signals, electrocardiograms, and internet traffic data are time-dependent observations that vary in length and temporal dynamics. Given a sample of time series, to filter out the corresponding variations, one major direction to time series averaging applies \emph{dynamic time warping} (dtw). Different dtw-averaging approaches have been applied to improve nearest neighbor classifiers and to formulate centroid-based clustering algorithms in dtw-spaces \cite{Abdulla2003,HNF08,Morel2018,Oates1999,PFWNCK16,Rabiner1979,Soheily-Khah2016}.

The most successful approaches pose time series averaging as an optimization problem \cite{CB17,HNF08,PKG11,SJ17,Soheily-Khah2016}: Suppose that $\S{X} = \args{x^{(1)}, \dots, x^{(k)}}$ is a sample of $k$ time series $x^{(i)}$. 
Then a (weighted Fr\'echet) \emph{mean} in dtw-spaces is usually defined as any time series~$z$ that minimizes the \emph{weighted Fr\'echet function}~\cite{Frechet1948}
\[
F_w(z) := \sum_{i=1}^k w_i\cdot\left.\dtw\!\args{z, x^{(i)}}\right.^2,
\]
where $\dtw(x,y)$ denotes the dtw-distance between time series~$x$ and~$y$ and $w=(w_1,\ldots,w_k)$ is a given~$k$-dimensional weight vector (usually assumed to satisfy $0\le w_i\le 1$ for all~$i\in[k]$ and~$\sum_{i=1}^kw_i=1$).
We refer to the problem of minimizing~$F_w$ over the set $\S{T}$ of all finite time series as the  \wDTW problem.
In the special case of~$w_i=1/k$ for all~$i\in[k]$, we write~$F(z)$ and refer to the problem of minimizing~$F$ over the set $\S{T}$ as \DTW.

A variant of the \DTW problem constrains the solution set $\S{T}$ to the subset $\S{T} _m \subseteq \S{T}$ of all length-$m$ time series. For both the constrained and unconstrained \DTW variant, solutions are guaranteed to exist, but are not unique in general~\cite{SJ17}. 

As has been shown very recently, \DTW is NP-hard, W[1]-hard with respect to the sample size~$k$, and not solvable in~$\rho(k)\cdot n^{o(k)}$ time for any computable function~$\rho$ (where~$n$ is the maximum length of any input time series) assuming a plausible complexity-theoretic hypothesis~\cite{BFN18}.
Algorithms with exponential time complexity have been falsely claimed to provide optimal solutions \cite{HNF08,PG12,PKG11}. Existing heuristics approximately solve the constrained  \DTW problem \cite{CB17,PKG11,SJ17}. Thereby, the length~$m$ of feasible solutions is specified beforehand without any knowledge about whether the subset~$\S{T} _m$ contains an optimal solution of the unconstrained \DTW problem. 
In summary, the development of nontrivial exact algorithms are to be considered widely open. 

\paragraph{Our Contributions.}
We discuss several problematic statements in the literature concerning the computational complexity of exact algorithms for \DTW. 
We refute (supplying counterexamples) some false claims from the literature and clarify the known state of the art with respect to computing means in dtw-spaces (\Cref{sec:probstat}). 
We show that in case of binary time series (both input and mean) there is an exact polynomial-time algorithm for mean computation in dtw-spaces (\Cref{sec:binaryDTW}).
We develop a dynamic program as an exact algorithm for the (unconstrained) \wDTW problem on rational time series. 
The worst-case time complexity of the proposed dynamic program is $O(n^{2k+1}2^kk)$, where~$k$ is the sample size and~$n$ is the maximum length of a sample time series (\Cref{sec:exact}).  
We apply the proposed exact dynamic program on small-scaled problems as a benchmark of how well the state-of-the-art heuristics approximate a mean. Our empirical findings reveal that all state-of-the-art heuristics suffer from relatively poor worst-case solution quality in terms of minimizing the Fr\'echet function, and the solution quality in general may vary quite a lot (\Cref{sec:exp}).

\section{Preliminaries}
\label{sec:prelim}

Throughout this paper, we consider only finite univariate time series with rational elements.
A univariate \emph{time series} of length~$n$ is a sequence~$x = (x_1,\ldots, x_n)\in\Q^n$.
We denote the set of all univariate rational time series of length $n$ by~$\T_n$. Furthermore, $\T = \bigcup_{n \in \N} \T_n$ denotes the set of all univariate rational time series of finite length.
For every~$n\in\N$, let $[n]$ denote the set $\{1,\ldots,n\}$.

The next definition is fundamental for our central computational problem.

\begin{definition}
  A \emph{warping path of order~$m \times n$} is a sequence~$p=(p_1,\ldots,p_L)$
  with~$p_\ell\in [m]\times[n]$ for all~$\ell\in[L]$ such that
  \begin{compactenum}[i)]
    \item $p_1 = (1,1)$, 
	\item $p_L=(m,n)$, and
    \item $p_{\ell+1} - p_\ell \in \{(1,0),(0,1),(1,1)\}$ for all~$\ell\in[L-1]$.
    \end{compactenum}
  \end{definition}

  \noindent Note that~$\max\{m,n\} \le L \le m+n$.
  We denote the set of all warping paths of order~$m\times n$ by~$\mathcal{P}_{m,n}$.
  For two time series~$x=(x_1,\ldots,x_m)$ and~$y=(y_1,\ldots,y_n)$, a warping path~$p=(p_1,\ldots,p_L)\in\mathcal{P}_{m,n}$ defines an alignment of~$x$ and~$y$:
  each pair~$p_\ell=(i_\ell,j_\ell)$ of~$p$ \emph{aligns} element~$x_{i_\ell}$ with~$y_{j_\ell}$.
  The \emph{cost} $C_p(x,y)$ for aligning~$x$ and~$y$ along warping path~$p$ is defined as $C_p(x,y)=\sum_{\ell=1}^L (x_{i_\ell} - y_{j_\ell})^2$.
  The \emph{dtw-distance} between~$x$ and~$y$ is defined as
  
    \[\dtw(x,y):= \min_{p\in\mathcal{P}_{m,n}}\left\{\sqrt{C_p(x,y)}\right\}.\]
    A warping path~$p$ with~$C_p(x,y) = \left(\dtw(x,y)\right)^2$ is called an~\emph{optimal} warping path for~$x$ and~$y$.

We remark that~$\dtw(x,y)$ for two time series~$x,y$ of length~$n$ can be computed in subquadratic time~$O(n^2 \log\log\log{n}/\log\log{n})$~\cite{GS17}.
The existence of a strongly subquadratic-time (that is,~$O(n^{2-\epsilon})$ for some~$\epsilon > 0$) algorithm is considered unlikely~\cite{BK15}.

Our central computational problem is defined as follows. 
\Optproblemdef{\wDTW}
{Sample $\S{X} = \args{x^{(1)}, \dots, x^{(k)}}$ of $k$ univariate rational time series and rational nonnegative weights~$w_1,\ldots,w_k$.}
{A univariate rational time series~$z$ that minimizes the weighted Fr\'echet function $F_w(z)$.}

\noindent The special case of uniform weights (that is, $w_i=\frac{1}{k}$ for all $i\in[k]$) is called \DTW.
It is known that a weighted mean always exists~\cite[Remark~2.13]{JS18} (however, it is not necessarily unique).
Note that for rational inputs, every weighted mean is also rational (see \Cref{lem:opt_mean_value}), that is, \wDTW always has a solution.

\section{Problematic Statements in the Literature}
\label{sec:probstat}
In this section we discuss misleading and wrong claims and errors in the literature.

\subsection{NP-Hardness}
\label{ssec:nphardness}
The \DTW problem is often related to the \STS (\STSacr) problem \cite{ASW15,HNF08,Marteau16, FBELP17, PG17,PG12,PKG11,PFWNCK16}. A \emph{Steiner string} \cite{G97} (or \emph{Steiner sequence}) for a set~$S$ of strings is a string~$t$ that minimizes
$\sum_{s \in S} D(t,s)$,
where $D$ is a distance measure between two strings often assumed to fulfill the triangle inequality (e.g., the weighted edit distance).
Computing a Steiner string is equivalent to solving the~\MSA (\MSAacr) problem~\cite{G97}.
Both, \STSacr and \MSAacr, are known to be NP-hard for several distance measures even on binary alphabets~\cite{BD01, NR05}. 
Interestingly, as we show in \Cref{sec:binaryDTW}, \wDTW is solvable in polynomial time for the space of binary time series.

Several papers mention the NP-hardness results for \MSAacr and \STSacr in the context of \DTW~\cite{HNF08, Marteau16, PG17, FBELP17}, thereby suggesting that \DTW is also NP-hard.
However, it is not clear (and not shown in the literature) how to reduce from \MSAacr (or \STSacr) to \DTW since the involved distance measures are significantly different. For example, the $\dtw$-distance lacks the metric property of the edit distance.
Very recently, devising an intricate reduction from the \textsc{Multicolored Clique} problem, \citet{BFN18} showed that \DTW is NP-hard, W[1]-hard with respect to the sample size~$k$, and not solvable in time~$\rho(k)\cdot n^{o(k)}$ for any function~$\rho$, where~$n$ is the maximum length of any sample time series (unless the \emph{Exponential Time Hypothesis} fails).

\subsection{Computation of Exact Solutions}
\label{ssec:exact}

Two exponential-time algorithms were proposed to exactly solve \DTW. 
The first approach is based on multiple (sequence) alignment in bioinformatics~\cite{G97} and the second is a brute-force method \cite{PKG11}. 
We show that neither algorithm is guaranteed to return an optimal solution for every \DTW{} instance.

\paragraph{Multiple Alignment.}
It is claimed that \DTW can be solved by averaging a (global) multiple alignment of the~$k$ input series~\cite{HNF08,PKG11,PG12,PFWNCK16}. A multiple alignment of~$k$ time series in the context of dynamic time warping is described as computing a $k$-dimensional warping path in a $k$-dimensional matrix (a precise formal definition is not given).
Concerning the running time, it is claimed that computing a multiple alignment requires~$\Theta(n^k)$ time~\cite{PKG11, PG12} ($\Oh(n^k)$ time~\cite{PFWNCK16}), where~$n$ is the maximum length of an input time series. Neither the upper bound of~$\Oh(n^k)$ nor the lower bound of~$\Omega(n^k)$ on the running time are formally proven.

Given a multiple alignment, it is claimed that averaging the~$k$ resulting aligned time series column-wise yields a mean \cite[Definition 4]{PG12}.
We show that this is not correct even for two time series.
Note that for two time series, a multiple alignment is simply obtained by an optimal warping path.
However, the column-wise average of two aligned time series obtained from an optimal warping path is not always an optimal solution for a \DTW{} instance as 
the following example shows.

\begin{example}\label{ex:MAandMean}\normalfont 
Let $x^{(1)} = (1,4,2,3)$ and $x^{(2)} = (4,2,4,5)$. 
Using an exhaustive search, we obtain the unique optimal warping path
\[p = ((1,1), (2,1), (3,2), (4,3), (4,4))\] of length five.
The two corresponding aligned length-5 time series are 
\begin{align*}
 x_p^{(1)} &= (1,4,2,3,3), \\
 x_p^{(2)} &= (4,4,2,4,5).
\end{align*}
The arithmetic mean of these time series is $\bar{x} = (2.5,4,2,3.5,4)$.
However, for $z = (2.5, 4, 2, 4)$, we have $F(z) = 6.5 < 7 = F(\bar{x})$,
which shows that $\bar{x}$ is not a mean (see \Cref{fig:ex1}).
\end{example}

\begin{figure}[t]
 \centering
 \begin{tikzpicture}
    \usetikzlibrary{calc}

    \def\ysc{0.66}
    \node (x1) at (0,0)[]{$x^{(1)}=(1,4,2,3)$};
    \node (x) at (0,-1.5*\ysc)[]{$\bar{x}=(2.5,4,2,3.5,4)$};
    \node (x2) at (0,-3*\ysc)[]{$x^{(2)}=(4,2,4,5)$};

    \draw[thick] ($(x) -(1-0.66,-0.25)$) -- ($(x1) -(0.6-0.6,0.25)$);
    \draw[thick] ($(x) -(0.4-0.4,-0.25)$) -- ($(x1) -(0.2-0.5,0.25)$);
    \draw[thick] ($(x) -(0-0.3,-0.25)$) -- ($(x1) -(-0.2-0.4,0.25)$);
    \draw[thick] ($(x) -(-0.5-0.2,-0.25)$) -- ($(x1) -(-0.6-0.3,0.25)$);
    \draw[thick] ($(x) -(-1-0.1,-0.25)$) -- ($(x1) -(-0.6-0.3,0.25)$);

    \draw[thick] ($(x) -(1-0.66,0.25)$) -- ($(x2) -(0.6-0.6,-0.25)$);
    \draw[thick] ($(x) -(0.4-0.4,0.25)$) -- ($(x2) -(0.6-0.6,-0.25)$);
    \draw[thick] ($(x) -(0-0.3,0.25)$) -- ($(x2) -(0.2-0.5,-0.25)$);
    \draw[thick] ($(x) -(-0.5-0.2,0.25)$) -- ($(x2) -(-0.2-0.4,-0.25)$);
    \draw[thick] ($(x) -(-1-0.1,0.25)$) -- ($(x2) -(-0.6-0.3,-0.25)$);


    \def\xsh{4.5}

    \node (x1) at (0+\xsh,0)[]{$(1,4,2,3)=x^{(1)}$};
    \node (x) at (0+\xsh-0.25,-1.5*\ysc)[]{$(2.5,4,2,4)=z$};
    \node (x2) at (0+\xsh,-3*\ysc)[]{$(4,2,4,5)=x^{(2)}$};

    \draw[thick] ($(x) -(1-0.8+0.5,-0.25)$) -- ($(x1) -(0.6-0.7+1,0.25)$);
    \draw[thick] ($(x) -(0.4-0.7+0.6,-0.25)$) -- ($(x1) -(0.2-0.6+1,0.25)$);
    \draw[thick] ($(x) -(0-0.7+0.75,-0.25)$) -- ($(x1) -(-0.2-0.6+1.1,0.25)$);
    \draw[thick] ($(x) -(-0.5-0.55+0.8,-0.25)$) -- ($(x1) -(-0.6-0.45+1.1,0.25)$);

    \draw[thick] ($(x) -(1-0.8+0.5,0.25)$) -- ($(x2) -(0.6-0.6+1,-0.25)$);
    \draw[thick] ($(x) -(0.4-0.7+0.6,0.25)$) -- ($(x2) -(0.6-0.6+1,-0.25)$);
    \draw[thick] ($(x) -(-0.0-0.7+0.75,0.25)$) -- ($(x2) -(0.2-0.6+1,-0.25)$);
    \draw[thick] ($(x) -(-0.5-0.55+0.8,0.25)$) -- ($(x2) -(-0.-0.2-0.6+1.1,-0.25)$);
    \draw[thick] ($(x) -(-0.5-0.55+0.8,0.25)$) -- ($(x2) -(-0.6-0.45+1.1,-0.25)$);


  \end{tikzpicture}
  \caption{Illustration of \Cref{ex:MAandMean}. Shown are two time series $x^{(1)}$ and $x^{(2)}$, as well as the time series $\bar{x}$ obtained by the multiple alignment approach and a mean~$z$. Lines indicate optimal alignments between the time series. Note that $\dtw(x^{(1)},\bar{x})^2=\dtw(x^{(2)},\bar{x})^2=3.5$ whereas $\dtw(x^{(1)},z)^2=\dtw(x^{(2)},z)^2=3.25$.}
  \label{fig:ex1}
\end{figure}
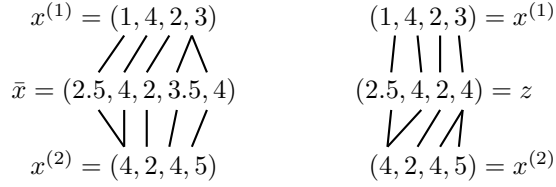

In \cref{ex:MAandMean}, the time series~$\bar{x}$ is also not an optimal choice among all time series of length five since also $z' = (2.5, 4, 4, 2, 4)$ satisfies~$F(z')=6.5 < F(\bar{x})$.
In fact, by computer-based exhaustive search we found that no warping path~$p\in\mathcal{P}_{4,4}$ yields a mean for~$x^{(1)}$ and~$x^{(2)}$ by averaging the aligned time series~$x_p^{(1)}$ and~$x_p^{(2)}$.
We conclude that a multiple alignment as defined by \citet[Definition~5]{PG12} that shall produce an averaged time series that minimizes the Fr\'echet function does not exist in general.
\Cref{ex:MAandMean} implies that incremental pairwise averaging strategies such as NLAAF~\cite{GMTS96} or PSA~\cite{NR09} are based on a wrong mean computation for two time series.

We finish with another erroneous example from the literature for three time series~\cite[Figure~2]{PG12}.
  \begin{example}\label{ex:PG_TCS}\normalfont
    For the three time series
    \begin{align*}
      x^{(1)} &= (1,10,0,0,4),\\
      x^{(2)} &= (0,2,10,0,0),\\
      x^{(3)} &= (0,0,0,10,0),
    \end{align*}
\noindent the multiple alignment is given as
    \begin{align*}
      x^{(1)'} &= (1,1,1,10,0,0,4),\\
      x^{(2)'} &= (0,0,2,10,0,0,0),\\
      x^{(3)'} &= (0,0,0,10,0,0,0),
    \end{align*}
    yielding the arithmetic mean~$\bar{x} = (\frac{1}{3}, \frac{1}{3},1,10,0,0,\frac{4}{3})$ with~$F(\bar{x}) = 14/3 \ge 4.66$.
    However, for~$z = (\frac{1}{4}, 1, 10, 0, \frac{4}{3})$, it holds~$F(z) \le 4.48 < F(\bar{x})$.
  \end{example}

\paragraph{Brute-Force Algorithm.}
  Another approach to solve \DTW is based on a brute-force algorithm~\cite{PKG11} working as follows: Suppose an optimal mean is of length~$m$.
  Consider for each input time series a partition into~$m$ consecutive non-empty parts and
  align the~$i$-th element in the mean with all elements in the~$i$-th part of each time series.
It is claimed that a mean can be found by trying out all possible partitions into~$m$ consecutive non-empty parts for each input time series.
  
  This approach is problematic since not all possible solutions are considered (as two elements in the mean can be aligned with the same element of an input time series).
  \Cref{ex:MAandMean} depicts this problem.

\section{Polynomial-time Solvability for Binary Data}
\label{sec:binaryDTW}
By restricting the values in the time series (input and mean) 
to be binary (0 or~1),
we arrive at the special case \BDTW{}.

\Optproblemdef{\BDTW}
{Sample $\S{X} = \args{x^{(1)}, \dots, x^{(k)}}$ of~$k$ time series with elements in~$\{0,1\}$ and rational nonnegative weights~$w_1,\ldots,w_k$.}
{Find a time series~$z\in\{0,1\}^*$ that minimizes $F_w(z)$.}

We prove that \BDTW is polynomial-time solvable.

\begin{theorem}
  \label{thm:bdtwpolytime}
  \BDTW for $k$~input time series is solvable with $\Oh(kn^3)$ arithmetic operations, where~$n$ is the maximum length of any input time series.
\end{theorem}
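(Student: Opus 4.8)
The plan is to prove that there is always an optimal mean of a very restricted shape --- an \emph{alternating} $0/1$ sequence of length $O(n)$ --- and then to try out all such candidates by brute force. An alternating binary sequence is completely described by its first entry $b\in\{0,1\}$ and its length $\ell\ge 1$, so there are only $O(n)$ candidates to check.

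\textbf{Step 1: reduction to alternating means.} For a binary sequence $z$, let $\tilde z$ be the sequence obtained by collapsing every maximal run of equal consecutive entries of $z$ into a single entry; thus $\tilde z$ is alternating. I would first show $\dtw(\tilde z, x)^2 \le \dtw(z,x)^2$ for every binary $x$. Take an optimal warping path $p$ for $(z,x)$ and project it onto $(\tilde z,x)$ by mapping each index of $z$ to the index of the run of $z$ that contains it. A step of $p$ that crosses a run boundary of $z$ projects to a legal ($z$- or diagonal) advance, and a step of $p$ that stays within a run of $z$ projects to a pure $x$-advance or to a non-move; deleting the non-moves yields a valid warping path for $(\tilde z, x)$ whose set of aligned pairs is the image of the pairs of $p$, so its cost is at most $C_p(z,x) = \dtw(z,x)^2$ (the value on each image pair $(\,\tilde z_{j'}, x_\ell)$ equals the value $(z_j,x_\ell)$ on any preimage). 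Applying this to every input series gives $F_w(\tilde z) \le F_w(z)$. Since a mean always exists \cite[Remark~2.13]{JS18}, there is an optimal mean that is alternating.

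\textbf{Step 2: bounding the length.} Fix an alternating $z$ of length $\ell$ and an input $x^{(i)}$ of length $n_i$. In any warping path between $z$ and $x^{(i)}$ and for each column $j$, the set of rows aligned with $x^{(i)}_j$ is a contiguous block of $z$-indices, say of size $L_j$; because $z$ alternates, at least $\lfloor L_j/2\rfloor \ge (L_j-1)/2$ of those entries differ from $x^{(i)}_j$, so the cost of the path is at least $\sum_{j=1}^{n_i}(L_j-1)/2 = (L-n_i)/2 \ge (\ell-n_i)/2$, where $L = \sum_j L_j$ is the path length, which is at least $\ell$. Hence $\dtw(z,x^{(i)})^2 \ge (\ell-n)/2$ whenever $\ell > n$. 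On the other hand $\dtw\big((0),x^{(i)}\big)^2$ is just the number of ones in $x^{(i)}$, which is at most $n_i \le n$. Assuming $\sum_i w_i > 0$ (otherwise every series is a mean and we output $(0)$), we get $F_w\big((0)\big) \le n\sum_i w_i$ while $F_w(z) \ge \tfrac{\ell-n}{2}\sum_i w_i$, so any alternating $z$ with $\ell > 3n$ is strictly worse than $(0)$. Together with Step~1, there is an optimal mean that is alternating of length at most $3n$.

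\textbf{Step 3: the algorithm, and the main obstacle.} Enumerate the $2\cdot 3n = O(n)$ candidate sequences (each choice of $b\in\{0,1\}$ and of $\ell\le 3n$); for each candidate $z$ and each input $x^{(i)}$ compute $\dtw(z,x^{(i)})^2$ with the standard quadratic dynamic program in $O(\ell n) = O(n^2)$ arithmetic operations; sum up the weighted values to obtain $F_w(z)$; and return a candidate minimizing $F_w$. The total cost is $O(n)\cdot k\cdot O(n^2) = O(kn^3)$ arithmetic operations, as claimed (merging all lengths $\ell$ for a fixed $b$ into a single dynamic program would save a factor $n$, but this is not needed). The crux of the argument is Step~2: one has to realize that an optimal mean, once it is forced to be alternating, cannot be much longer than the input series, and to convert the alternation property into the quantitative estimate $\dtw(z,x^{(i)})^2 \ge (\ell-n)/2$ via the column-by-column counting argument. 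The projection argument of Step~1 is the other key ingredient but is comparatively routine once set up carefully.
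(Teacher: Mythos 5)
Your proof is correct, and it follows the same overall skeleton as the paper --- reduce to condensed (alternating) means, bound the mean length by $O(n)$, then enumerate the $O(n)$ candidates and evaluate each with $O(kn^2)$ arithmetic operations --- but the two technical steps are carried out differently. Your Step~1 is the paper's Proposition~\ref{prop:condensed}; the paper removes one duplicated element at a time and patches the warping path locally, whereas your run-projection argument collapses all runs at once, which is arguably cleaner (both versions apply to arbitrary, not just binary, series). The real divergence is in the length bound. The paper derives an exact closed-form expression for $\dtw(x,y)^2$ between two condensed binary series (Lemma~\ref{lem:condensed_binary}), extends it to non-condensed inputs (Lemma~\ref{lem:equal_dist}), and uses monotonicity of that formula in the length to truncate any longer mean down to length $n+1$. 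You instead prove the soft inequality $\dtw(z,x^{(i)})^2 \ge (\ell-n)/2$ for an alternating $z$ of length $\ell$ by counting, within the contiguous block of $z$-indices aligned to each $x^{(i)}_j$, how many entries must mismatch, and then compare against the trivial candidate $(0)$ to rule out $\ell > 3n$. This yields a weaker constant ($3n$ instead of $n+1$) but avoids the case analysis on first elements and parities entirely, and it is enough for the stated $O(kn^3)$ bound; the paper's exact formula buys the tight length bound and an explicit description of the dtw-distance in the binary condensed case, at the cost of more bookkeeping.
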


  To show polynomial-time solvability of \BDTW, we first prove some preliminary results about
the dtw-distance of binary time series and properties of a binary mean.
  We start with the following general definition.

  \begin{definition}
    A time series~$x=(x_1,\ldots, x_n)$ is \emph{condensed} if no two consecutive elements are equal, that is, $x_i\neq x_{i+1}$ holds for all~$i\in[n-1]$.
    We denote the \emph{condensation} of a time series~$x$ by~$\tilde{x}$ and define it
    to be the time series obtained by repeatedly removing one of two equal consecutive elements in~$x$ until the remaining series is condensed.
  \end{definition}

  The following proposition implies that a mean can always be assumed to be condensed.
  Note that this holds for arbitrary time series (not only for the binary case) but not for the
  constrained mean.

  \begin{proposition}
    \label{prop:condensed}
    Let~$x$ be a time series and let~$\tilde{x}$ denote its condensation.
    Then, for every time series~$y$, it holds that~$\dtw(\tilde{x},y) \le \dtw(x,y)$.
  \end{proposition}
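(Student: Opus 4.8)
The plan is to show that condensing $x$ — i.e., deleting one of two equal consecutive entries — never increases the dtw-distance to any fixed $y$. Since the general condensation $\tilde{x}$ is obtained by a finite sequence of such single deletions, it suffices to prove the claim for a single step: if $x$ has $x_i = x_{i+1}$ for some $i$, and $x'$ denotes the series obtained from $x$ by deleting the entry $x_{i+1}$, then $\dtw(x',y) \le \dtw(x,y)$; the full statement then follows by transitivity along the chain of deletions.

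For the single-deletion step, I would argue by exhibiting, from an optimal warping path for $(x,y)$, a warping path for $(x',y)$ of no larger cost. Let $p = (p_1,\ldots,p_L) \in \mathcal{P}_{n,|y|}$ be an optimal warping path for $x$ and $y$, so $C_p(x,y) = \dtw(x,y)^2$. The pairs of $p$ whose first coordinate equals $i$ and those whose first coordinate equals $i+1$ together form a contiguous block in $p$ (by the monotonicity and step constraints in the warping-path definition), and because a warping path moves the first coordinate from $i$ to $i+1$ by exactly one unit step, the set of $y$-indices aligned with position $i$ and the set aligned with position $i+1$ are consecutive and overlap in at most one index. To build a path $p'$ for $(x',y)$ I merge these two blocks: every pair $(i,j)$ or $(i+1,j)$ of $p$ becomes the pair $(i,j)$ in $p'$ (recall $x_i = x_{i+1}$, so after deleting $x_{i+1}$ the surviving entry at position $i$ still has that common value); every pair $(i',j)$ with $i' > i+1$ has its first coordinate decremented by one; pairs with $i' < i$ are unchanged. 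One checks that the resulting sequence starts at $(1,1)$, ends at $(n-1,|y|)$, and has all steps in $\{(1,0),(0,1),(1,1)\}$ — the only nontrivial point being the junction where the two blocks are glued, where a former $(1,0)$ or $(1,1)$ step into position $i+1$ collapses to a $(0,0)$ "step" that is simply dropped (a repeated pair is removed), leaving a valid path. The cost satisfies $C_{p'}(x',y) \le C_p(x,y)$ because the squared terms $(x_i - y_j)^2$ contributed by pairs at position $i$ or $i+1$ in $p$ are exactly the terms $(x_i - y_j)^2$ contributed by the merged pairs in $p'$ (using $x_{i+1} = x_i$), and any duplicated pair that is removed only deletes a nonnegative term. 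Hence $\dtw(x',y)^2 \le C_{p'}(x',y) \le C_p(x,y) = \dtw(x,y)^2$.

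The main obstacle I anticipate is the bookkeeping at the glue junction: making precise that the $i$- and $(i+1)$-blocks of $p$ are contiguous with at most one shared $y$-index, and verifying that after merging and reindexing one still has a legal warping path (in particular that no forbidden step like $(2,0)$ or $(-1,0)$ is created, and that collapsed repeated pairs can be excised without breaking the step condition). This is a purely combinatorial case analysis on the local shape of $p$ around coordinates $i,i+1$ — a short figure with the three possible local patterns would make it transparent. Everything else (the reduction to a single deletion, and the cost inequality) is routine.
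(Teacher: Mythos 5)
Your proposal is correct and follows essentially the same route as the paper: reduce to a single deletion of one of two equal consecutive entries, take an optimal warping path for $(x,y)$, merge the pairs at rows $i$ and $i{+}1$ into row $i$ (decrementing higher rows), and observe that the cost is preserved exactly when the step into row $i{+}1$ is diagonal and drops a nonnegative term when it is vertical --- these are precisely the paper's two cases. One small imprecision: a $(1,1)$ step into position $i{+}1$ collapses to a valid $(0,1)$ step (no pair is dropped there), not to a $(0,0)$ step; only the $(1,0)$ step produces a repeated pair to excise, but this does not affect the argument.
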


  \begin{proof}
    Let~$y$ have length~$m$ and assume that~$x=(x_1,\ldots,x_n)$ is not condensed. Then, $x_i=x_{i+1}$ holds for some $i\in[n-1]$.
    Let~$p=((i_1,j_1),\ldots,(i_L,j_L))$ be an optimal warping path for~$x$ and~$y$.
    Now, consider the time series~$x'=(x_1,\ldots,x_i,x_{i+2},\ldots,x_n)$ that is obtained from~$x$ by deleting the element~$x_{i+1}$.
    We construct a warping path~$p'$ for~$x'$ and~$y$ such that~$C_{p'}(x',y) = C_p(x,y)$.
    To this end, let $p_a=(i_a,j_a)$, $2\le a \le L$, be the first index pair in~$p$ where~$i_a=i+1$ (hence, $i_{a-1}=i$). Now, we consider two cases.
    
    If $j_a = j_{a-1}+1$, then we define the order-$((n-1)\times m)$ warping path $$p':=((i_1,j_1),\ldots,(i_{a-1},j_{a-1}),(i_a-1,j_a),\ldots,(i_L-1,j_L))$$
    of length~$L$.
    Thus, each element of~$y$ that was aligned to~$x_{i+1}$ in~$p$ is now aligned to~$x_i$ instead.
    To check that~$p'$ is a valid warping path, note first that~$(i_1,j_1)=(1,1)$ and
    $(i_L-1,j_L)=(n-1,m)$ holds since~$p$ is a warping path.
    Also, it holds
    \begin{align*}
      \forall 1\le \ell \le a-2 \colon  &(i_{\ell+1},j_{\ell+1})-(i_\ell,j_\ell)\in\{(1,0),(0,1),(1,1)\},\\
      &(i_a-1,j_a) - (i_{a-1},j_{a-1}) = (0,1),\\
      \forall a \le \ell \le L-1 \colon &(i_{\ell+1}-1,j_{\ell+1})-(i_\ell-1,j_\ell)\in\{(1,0),(0,1),(1,1)\}.
    \end{align*}
    The cost of~$p'$ is
    \begin{align*}
      C_{p'}(x',y) &= \sum_{\ell=1}^{a-1} (x'_{i_\ell} - y_{j_\ell})^2 + \sum_{\ell=a}^L (x'_{(i_\ell-1)} - y_{j_\ell})^2\\
      &= \sum_{\ell=1}^{a-1} (x_{i_\ell} - y_{j_\ell})^2 + \sum_{\ell=a}^L (x_{i_\ell} - y_{j_\ell})^2 = C_p(x,y).
    \end{align*}
    
    Otherwise, if $j_a=j_{a-1}$, then we define the warping path
    $$p':=((i_1,j_1),\ldots, (i_{a-1},j_{a-1}),(i_{a+1}-1,j_{a+1}),\ldots,(i_L-1,j_L))$$
    of length~$L-1$.
    Again,~$(i_1,j_1)=(1,1)$ and $(i_L-1,j_L)=(n-1,m)$ holds since~$p$ is a warping path.
    Clearly, also
    \begin{align*}
      \forall 1\le \ell \le a-2 \colon  &(i_{\ell+1},j_{\ell+1})-(i_\ell,j_\ell)\in\{(1,0),(0,1),(1,1)\} \text{ and}\\
      \forall a+1 \le \ell \le L-1 \colon  &(i_{\ell+1}-1,j_{\ell+1})-(i_\ell-1,j_\ell)\in\{(1,0),(0,1),(1,1)\}
    \end{align*}
    holds. Finally, we have
    $(i_{a+1}-1,j_{a+1}) - (i_{a-1},j_{a-1}) \in \{(1,0),(0,1),(1,1)\}$
    since $i_{a+1}-1 - i_{a-1}= i_{a+1} - i_{a}$ holds and also $j_{a+1}-j_{a-1} = j_{a+1} - j_a$ holds.
    Thus, $p'$ is a valid warping path and its cost is
    \begin{align*}
      C_{p'}(x',y) &= \sum_{\ell=1}^{a-1} (x'_{i_\ell} - y_{j_\ell})^2 + \sum_{\ell=a+1}^L (x'_{(i_\ell-1)} - y_{j_\ell})^2\\
      &= \sum_{\ell=1}^{a-1} (x_{i_\ell} - y_{j_\ell})^2 + \sum_{\ell=a+1}^L (x_{i_\ell} - y_{j_\ell})^2 \\
      &= C_p(x,y) - (x_{i_a}-y_{j_a})^2.
    \end{align*}
    
    Since in both cases above, the cost does not increase, we obtain
    \begin{align*}
    \dtw(x',y) \leq C_{p'}(x',y) \le C_p(x,y) = \dtw(x,y).
    \end{align*} 
    Repeating this argument until~$x'$ is condensed finishes the proof.
  \end{proof}

  \Cref{prop:condensed} implies that we can assume a mean to be condensed.
  Next, we want to prove an upper bound on the length of a binary mean.
  To this end, we analyze the dtw-distances of binary time series.
  Note that a binary condensed time series is fully determined by its first element and its length.
  We use this property to give a closed expression for the dtw-distance of two condensed binary time series.

  \begin{lemma}
    \label{lem:condensed_binary}
    Let~$x=(x_1,\ldots,x_n),\; y=(y_1,\ldots,y_m)\in\{0,1\}^*$ be two condensed binary time series
    with~$n \ge m$.
    Then, it holds \[\dtw(x,y)^2 = \begin{cases}\lceil(n-m)/2\rceil,& x_1 = y_1\\
      2,& x_1\neq y_1 \wedge n=m\\
      1+\lfloor(n-m)/2\rfloor,& x_1\neq y_1 \wedge n > m
    \end{cases}.\]
  \end{lemma}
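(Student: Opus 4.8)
The plan is to turn the statement into a parity-counting problem on warping paths. A condensed binary time series is determined by its first entry and its length: $x_i = x_1$ exactly when $i$ is odd, and likewise $y_j=y_1$ exactly when $j$ is odd. Hence whether $x_i \ne y_j$ depends only on the parity of $i+j$ together with whether $x_1=y_1$: if $x_1=y_1$ then $x_i\ne y_j\iff i+j$ is odd, and if $x_1\ne y_1$ then $x_i\ne y_j\iff i+j$ is even. Since for binary series each term $(x_i-y_j)^2$ is either $0$ or $1$, the cost $C_p(x,y)$ of a warping path $p\in\mathcal{P}_{n,m}$ equals the number of cells $(i,j)$ visited by $p$ lying in the corresponding ``bad'' parity class (odd if $x_1=y_1$, even if $x_1\ne y_1$). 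It thus suffices to determine the minimum number of bad cells over all warping paths and to exhibit a path attaining it.

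For the lower bound I would decompose a warping path into its $a$ steps of type $(1,0)$, $b$ steps of type $(0,1)$, and $c$ steps of type $(1,1)$; from $a+c=n-1$ and $b+c=m-1$ one gets $a-b=n-m$, so $a\ge b\ge 0$ and $a\ge n-m$. Along $p$ the parity of $i+j$ flips exactly at the $(1,0)$- and $(0,1)$-steps and is preserved at $(1,1)$-steps, so the visited cells split into $a+b+1$ maximal runs of constant parity, consecutive runs having opposite parity and the first run (the one containing $(1,1)$) being even. Every run is nonempty, so $C_p(x,y)$ is at least the number of bad runs. If $x_1=y_1$, the bad (odd) runs number $\lfloor(a+b+1)/2\rfloor\ge\lfloor(n-m+1)/2\rfloor=\lceil(n-m)/2\rceil$. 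If $x_1\ne y_1$ and $n>m$, the bad (even) runs number $\lceil(a+b+1)/2\rceil\ge\lceil(n-m+1)/2\rceil=1+\lfloor(n-m)/2\rfloor$. If $x_1\ne y_1$ and $n=m$, I would instead argue directly: the endpoints $(1,1)$ and $(n,m)=(n,n)$ both have even coordinate sum, hence both are bad, and for $m\ge 2$ they are distinct cells of $p$, so $C_p(x,y)\ge 2$ (the degenerate case $n=m=1$ simply gives $\dtw(x,y)^2=1$).

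For the matching upper bound I would exhibit one explicit optimal path per case and verify its cost via the parity characterization. If $x_1=y_1$: take $p=((1,1),(2,2),\dots,(m,m),(m+1,m),\dots,(n,m))$; the first $m$ cells have even sum, and among the trailing cells $(i,m)$ with $m<i\le n$ exactly the $\lceil(n-m)/2\rceil$ with $i+m$ odd are bad. If $x_1\ne y_1$ and $n=m$: take $p=((1,1),(1,2),(2,3),\dots,(n-1,n),(n,n))$; only the two endpoints have even sum, so $C_p(x,y)=2$. If $x_1\ne y_1$ and $n>m$: take $p=((1,1),(2,1),(3,2),\dots,(m+1,m),(m+2,m),\dots,(n,m))$; the cell $(1,1)$ contributes $1$, every cell $(k+1,k)$ with $1\le k\le m$ has odd sum and contributes $0$, and among the trailing cells $(i,m)$ with $m+2\le i\le n$ exactly $\lfloor(n-m)/2\rfloor$ (those with $i\equiv m$) are bad, for a total of $1+\lfloor(n-m)/2\rfloor$ (the sub-case $m=1$, where the diagonal stretch is empty, works verbatim).

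I expect the run-counting step inside the lower bound to be the main obstacle: one must argue carefully that a warping path splits into exactly $a+b+1$ alternating parity runs and then push the bound through the floor/ceiling identities, and one must notice that for $x_1\ne y_1,\ n=m$ the bare run bound only yields $\ge 1$, so a separate (easy) endpoint argument is needed there. Everything else — the characterization of matched cells and the three path constructions — is routine verification.
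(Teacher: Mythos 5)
Your proof is correct, and while the upper-bound half coincides with the paper's (the same explicit staircase paths, up to transposing the $n=m$ case), your lower bound takes a genuinely different route. The paper argues locally: any warping path contains at least $n-m$ steps of type $(1,0)$, each such step pairs two consecutive (hence distinct) elements of $x$ against the same element of $y$ at total cost exactly $1$, and ``at least every second such step'' forces a unit of cost, yielding $\lceil (n-m)/2\rceil$; the $x_1\neq y_1$ cases are then settled by a small reduction (either $p_2\neq(2,1)$ costs an extra unit, or $p_2=(2,1)$ and one recurses on the already-proved equal-first-element case). You instead set up a global parity invariant -- the cost of any path is the number of visited cells $(i,j)$ with $i+j$ in a fixed residue class -- and count the $a+b+1$ alternating parity runs, with $a+b\ge a-b=n-m$, so that the bad runs alone force the bound; this handles the $x_1=y_1$ and the $x_1\neq y_1,\,n>m$ cases uniformly, isolates $x_1\neq y_1,\,n=m$ as the one case needing a separate (trivial) endpoint argument, and in passing exposes that the stated formula fails in the degenerate case $n=m=1$ with $x_1\neq y_1$ (where the distance squared is $1$, not $2$) -- an edge case the paper's proof silently skips, though it is harmless for the lemma's later applications, where the two lengths always differ. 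The trade-off: the paper's argument is shorter but its ``every second index'' step and the recursion in the unequal-first-element case are somewhat informal, whereas your run-counting is longer to set up but fully rigorous and symmetric across cases.
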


  \begin{proof}
    We prove the statement by first giving a warping path that has the claimed cost and second proving that every warping path has at least the claimed cost.
    
    ``$\le$'':
    We show that there exists a warping path~$p$ between~$x$ and~$y$ that has the claimed cost.
    The warping path~$p$ is defined as follows:
    
    If $x_1=y_1$, then we have~$x_i=y_i$ for all~$i\in[m]$ (since~$x$ and~$y$ are condensed and binary) and we set \[p := ((1,1),(2,2),\ldots,(m,m),(m+1,m),\ldots,(n,m)).\]
    This warping path has cost~$C_p(x,y) = \sum_{i=m+1}^n|x_i-y_m| = \lceil(n-m)/2\rceil$.

    If~$x_1\neq y_1$, then we have~$x_i = y_{i-1}$ for all~$2\le i \le m$.
    Thus, for~$n=m$, the warping path \[p:=((1,1),(2,1),(3,2),\ldots,(n,m-1),(n,m))\]
    has cost~$C_p(x,y) = |x_1-y_1| + |x_n-y_m| = 2$.
    Finally, for~$n > m$, the warping path $p:=((1,1),(2,1),(3,2),\ldots,(m+1,m),(m+2,m),\ldots,(n,m))$
    yields cost
    \begin{align*}
      C_p(x,y) &= |x_1 - y_1| + \sum_{i=2}^{m+1}|x_i-y_{i-1}| + \sum_{i=m+2}^n|x_i-y_m|\\
	      &= 1 + \sum_{i=m+2}^n|x_i-y_m|
	      = 1 + \lceil(n-m-1)/2\rceil
	      = 1 + \lfloor(n-m)/2\rfloor.
    \end{align*}
    
    ``$\ge$'': We show that every warping path has at least the cost claimed above.
    Consider an optimal warping path~$p=(p_1,\ldots,p_L)$ for~$x$ and~$y$ and note that there are at least~$n-m$ different indices $\ell\in[L-1]$ such that~$p_{\ell+1}-p_\ell=(1,0)$ since~$n \ge m$. For every such pairs $p_{\ell+1}=(i_{\ell+1},j_{\ell+1})$, $p_\ell = (i_\ell,j_\ell)$ with~$j_{\ell+1}=j_\ell$, we have
    \[|x_{i_{\ell+1}} - y_{j_{\ell+1}}| + |x_{i_\ell} - y_{j_\ell}| = |x_{i_{\ell+1}} - y_{j_\ell}| + |x_{i_\ell} - y_{j_\ell}| = 1,\]
    since~$x_{j_{\ell+1}} \neq x_{j_\ell}$ (recall that $x$ is condensed and binary). Hence, at least for every second such index~$\ell$ (starting from the first one) a cost of~1 is induced.
    Hence, $\dtw(x,y)^2\ge \lceil(n-m)/2\rceil$.
    If~$x_1 = y_1$, then this lower bound matches the claimed cost.

    If~$x_1\neq y_1$ and~$n=m$, then also~$x_n\neq y_m$ and hence the cost is at least~2.

    If~$x_1\neq y_1$ and~$n>m$, then we can assume that~$p_2=(2,1)$.
    To see this, note that for~$p_2\neq (2,1)$ the cost is at least~$1+\lceil(n-m)/2\rceil$ by the above argument.
    If~$p_2=(2,1)$, then the subpath~$(p_2,p_3,\ldots,p_L)$ of~$p$ is an optimal warping path between~$(x_2,\ldots,x_n)$ and~$y$, where~$x_2=y_1$ and~$n-1\ge m$. As we have already shown above,
    this path has cost~$\lceil(n-1-m)/2\rceil$. Hence, in this case~$p$ has cost
    $1+\lceil(n-1-m)/2\rceil = 1 + \lfloor(n-m)/2\rfloor$.
    Thus, we can assume that~$p_2=(2,1)$ in which case the cost of~$p$ matches the claimed
    cost of the lemma. This finishes the proof.
  \end{proof}

  Note that according to~\Cref{lem:condensed_binary}, for a fixed condensed binary time series~$y$ of length~$m$, the value~$\dtw(x,y)^2$ is monotonically increasing in the length of~$x$ for all condensed binary time series $x$ of length~$n \ge m+1$.
  We use this property later in the proof of~\Cref{lem:bounded_length} where we derive an upper bound
  on the length of a binary mean.
  In order to prove \Cref{lem:bounded_length}, we also need the following lemma concerning the dtw-distances between condensed and non-condensed time series.

  \begin{lemma}
    \label{lem:equal_dist}
    Let~$x=(x_1,\ldots,x_n)$ be a condensed binary time series
    and let~$y=(y_1,\ldots,y_m)\in\{0,1\}^*$ with~$n \ge m$.
    Then, for the condensation~$\tilde{y}$ of~$y$ it holds $\dtw(x,y)^2=\dtw(x,\tilde{y})^2$.
  \end{lemma}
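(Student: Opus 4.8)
The plan is to prove the two inequalities $\dtw(x,\tilde{y})^2\le\dtw(x,y)^2$ and $\dtw(x,y)^2\le\dtw(x,\tilde{y})^2$ separately. The first is immediate from \Cref{prop:condensed}: since $\dtw$ is symmetric in its two arguments, \Cref{prop:condensed} (with the roles of its two time series exchanged) yields $\dtw(x,\tilde{y})=\dtw(\tilde{y},x)\le\dtw(y,x)=\dtw(x,y)$. So the real content is the reverse inequality, which says that the repeated elements of $y$ cannot reduce the distance to~$x$. This is false without the hypothesis $n\ge m$ (take $x=(0)$ and $y=(0,1,1)$, where $\dtw(x,y)^2=2$ but $\dtw(x,\tilde{y})^2=1$), so that hypothesis must be used.

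For the reverse inequality I would first dispose of the trivial case in which $y$ is already condensed. Otherwise, let $m'$ be the length of $\tilde{y}$, so that $n\ge m>m'$, and decompose $y$ into its $m'$ maximal runs of equal consecutive values, of lengths $r_1,\dots,r_{m'}\ge 1$ with $r_1+\dots+r_{m'}=m$; the $j$-th run has constant value $\tilde{y}_j$. Because $x$ and $\tilde{y}$ are both condensed and binary, each is completely determined by its first value together with its length; hence on their common range of indices they either agree everywhere or disagree everywhere. Concretely, if $x_1=\tilde{y}_1$ (equivalently $x_1=y_1$) then $x_i=\tilde{y}_i$ for all $1\le i\le m'$, and if $x_1\neq y_1$ then $x_{i+1}=\tilde{y}_i$ for all $1\le i\le m'$ (this uses $n\ge m'+1$).

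The key step is to exhibit, in each of the two cases, an explicit warping path $p\in\mathcal{P}_{n,m}$ whose cost equals the closed-form value of $\dtw(x,\tilde{y})^2$ provided by \Cref{lem:condensed_binary} (which applies since $x$ and $\tilde{y}$ are condensed binary with $n\ge m'$). If $x_1=y_1$: align $x_i$ with the entire $i$-th run of $y$ for $i=1,\dots,m'$ (cost $0$, since that run has value $\tilde{y}_i=x_i$), and then align each of $x_{m'+1},\dots,x_n$ with $y_m$; since $x$ is condensed binary and $x_{m'}=\tilde{y}_{m'}$, this tail contributes exactly $\lceil(n-m')/2\rceil$, which is $\dtw(x,\tilde{y})^2$. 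If $x_1\neq y_1$: align $x_1$ with $y_1$ (cost $1$), then align $x_{i+1}$ with the entire $i$-th run for $i=1,\dots,m'$ (cost $0$), and then align each of $x_{m'+2},\dots,x_n$ with $y_m$; the tail now contributes $\lceil(n-m'-1)/2\rceil=\lfloor(n-m')/2\rfloor$, for a total of $1+\lfloor(n-m')/2\rfloor=\dtw(x,\tilde{y})^2$ (here \Cref{lem:condensed_binary} is used in the case $x_1\neq\tilde{y}_1$, $n>m'$). In both cases one checks that the listed index pairs indeed form a warping path: the steps are $(0,1)$ within a run, $(1,1)$ between consecutive runs, and $(1,0)$ along the tail, the sequence starts at $(1,1)$ and ends at $(n,m)$, and all pairs lie in $[n]\times[m]$.

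I do not expect a genuine obstacle: once the warping path is written down, the cost is a one-line parity count. The points that need care are the case split on whether $x_1=y_1$, keeping the off-by-one tail indices straight, and checking that the tail indices exist and that their parity count matches \Cref{lem:condensed_binary}; all of these follow from the chain $n\ge m>m'$. (An alternative proof would be by induction, deleting one repeated element of $y$ at a time and combining \Cref{prop:condensed} with a one-step version of the above construction, but the direct argument via \Cref{lem:condensed_binary} appears cleaner.)
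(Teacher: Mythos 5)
Your proof is correct and follows essentially the same route as the paper's: the paper likewise decomposes $y$ into its maximal blocks, exhibits in each of the two cases ($x_1=y_1$ versus $x_1\neq y_1$) exactly the warping path you describe (one $x$-element per block, then a $(1,0)$-tail onto $y_m$), and matches its cost against the closed form from \Cref{lem:condensed_binary}. The only difference is that you make explicit the reverse inequality via \Cref{prop:condensed} and the symmetry of $\dtw$, which the paper leaves implicit.
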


  \begin{proof}
    Assume that~$y$ is not condensed.
    Then, $y$ consists of~$\ell\in[m]$ \emph{blocks}, where a block is a maximal subsequence of consecutive 0's or consecutive 1's in~$y$. Let~$m_1,\ldots,m_\ell$ denote the lengths of these blocks where~$m_1+\ldots +m_\ell=m$. Note also that~$\tilde{y}$ has length~$\ell$ with~$\ell<m\le n$.
    We define a warping path~$p$ between~$x$ and~$y$ such that
    $C_p(x,y)=\dtw(x,\tilde{y})^2$.
    Note that, by~\Cref{lem:condensed_binary}, we have
    \[ \dtw(x,\tilde{y})^2 = \begin{cases}\lceil(n-\ell)/2\rceil,& x_1=\tilde{y}_1\\1+\lfloor(n-\ell)/2\rfloor,& x_1\neq \tilde{y}_1\end{cases}.\]
    
    If~$x_1 = y_1$, then we set $p:=((1,1),\ldots,(1,m_1),(2,m_1+1),\ldots,(2,m_1+m_2),\ldots,(\ell,m),(\ell+1,m),\ldots,(n,m))$
    and obtain cost~$C_p(x,y) = \sum_{i=\ell+1}^n|x_i-y_m|=\lceil(n-\ell)/2\rceil$.
    
    If~$x_1 \neq y_1$, then we set $p:=((1,1),(2,1),\ldots,(2,m_1),(3,m_1+1),\ldots,(3,m_1+m_2),\ldots,(\ell+1,m),(\ell+2,m),\ldots,(n,m))$
    and obtain cost~$$C_p(x,y) = 1+\sum_{i=\ell+2}^n|x_i-y_m|=1+\lfloor(n-\ell)/2\rfloor.$$
  \end{proof}

  We now have all ingredients to show that there always exists a binary mean of length at most one larger than the maximum length of any input time series.

  \begin{lemma}
    \label{lem:bounded_length}
    For binary input time series~$x^{(1)},\ldots,x^{(k)}\in\{0,1\}^*$ of maximum length~$n$, there exists a binary mean~$z\in\{0,1\}^*$ of length at most~$n+1$.
  \end{lemma}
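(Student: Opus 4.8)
The plan is to show that the infimum of $F_w$ over all binary time series is already attained by some \emph{condensed} binary series of length at most $n+1$; since there are only finitely many condensed binary series of each length, this immediately produces a binary mean of the claimed length. First, by \Cref{prop:condensed}, replacing any binary time series $z$ by its condensation $\tilde z$ never increases $F_w$, so it suffices to reason about condensed binary series. Fix a condensed binary $z$ with $\abs{z}=L\ge n+2$ and let $z'$ be obtained from $z$ by deleting its last element; then $z'$ is again condensed and binary, $z'_1=z_1$, and $\abs{z'}=L-1\ge n+1$.

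The heart of the argument is to check $F_w(z')\le F_w(z)$. The point is that $\abs{z}=L$ and $\abs{z'}=L-1$ are both strictly larger than $n$, hence strictly larger than $\abs{x^{(i)}}$ for every $i$. Thus \Cref{lem:equal_dist} applies to both and gives $\dtw(z,x^{(i)})^2=\dtw(z,\tilde x^{(i)})^2$ and $\dtw(z',x^{(i)})^2=\dtw(z',\tilde x^{(i)})^2$, where $\tilde x^{(i)}$ is the condensation of $x^{(i)}$, of some length $\ell_i\le\abs{x^{(i)}}\le n$. Now $z$ and $z'$ are condensed binary series strictly longer than $\tilde x^{(i)}$ and sharing the same first element, so \Cref{lem:condensed_binary} expresses $\dtw(z,\tilde x^{(i)})^2$ and $\dtw(z',\tilde x^{(i)})^2$ through the \emph{same} case of its formula, evaluated at lengths $L$ and $L-1$ respectively; that expression is nondecreasing in the length of the longer series (the remark following \Cref{lem:condensed_binary}), so $\dtw(z',\tilde x^{(i)})^2\le\dtw(z,\tilde x^{(i)})^2$ for every $i$. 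Multiplying by $w_i\ge 0$ and summing yields $F_w(z')\le F_w(z)$.

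Iterating this truncation step, from any condensed binary $z$ I reach a condensed binary series of length at most $n+1$ with no larger Fr\'echet value; combined with the condensation step, every binary time series is dominated in $F_w$ by some condensed binary series of length at most $n+1$. Since the set $S$ of condensed binary series of length at most $n+1$ is finite and nonempty, $F_w$ attains its minimum over $S$ at some $z^\ast$, and by the above $F_w(z^\ast)\le F_w(z)$ for every binary $z$; hence $z^\ast$ is a binary mean of length at most $n+1$.

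I expect the main obstacle to be the bookkeeping of length constraints rather than any deep idea: \Cref{lem:equal_dist} and \Cref{lem:condensed_binary} both require the condensed series under consideration to be at least as long as the series it is compared with, and to land in the strictly-longer regime (so that the monotonicity in length is available and the relevant case of the formula is unambiguous) one needs $L-1\ge\ell_i+1$. All of this follows from $L\ge n+2\ge\ell_i+2$, but it must be verified for every truncation in the iteration, and one must note that deleting the last element preserves the first element and hence the applicable case of the \Cref{lem:condensed_binary} formula.
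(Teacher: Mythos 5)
Your proof is correct and follows essentially the same route as the paper: reduce to a condensed mean via \Cref{prop:condensed}, then combine \Cref{lem:equal_dist} with the length-monotonicity of the formula in \Cref{lem:condensed_binary} to show that truncation does not increase any $\dtw(\cdot,x^{(i)})^2$. The only cosmetic differences are that the paper truncates to length $n+1$ in one step rather than element by element, and that it takes the existence of a mean as given instead of deriving it from the finiteness of the candidate set as you do.
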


  \begin{proof}
    Assume that~$z=(z_1,\ldots,z_m)\in\{0,1\}^*$ is a mean of length~$m > n+1$.
    By \Cref{prop:condensed}, we can assume that~$z$ is condensed, that is,~$z_i\neq z_{i+1}$ for all~$i\in[m-1]$.
    We claim that~$z':=(z_1,\ldots,z_{n+1})$ is also a mean.
    We prove this claim by showing that~$\dtw(z',x^{(i)})^2\le\dtw(z,x^{(i)})^2$ holds for all~$i\in[k]$.
    By \Cref{lem:equal_dist,lem:condensed_binary}, we have $\dtw(z',x^{(i)})^2=\dtw(z',\tilde{x}^{(i)})^2 \le \dtw(z,\tilde{x}^{(i)})^2=\dtw(z,x^{(i)})^2$, where the inequality follows from \Cref{lem:condensed_binary} since~$z'$ is of length~$n+1<m$ and the dtw-distance is monotonically increasing.
  \end{proof}
  
  Having established that a binary mean can always be assumed to be condensed and of bounded length,
  we now show that it can be found in polynomial time.

  \begin{proof}[Proof of~Theorem~\ref{thm:bdtwpolytime}]
    By~\Cref{prop:condensed} and \Cref{lem:bounded_length}, we can assume the desired mean~$z\in\{0,1\}^*$ to be a condensed series of length at most~$n+1$. Thus, there are at most~$2n+2$ many possible candidates for~$z$.
    For each candidate~$z$, we can compute the value~$F_w(z)$ with~$O(kn^2)$ arithmetic operations and select the one with the smallest value. 
    Overall, this yields~$O(kn^3)$ arithmetic operations.
  \end{proof}

\section{An Exact Algorithm Solving \DTW}
\label{sec:exact}
We develop a nontrivial exponential-time algorithm solving \wDTW exactly.
The key is to observe a certain structure of a mean and the corresponding alignments
to the input time series.
To this end, we define redundant elements in a mean. Note that this concept was already used
by~\citet[Theorem~2.7]{JS18} in order to prove the existence of a mean of bounded length (though \cite[Definition~3.20]{JS18} is slightly different).

\begin{definition}\label{def:redundant}
  Let~$x^{(1)},\ldots,x^{(k)}$ and $z=(z_1,\ldots,z_m)$ be time series and let~$p^{(j)}$, $j\in[k]$, denote an optimal warping path between $x^{(j)}$ and~$z$. 
  We call an element~$z_i$ of~$z$ \emph{redundant} if in every time series~$x^{(j)}$, $j\in[k]$, there exists an element that is aligned with~$z_i$ and with another element of~$z$ by~$p^{(j)}$.
\end{definition}

The next lemma states that there always exists a mean without redundant elements (similarly to \cite[Theorem~2.7]{JS18}).

\begin{lemma}
  \label{lem:no_redundant}
  There exist a mean~$z$ for time series $x^{(1)},\ldots,x^{(k)}$ and optimal warping paths~$p^{(j)}$ between~$z$ and~$x^{(j)}$ for each~$j\in[k]$ such that~$z$ contains no redundant element.
\end{lemma}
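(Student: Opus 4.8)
The plan is to start from an arbitrary mean~$z$ together with a choice of optimal warping paths~$p^{(1)},\ldots,p^{(k)}$, and to show that if~$z$ has a redundant element then we can produce another mean with strictly fewer elements, or at least with a strictly smaller total number of aligned-element multiplicities; iterating then terminates at a mean with no redundant element. Concretely, suppose~$z_i$ is redundant. By \Cref{def:redundant}, for every~$j\in[k]$ some element of~$x^{(j)}$ is aligned by~$p^{(j)}$ both to~$z_i$ and to another element of~$z$ — necessarily to~$z_{i-1}$ or~$z_{i+1}$, since the index pairs along a warping path are non-decreasing in both coordinates, so the elements of~$z$ aligned to a fixed~$x^{(j)}_\ell$ form a contiguous block. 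I would form~$z'$ by deleting~$z_i$ from~$z$, and for each~$j$ modify~$p^{(j)}$ to~$p'^{(j)}$ by removing the steps that involve the~$z_i$-coordinate: every~$x^{(j)}$-element formerly aligned to~$z_i$ gets reassigned to its neighbouring~$z_{i-1}$ or~$z_{i+1}$ (whichever it was already co-aligned with), and the remaining index pairs are shifted down by one in the~$z$-coordinate above position~$i$. One checks, exactly as in the proof of \Cref{prop:condensed}, that~$p'^{(j)}$ is a valid warping path of order~$(m-1)\times n_j$ and that~$C_{p'^{(j)}}(z',x^{(j)}) \le C_{p^{(j)}}(z,x^{(j)})$: the cost contributed by the deleted pairs of the form~$(i,\cdot)$ is simply dropped, and no other pair's cost changes.

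Summing over~$j$ with the weights~$w_j$ gives~$F_w(z') \le F_w(z)$, and since~$z$ was a mean this forces~$F_w(z') = F_w(z)$, so~$z'$ is again a mean, now of length~$m-1$. The paths~$p'^{(j)}$ are then optimal warping paths between~$z'$ and~$x^{(j)}$ (their cost equals~$\dtw(z',x^{(j)})^2$, as~$F_w$ cannot have decreased and each summand is already a lower bound). Repeating the construction, the length of the current mean strictly decreases each time a redundant element is removed, so after finitely many steps we arrive at a mean~$z^\star$ with accompanying optimal warping paths that has no redundant element. This proves the lemma.

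The main obstacle I anticipate is the bookkeeping around the warping-path surgery — in particular arguing carefully that every~$x^{(j)}$-element aligned to a redundant~$z_i$ is also aligned to~$z_{i-1}$ or~$z_{i+1}$ (so reassignment is well defined and cost-free up to the dropped terms), and checking the step-difference conditions i)–iii) of the warping-path definition at the two splice points around position~$i$ for each~$j$ simultaneously. There is also a subtlety if~$z_i$ is the first or last element of~$z$: one should note that~$z_1$ and~$z_m$ are never redundant (the pairs~$(1,1)$ and~$(m,n_j)$ force~$z_1$ and~$z_m$ each to be the unique~$z$-element aligned to~$x^{(j)}_1$ and~$x^{(j)}_{n_j}$ respectively for at least one coordinate — in fact the first coordinate~$1$ cannot be shared), so the deletion always happens at an interior position and both neighbours~$z_{i-1},z_{i+1}$ exist. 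A clean way to avoid doing the splice argument from scratch is to invoke the already-proved machinery: deleting an interior element and rerouting its alignments is structurally the same local move as in \Cref{prop:condensed}, so I would phrase the cost computation by direct analogy to that proof rather than re-deriving it.
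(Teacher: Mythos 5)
Your plan has a genuine gap: it applies the ``delete $z_i$ and reroute'' move to every redundant element, but under \Cref{def:redundant} this move is neither always well defined nor always cost-nonincreasing. Redundancy only requires that for each~$j$ \emph{some} element of~$x^{(j)}$ aligned with~$z_i$ is also aligned with another element of~$z$. The pairs with first coordinate~$i$ in~$p^{(j)}$ form a contiguous run $(i,\ell_t),\ldots,(i,\ell_{t+\alpha})$ whose second coordinates increase by one, so only the first element of the run can additionally be aligned with~$z_{i-1}$ and only the last with~$z_{i+1}$; every element strictly inside the run, and any endpoint whose adjacent step is diagonal, is aligned with~$z_i$ \emph{only}. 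Deleting~$z_i$ orphans those elements: to keep a valid warping path you must add new pairs aligning them with~$z_{i-1}$ or~$z_{i+1}$, and the resulting cost change $(x^{(j)}_\ell-z_{i\pm1})^2-(x^{(j)}_\ell-z_i)^2$ can be positive, so $F_w(z')\le F_w(z)$ does not follow and~$z'$ need not be a mean. The parenthetical claim in your plan that every element aligned with a redundant~$z_i$ is co-aligned with a neighbour is exactly the point that fails. (A smaller slip: $z_1$ and~$z_m$ \emph{can} be redundant --- e.g.\ $p^{(j)}=((1,1),(2,1),\ldots)$ aligns $x^{(j)}_1$ with both~$z_1$ and~$z_2$ --- so the boundary case cannot simply be dismissed.)

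The paper's proof avoids this with a case distinction. If for some~$j$ the redundant~$z_i$ is aligned with an element of~$x^{(j)}$ that is aligned with no other element of~$z$, it does \emph{not} delete~$z_i$: it keeps~$z$ unchanged and merely removes from~$p^{(j)}$ the offending pairs $(i,\ell_t)$ (when $\ell_{t-1}=\ell_t$) and/or $(i,\ell_{t+\alpha})$ (when $\ell_{t+\alpha}=\ell_{t+\alpha+1}$); this stays a valid warping path because the pair with the exclusively aligned element survives, and it only drops cost terms. Only in the complementary case --- for every~$j$, each element aligned with~$z_i$ is also aligned with another element of~$z$, which is the stronger redundancy notion of \cite{JS18} --- is~$z_i$ actually deleted, and there your rerouting argument does go through. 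Accordingly the terminating measure is the number of redundant elements rather than the length of~$z$, since the first case does not shorten the mean. Your proof becomes correct once you insert this dichotomy; as written, the single deletion move does not suffice.
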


  \begin{proof}
    Let~$z$ be a mean of $x^{(1)},\ldots,x^{(k)}$ with optimal warping paths~$p^{(j)}$, $j\in[k]$, such that the element~$z_i$ is redundant (recall that a mean~$z$ always exists).
    We show that there also exists a mean~$z'$ and optimal warping paths~$p^{(j)'}$ such that no element in~$z'$ is redundant.

    Assume first that there exists a $j\in[k]$ such that the element $z_i$ is aligned by~$p^{(j)}$ with at least one element~$x_\ell^{(j)}$ in~$x^{(j)}$ that is not aligned with any other element in~$z$.
    Let~$L$ denote the length of~$p^{(j)}$.
    Then, $p^{(j)}$ is of the form $$p^{(j)}=(p_1,\ldots,(i-1,\ell_{t-1}),(i,\ell_t),\ldots,(i,\ell_{t+\alpha}),(i+1,\ell_{t+\alpha+1}),\ldots,p_L)$$
    for some~$\ell_t \le \ell \le \ell_{t+\alpha}$ and $\alpha \ge 1$. Since~$z_i$ is redundant, it follows that~$\ell_{t-1} = \ell_t$ or~$\ell_{t+\alpha}=\ell_{t+\alpha+1}$ holds.
    If~$\ell_{t-1}=\ell_t$, then we remove the pair~$(i,\ell_t)$ from~$p^{(j)}$.
    Also, if~$\ell_{t+\alpha}=\ell_{t+\alpha+1}$, then we remove the pair~$(i,\ell_{t+\alpha})$ from~$p^{(j)}$.
    Note that this yields a warping path~$p^{(j)'}$ between~$z$ and~$x^{(j)}$ since even if we removed both pairs~$(i,\ell_t)$ and~$(i,\ell_{t+\alpha})$, then we know by assumption that there still exists the pair~$(i,\ell)$ with $\ell_t < \ell < \ell_{t+\alpha}$ in~$p^{(j)'}$ since~$z_i$ is aligned with~$x_\ell^{(j)}$ which is not aligned with another element in~$z$.
    Since we only removed pairs from~$p^{(j)}$, it holds
    $C_{p^{(j)'}}(z,x^{(j)}) \le C_{p^{(j)}}(z,x^{(j)})$.
    Moreover, $z_i$ is not redundant anymore.

    Now, assume that for all~$j\in[k]$, $z_i$ is aligned only with elements in~$x^{(j)}$ which are also aligned with another element of~$z$ by~$p^{(j)}$ (that is, $z_i$ is redundant according to~\cite[Definition~3.20]{JS18}).
    Let~$z'$ denote the time series obtained by deleting the element~$z_i$ from~$z$.
    \citet[Proof of Theorem~2.7]{JS18} showed that in this case~$\dtw(z',x^{(j)}) \le \dtw(z,x^{(j)})$ holds for all~$j\in[k]$.

    Under both assumptions above, we reduced the number of redundant elements. Hence, we can repeat the above arguments until we obtain a mean~$z'$ without redundant elements.
  \end{proof}

  \Cref{lem:no_redundant} allows us to devise a dynamic program computing a mean without redundant elements.
  We compute a mean by testing all possibilities to align the last mean element to elements from the input time series while recursively adding an optimal solution for the remaining non-aligned elements in the input time series (see \Cref{fig:dynprog}).
  \begin{figure}
   \centering
   \begin{tikzpicture}

    \def\xw{width("1")+6pt}
    \def\xs{0.5}
    \def\yw{width("1")+6pt}
    \def\ys{1.6}
    \def\xsh{4}

    \tikzstyle{dnode}=[ultra thin,draw,minimum width=\xw,minimum height=\yw]
    \tikzstyle{dedge}=[thick]

    \def\ys{-0.5}
    \newcommand{\colarr}[5]{
      \node (x) at (-0.5*\xs,#4)[scale=3/4]{$x^{(#5)}$};
      \foreach \x in {1,2,...,#1}{
	      \ifnum\x <#2
		      \draw[fill=gray!100!white] (\x*\xs-0.5*\xs,#4-0.5*\xs) rectangle (\x*\xs+0.5*\xs,#4+0.5*\xs);
	      \else{}
		      \ifnum\x<#3\draw[fill=gray!40!white] (\x*\xs-0.5*\xs,#4-0.5*\xs) rectangle (\x*\xs+0.5*\xs,#4+0.5*\xs);
		      \ifnum\x=#2\node (x) at (\x*\xs,#4)[scale=3/4]{$\ell_{#5}$};\fi{}
		      \else{}
			      \ifnum\x=#3\node (x) at (\x*\xs-\xs,#4)[scale=3/4]{$i_{#5}$};\fi{}
			      \draw[] (\x*\xs-0.5*\xs,#4-0.5*\xs) rectangle (\x*\xs+0.5*\xs,#4+0.5*\xs);
		      \fi{}
	      \fi{}
      }
     }

    \colarr{20}{5}{9}{0*\ys}{1};
    \colarr{20}{9}{14}{1*\ys}{2};
    \colarr{20}{6}{12}{2*\ys}{3};
    \node (ld) at (4,2.9*\ys)[scale=0.5]{\bf $\vdots$};
    \colarr{20}{8}{13}{4*\ys}{k};

    \draw[fill=gray!100!white] (2*\xs-0.5*\xs,5.5*\ys-0.5*\xs) rectangle (2*\xs+0.5*\xs,5.5*\ys+0.5*\xs);
    \node at (2*\xs+0.5*\xs,5.5*\ys)[right,scale=3/4]{= already aligned};
    \draw[fill=gray!40!white] (9*\xs-0.5*\xs,5.5*\ys-0.5*\xs) rectangle (9*\xs+0.5*\xs,5.5*\ys+0.5*\xs);
    \node at (9*\xs+0.5*\xs,5.5*\ys)[right,scale=3/4]{= next to align};
    \draw[fill=gray!0!white] (15.5*\xs-0.5*\xs,5.5*\ys-0.5*\xs) rectangle (15.5*\xs+0.5*\xs,5.5*\ys+0.5*\xs);
    \node at (15.5*\xs+0.5*\xs,5.5*\ys)[right,scale=3/4]{= to align};

    \end{tikzpicture}
    \caption{Illustration of the dynamic program
      computing a mean~$(z_1,\ldots,z_q)$ for the subseries $(x_1^{(1)},\ldots,x_{i_1}^{(1)}),\ldots,(x_1^{(k)},\ldots,x_{i_k}^{(k)})$ (white elements are not considered at the current iteration).
      The light gray elements are aligned to the last mean element~$z_q$.
      The remaining mean elements $(z_1,\ldots,z_{q-1})$ form an optimal mean for the dark gray subseries.}
    \label{fig:dynprog}
  \end{figure}  
  We use the assumption that the mean does not contain redundant elements for this recursive approach.

  Before describing our dynamic program, we prove the following lemma concerning the optimal value of a mean element for given alignments.

  \begin{lemma}\label{lem:opt_mean_value}
    Let~$x^{(1)},\ldots,x^{(k)}$ be time series and let~$w_1,\ldots,w_k$ be nonnegative weights.
    Let~$n_j$ denote the length of~$x^{(j)}$, $j\in[k]$.
    Further, let~$p^{(j)}$ be a warping path of order~$m\times n_j$ for $m\in\N$ and let~$z = \argmin_{x\in\T_m}\sum_{j=1}^kw_jC_{p^{(j)}}(x,x^{(j)})$.
    For $i\in[m]$, let~$x^{(j)}_{\ell_{ij}},\ldots,x^{(j)}_{h_{ij}}$ denote the elements of~$x^{(j)}$ that are aligned with element~$z_i$ by~$p^{(j)}$.
    Then,
    \[z_i = \frac{\sum_{j=1}^kw_j\sum_{t=\ell_{ij}}^{h_{ij}}x^{(j)}_t}{\sum_{j=1}^{k}w_j(h_{ij}-\ell_{ij}+1)}.\]
    
  \end{lemma}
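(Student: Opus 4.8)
The plan is to observe that, once all the warping paths~$p^{(1)},\ldots,p^{(k)}$ are fixed, the objective $g(x) := \sum_{j=1}^k w_j C_{p^{(j)}}(x, x^{(j)})$ is a separable sum of one-dimensional convex quadratics in the coordinates $x_1,\ldots,x_m$ of $x$, so the minimiser can be found coordinate by coordinate via elementary calculus. First I would expand the cost of a single alignment using the definition $C_{p^{(j)}}(x,x^{(j)}) = \sum_{\ell} (x_{i_\ell} - x^{(j)}_{j_\ell})^2$ and regroup the sum according to which mean index $i$ each pair touches: since every pair of~$p^{(j)}$ aligns exactly one element~$x_i$ of the mean with exactly one element of~$x^{(j)}$, and the elements of $x^{(j)}$ aligned with $z_i$ are precisely the contiguous block $x^{(j)}_{\ell_{ij}},\ldots,x^{(j)}_{h_{ij}}$, we get
\[
C_{p^{(j)}}(x,x^{(j)}) = \sum_{i=1}^m \sum_{t=\ell_{ij}}^{h_{ij}} (x_i - x^{(j)}_t)^2 .
\]
Summing over~$j$ with weights~$w_j$ and swapping the order of summation yields $g(x) = \sum_{i=1}^m g_i(x_i)$ where $g_i(s) = \sum_{j=1}^k w_j \sum_{t=\ell_{ij}}^{h_{ij}} (s - x^{(j)}_t)^2$.

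Next I would minimise each $g_i$ separately. Each $g_i$ is a nonnegative-weighted sum of squares, hence a convex quadratic in~$s$ with leading coefficient $\sum_{j=1}^k w_j(h_{ij}-\ell_{ij}+1)$; differentiating and setting $g_i'(s)=0$ gives
\[
2\sum_{j=1}^k w_j \sum_{t=\ell_{ij}}^{h_{ij}} (s - x^{(j)}_t) = 0
\quad\Longleftrightarrow\quad
s = \frac{\sum_{j=1}^k w_j \sum_{t=\ell_{ij}}^{h_{ij}} x^{(j)}_t}{\sum_{j=1}^k w_j (h_{ij}-\ell_{ij}+1)},
\]
which is exactly the claimed value of~$z_i$. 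Since~$g$ is separable, the vector with these coordinates is the unique minimiser of~$g$ over~$\T_m$, which is the definition of~$z$; thus $z_i$ equals the stated weighted average for every $i\in[m]$.

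The only genuine subtlety is the regrouping step and the well-definedness of the denominator. For the regrouping I would note that a warping path of order $m\times n_j$ visits each row $i\in[m]$ in a contiguous run of consecutive pairs (by the step condition $p_{\ell+1}-p_\ell\in\{(1,0),(0,1),(1,1)\}$, the first coordinate is nondecreasing and increases by at most one), and within that run the second coordinates form a contiguous interval $\ell_{ij},\ldots,h_{ij}$ of $[n_j]$, so every term $(x_{i_\ell}-x^{(j)}_{j_\ell})^2$ of $C_{p^{(j)}}$ is counted exactly once in the double sum above — this is where I expect to have to be careful with indices. The denominator is strictly positive provided at least one $w_j$ is positive; if all weights are zero the lemma is vacuous (every $x$ minimises the all-zero objective), and the paper's convention $\sum_j w_j = 1$ rules this out anyway, so I would simply assume not all weights vanish. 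I do not anticipate any other obstacle: once the objective is written in separable form, everything reduces to the textbook fact that $\sum_i a_i(s-b_i)^2$ is minimised at the $a_i$-weighted mean of the $b_i$.
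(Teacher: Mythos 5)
Your proposal is correct and follows essentially the same route as the paper's proof: expand $C_{p^{(j)}}$ grouped by mean index, swap the order of summation to obtain a separable sum of convex quadratics, and minimise each coordinate by setting the derivative to zero, which yields exactly the stated weighted average. Your additional remarks on why the aligned indices form a contiguous block and on the positivity of the denominator are careful touches the paper leaves implicit, but they do not change the argument.
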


  \begin{proof}
    By the assumption of the lemma, we have
    \begin{align*}
      z = \argmin_{x\in\T_m}\sum_{j=1}^kw_jC_{p^{(j)}}(x,x^{(j)}) &= \argmin_{x\in\T_m}\sum_{j=1}^kw_j\sum_{i=1}^m\sum_{t=\ell_{ij}}^{h_{ij}}(x_i-x^{(j)}_t)^2\\
      &= \argmin_{x\in\T_m}\sum_{i=1}^m\sum_{j=1}^kw_j\sum_{t=\ell_{ij}}^{h_{ij}}(x_i-x^{(j)}_t)^2.
    \end{align*}
    Hence, for each~$i\in[m]$, it holds
    \begin{align*}
      z_i = \argmin_{\mu\in\Q}\sum_{j=1}^kw_j\sum_{t=\ell_{ij}}^{h_{ij}}(\mu-x^{(j)}_t)^2.
    \end{align*}
    Note that the above sum is a convex function in~$\mu$ (since all weights are nonnegative).
    Setting the first derivative with respect to~$\mu$ equal to zero yields
    \begin{align*}
      & &\sum_{j=1}^k\left(2w_j(h_{ij}-\ell_{ij}+1)z_i -2w_j\sum_{t=\ell_{ij}}^{h_{ij}}x^{(j)}_t\right) &= 0\\
      \Leftrightarrow & &2z_i\sum_{j=1}^k(w_j(h_{ij}-\ell_{ij}+1)) &= 2\sum_{j=1}^kw_j\sum_{t=\ell_{ij}}^{h_{ij}}x^{(j)}_t\\
      \Leftrightarrow & &z_i &= \frac{\sum_{j=1}^kw_j\sum_{t=\ell_{ij}}^{h_{ij}}x^{(j)}_t}{\sum_{j=1}^{k}w_j(h_{ij}-\ell_{ij}+1)}
    \end{align*}
    
  \end{proof}
  
  We now prove our main theorem.
  
  \begin{theorem}
    \label{thm:nto2kalgo}
    \wDTW for~$k$ input time series is solvable with $O(n^{2k+1}2^kk)$ arithmetical operations, where~$n$ is the maximum length of any input time series.
  \end{theorem}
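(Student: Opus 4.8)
The plan is to build a dynamic program over "prefix configurations" of the $k$ input time series, using \Cref{lem:no_redundant} to restrict attention to means without redundant elements and \Cref{lem:opt_mean_value} to reconstruct the optimal mean values once the alignment structure is fixed. A state of the dynamic program is a tuple $(i_1,\ldots,i_k)\in\{0,1,\ldots,n_1\}\times\cdots\times\{0,1,\ldots,n_k\}$ recording that we have already "consumed" the prefixes $(x_1^{(j)},\ldots,x_{i_j}^{(j)})$ of each input series; the DP value $D(i_1,\ldots,i_k)$ is the minimum, over all length-$q$ time series $(z_1,\ldots,z_q)$ (for any $q$) and all choices of warping paths aligning each $z$ to the corresponding prefix, of $\sum_{j=1}^k w_j C_{p^{(j)}}\bigl((z_1,\ldots,z_q),(x_1^{(j)},\ldots,x_{i_j}^{(j)})\bigr)$. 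The final answer is $D(n_1,\ldots,n_k)$, which by \Cref{lem:no_redundant} equals $\min_z F_w(z)$.

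The recurrence: to compute $D(i_1,\ldots,i_k)$, I would guess which suffix block of each remaining prefix is aligned to the \emph{last} mean element $z_q$. Concretely, for each $j$ choose an index $\ell_j$ with $1\le \ell_j\le i_j$ (or, to handle the boundary, allow $\ell_j=i_j+1$ meaning "nothing of $x^{(j)}$ beyond the previous state is aligned to $z_q$ alone"), so that $x_{\ell_j}^{(j)},\ldots,x_{i_j}^{(j)}$ are exactly the elements aligned with $z_q$. Because the mean has no redundant element, for at least one $j$ this block consists of a \emph{single new} element not aligned to any earlier $z$; for the recursive subproblem the relevant continuation state is $(\ell_1-1,\ldots,\ell_k-1)$ — except that an element of $x^{(j)}$ aligned to $z_q$ may also be aligned to $z_{q-1}$, so the boundary element $x_{\ell_j}^{(j)}$ may be \emph{shared}, giving continuation state with $j$-th coordinate either $\ell_j-1$ or $\ell_j$. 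Given the block boundaries, the optimal value of $z_q$ is determined by the closed form in \Cref{lem:opt_mean_value} (with $h_{ij}=i_j$, $\ell_{ij}=\ell_j$), so the incremental cost $\sum_j w_j \sum_{t=\ell_j}^{i_j}(z_q - x_t^{(j)})^2$ is computable in $O(n k)$ arithmetic operations (or $O(k)$ with prefix sums). Thus
\[
D(i_1,\ldots,i_k)=\min_{\substack{(\ell_1,\ldots,\ell_k)\\ \text{valid, nonredundant}}}\;\Bigl(\;D(\ell_1-[\,\text{shared}_1],\ldots,\ell_k-[\,\text{shared}_k])+\sum_{j=1}^k w_j\!\!\sum_{t=\ell_j}^{i_j}\!\!(z_q-x_t^{(j)})^2\Bigr),
\]
with base case $D(0,\ldots,0)=0$ (empty mean). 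The number of states is $O(n^k)$; for each state we range over $O(n^k)$ choices of the block-start vector $(\ell_1,\ldots,\ell_k)$ and, for each, over the $2^k$ shared/not-shared patterns of the boundary elements (that is, whether $x_{\ell_j}^{(j)}$ is also aligned to $z_{q-1}$), each evaluated in $O(k)$ time after $O(n)$-time prefix-sum precomputation per state; this gives $O(n^{k}\cdot n^k\cdot 2^k\cdot (k+n)) = O(n^{2k+1}2^k k)$ total, matching the claimed bound. Finally I would argue correctness by induction on $\sum_j i_j$: soundness is immediate since every choice describes a legal mean-plus-alignments on the prefixes, and completeness follows because \Cref{lem:no_redundant} lets us take an optimal mean $z$ with optimal paths and no redundant element, then peel off $z_q$ together with its aligned blocks — the no-redundancy condition guarantees the peeled-off structure is captured by some valid choice in the recurrence, and the residual mean $(z_1,\ldots,z_{q-1})$ with the truncated paths witnesses the recursive value.

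The main obstacle will be getting the boundary bookkeeping exactly right: an element of $x^{(j)}$ aligned to $z_q$ can simultaneously be aligned to $z_{q-1}$ (at most one such shared element per series, at the left end of the block), and the no-redundant-element hypothesis must be invoked carefully to ensure that for each $j$ \emph{either} the block of $z_q$ contains an element unique to $z_q$ \emph{or} the left boundary element is shared with $z_{q-1}$ — and that globally (across all $j$) $z_q$ itself is not redundant, i.e. some $j$ has a private element. I also need to make sure the recurrence never "creates" a redundant element: the $2^k$ shared-pattern enumeration is what lets the DP correctly charge the shared boundary element's cost to $z_q$ while still passing it along (undeleted) to the subproblem for $z_{q-1}$. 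Handling the degenerate possibilities — $q=0$, blocks of length zero for some $j$ (so $z_q$ aligns to nothing new in $x^{(j)}$, only possibly to a shared element), and ensuring the warping-path monotonicity constraints $p_{\ell+1}-p_\ell\in\{(1,0),(0,1),(1,1)\}$ are respected when we splice the $z_q$-block onto the recursive path — is the routine-but-delicate part I would write out in full.
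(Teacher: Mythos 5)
Your proposal follows essentially the same route as the paper's proof: the same prefix-indexed dynamic-programming table, the same guess of the blocks $x^{(j)}_{\ell_j},\ldots,x^{(j)}_{i_j}$ aligned to the last mean element with its value given in closed form by \Cref{lem:opt_mean_value}, the same $2^k$ enumeration of shared left-boundary elements constrained via \Cref{lem:no_redundant} to forbid a redundant last element, and the same $O(n^k\cdot n^k\cdot 2^k\cdot kn)$ accounting. Only minor bookkeeping differs (the paper uses $1$-based states with base case $C[1,\ldots,1]$ rather than an empty-prefix state, and note that your displayed continuation coordinate $\ell_j-[\text{shared}_j]$ inverts the indicator relative to your own correct prose, which should read $\ell_j-1+[\text{shared}_j]$; likewise non-redundancy of $z_q$ requires some series whose \emph{entire} block is private, i.e.\ some unshared left boundary, not merely some private element).
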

  
  \begin{proof}
    Assume for simplicity that all time series have length~$n$ (the general case can be solved analogously). We find a mean using a dynamic programming approach.
    Let~$C$ be a~$k$-dimensional table, where for all~$(i_1,\ldots,i_k)\in[n]^k$, we define
    \[C[i_1,\ldots,i_k] = \min_{z \in \mathcal{T}}\left(\sum_{j=1}^kw_j\left(\dtw(z,(x_1^{(j)},\ldots,x_{i_j}^{(j)}))\right)^2\right),\]
    that is, $C[i_1,\ldots,i_k]$ is the value~$F_w(z)$ of the weighted Fr\'echet function of a mean~$z$ for the subseries~$(x_1^{(1)},\ldots,x_{i_1}^{(1)}),\ldots,(x_1^{(k)},\ldots,x_{i_k}^{(k)})$.
    Clearly, $C[n,\ldots,n]$ is the optimal value~$F_w(z)$ of a mean for the input instance.

    For $i_1=i_2=\ldots =i_k=1$, a mean~$z$ clearly contains just one element
    and each optimal warping path between~$z$ and~$(x_1^{(j)})$ trivially equals~$((1,1))$.
    By \Cref{lem:opt_mean_value}, we initialize
    \[C[1,\ldots,1] = \sum_{j=1}^kw_j(x_1^{(j)} - \mu)^2,\qquad \mu= \sum_{j=1}^kw_jx_1^{(j)}.\]
    Hence, the corresponding mean is~$z=(\mu)$.

    For the case that $i_j > 1$ holds for at least one~$j\in[k]$, consider a mean~$z$ for
    $(x_1^{(1)},\ldots,x_{i_1}^{(1)}),\ldots,(x_1^{(k)},\ldots,x_{i_k}^{(k)})$. By \Cref{lem:no_redundant},
    we can assume that there exist optimal warping paths~$p^{(j)}$ between~$z$ and~$(x_1^{(j)},\ldots,x_{i_j}^{(j)})$ such that~$z$ contains no redundant elements.
    Let~$z_q$ be the last element of~$z$. Then, for each~$j\in[k]$, $z_q$ is aligned by~$p^{(j)}$
    with some elements $x_{\ell_j}^{(j)},\ldots,x_{i_j}^{(j)}$ for $\ell_j\in[i_j]$.
    By \Cref{lem:opt_mean_value}, it follows that
    \[z_q = \frac{\sum_{j=1}^kw_j\sum_{t=\ell_j}^{i_j}x^{(j)}_t}{\sum_{j=1}^{k}w_j(i_j-\ell_j+1)}.\]
    Hence, the contribution of~$z_q$ to~$F_w(z)$ is
    $\sum_{j=1}^kw_j\sum_{t=\ell_j}^{i_j}(x^{(j)}_t - z_q)^2$.

    Now, assume that there exists another element~$z_{q-1}$ in~$z$. Clearly, for each~$j\in[k]$, $z_{q-1}$ is aligned only with elements of indices up to~$\ell_j$ since otherwise the warping path conditions are violated.
    Hence,~$F_w(z)$ can be obtained recursively from a mean of the subseries
    $(x_1^{(1)},\ldots,x_{\ell_1}^{(1)}),\ldots,(x_1^{(k)},\ldots,x_{\ell_k}^{(k)})$.
    Recall, however, that we assumed~$z$ not to contain any redundant element. 
    It follows that $z_{q-1}$ cannot be aligned with~$x_{\ell_j}^{(j)}$ for all~$j\in[k]$ since~$z_q$ is already aligned with each~$x_{\ell_j}^{(j)}$.
    Hence, we add the minimum value~$C[\ell_1',\ldots,\ell_k']$ over all~$\ell_j'\in\{\max(1,\ell_j-1),\ell_j\}$, where~$\ell_j' = \ell_j-1$ holds for at least one~$j\in[k]$.
    
    We arrive at the following recursion:
    \[C[i_1,\ldots,i_k] = \min\{c^*(\ell_1,\ldots,\ell_k) + \sigma(\ell_1,\ldots,\ell_k) \mid \ell_1\in[i_1],\ldots,\ell_k\in[i_k]\},\]
    where 
    \[
      \sigma(\ell_1,\ldots,\ell_k) := \sum_{j=1}^kw_j\sum_{t=\ell_j}^{i_j} (x_t^{(j)} - \mu)^2,\qquad
\mu := \frac{\sum_{j=1}^kw_j\sum_{t=\ell_j}^{i_j}x^{(j)}_t}{\sum_{j=1}^{k}w_j(i_j-\ell_j+1)},
    \]
    and
    \[c^*(\ell_1,\ldots,\ell_k) := \min\{C[\ell_1',\ldots,\ell_k'] \mid \ell_j'\in\{\max(1,\ell_j-1), \ell_j\},\, \sum_{j=1}^k(\ell_j - \ell_j')>0\}\]
    if $\ell_j > 1$ holds for some~$j\in[k]$, and~$c^*(1,\ldots,1):=0$.
    
      In order to compute~$C[i_1,\ldots,i_k]$, a minimum is computed over all possible choices~$\ell_j\in[i_j]$, $j\in[k]$. For each choice~$\ell_1,\ldots,\ell_k$, the value~$\mu$ corresponds to the last element of a mean and $\sigma(\ell_1,\ldots,\ell_k)$ is the induced cost of aligning this element with~$x_{\ell_j}^{(j)},\ldots,x_{i_j}^{(j)}$ for each~$j\in[k]$.
      The value~$c^*(\ell_1,\ldots,\ell_k)$ yields the value~$F_w(z')$ of a mean~$z'$ for the remaining subseries~$(x_1^{(j)},\ldots,x_{\ell_j'}^{(j)})$, $j\in[k]$, over all~$\ell_j'\in\{\max(1,\ell_j-1),\ell_j\}$ such that $\sum_{j=1}^k(\ell_j-\ell_j')>0$, which implies that~$\ell_j'=\ell_j-1$ holds for at least one~$j\in[k]$. This condition guarantees that we only find alignments which do not yield redundant elements in the mean (which we can assume by~\Cref{lem:no_redundant}). Note that~$\ell_j=1$ implies that~$\ell_j'=1$ (since index~$0$ does not exist in~$x^{(j)}$).

    The dynamic programming table~$C$ can be filled iteratively along the dimensions starting from $C[1,\ldots,1]$.
    The overall number of entries is~$n^k$. For each table entry, the minimum of a set containing~$O(n^k)$ elements is computed. Computing an element requires the computation of~$\sigma(\ell_1,\ldots,\ell_k)$ which can be done with~$O(kn)$ arithmetical operations plus the computation of~$c^*(\ell_1,\ldots,\ell_k)$ which is the minimum of a set of size at most~$2^k$ whose elements can be obtained by constant-time table look-ups. Thus, $C$ can be filled using~$O(n^k \cdot n^k \cdot2^k\cdot kn)$ arithmetical operations.
    A mean can be obtained by storing the values~$\mu$ for which the minimum in the above recursion is attained (\Cref{algo:DTWMean} contains the pseudocode).
\end{proof}
    
\begin{algorithm2e*}[t]
    \caption{Exact Dynamic Program (EDP) for \wDTW}
  \label{algo:DTWMean}
  \normalsize
  \DontPrintSemicolon
  \SetKwFunction{append}{append}
  \SetKw{and}{and}
  \KwIn{Time series $x^{(1)},\ldots,x^{(k)}$ of length~$n$ and weights~$w_1,\ldots,w_k>0$.}
  \KwOut{Mean~$z$ and~$F_w(z)$.}

  Initialize $C$ \tcp*{\small $k$-dimensional DP table storing $F_w$-values}
  Initialize $Z$ \tcp*{\small $k$-dimensional table storing means}
  \ForEach(\tcp*[f]{\small fill tables iteratively}){$(i_1,\ldots,i_k)\in [n]^k$}{
    $C[i_1,\ldots,i_k] := \infty$\;
    $Z[i_1,\ldots,i_k] := ()$\;
    \ForEach(\tcp*[f]{\small compute $C[i_1,\ldots,i_k]$}){$(\ell_1,\ldots,\ell_k)\in[i_1]\times\ldots\times[i_k]$}{
      $\mu := (\sum_{j=1}^kw_j\sum_{t=\ell_j}^{i_j} x_t^{(j)})/(\sum_{j=1}^k w_j(i_j - \ell_j+1))$\;
      $\sigma := \sum_{j=1}^kw_j\sum_{t=\ell_j}^{i_j} (x_t^{(j)} - \mu)^2$\;
      $c^* := \infty$\;
      $z := ()$\;
      \eIf{$\ell_1 = \ell_2 = \ldots = \ell_k = 1$}{
        $c^* := 0$\;
      }(\tcp*[f]{\small compute~$c^*(\ell_1,\ldots,\ell_k)$ based on table look-ups}){        \ForEach{$(\ell_1',\ldots,\ell_k')\in\{\ell_1-1,\ell_1\}\times\ldots\times\{\ell_k-1,\ell_k\}$}{
          \If{$\forall j\in[k]:\ell_j'\ge 1$ \and $\exists j\in[k]:\ell_j'<\ell_j$}{
            \If{$C[\ell_1',\ldots,\ell_k'] < c^*$}{
              $c^* := C[\ell_1',\ldots,\ell_k']$\;
              $z := Z[\ell_1',\ldots,\ell_k']$\;
            }
          }
        }
      }
      \If(\tcp*[f]{\small update mean and $F_w$-value}){$c^* + \sigma < C[i_1,\ldots,i_k]$}{
        $C[i_1,\ldots,i_k] := c^* + \sigma$\;
        $Z[i_1,\ldots,i_k] := \append(z,\mu)$\;
      }
    }
  }
  \Return $(Z[n,\ldots,n], C[n,\ldots,n])$\;
\end{algorithm2e*}
  
  \noindent We close with some remarks on the above result.
  
\begin{compactitem}
  \item The dynamic program also allows to compute all (non-redundant) means by storing all possible values for which the minimum in the recursion is attained.
  \item It is possible to incorporate a fixed length~$q$ into the dynamic program (by adding another dimension to the table~$C$) such that it outputs only optimal solutions of length~$q$.\footnote{Source code available at http://www.akt.tu-berlin.de/menue/software/.}
    The running time increases by a factor of~$q^2$.
  \item The dynamic program can easily be extended to multivariate time series with elements in~$\Q^d$
    and a cost function $C_p(x,y):=\sum_{\ell=1}^L \norm{x_{i_\ell} - y_{j_\ell}}_2^2$
    (with a running time increase by a factor of~$d$).
\end{compactitem}

\section{Experiments}
\label{sec:exp}

The goal of this section is twofold: First, we empirically study properties of a mean that are relevant for mean-based applications in data mining as well as for devising heuristics for mean computation. Second, we assess the performance of state-of-the-art heuristics.

\subsection{Test Data}

\begin{table}
\small
\centering
\begin{tabular}{lrrl}
\toprule
Data Set & \multicolumn{1}{c}{\#} & \multicolumn{1}{c}{$n$} & Type\\ 
\midrule
ItalyPowerDemand & 1096 & 24 & SENSOR\\ 
SyntheticControl & 600 & 60 & SIMULATED\\ 
SonyAIBORobotSurface2 & 980 & 65 & SENSOR\\ 
SonyAIBORobotSurface1 & 621 & 70 & SENSOR\\ 
ProximalPhalanxTW & 605 & 80 & IMAGE\\ 
ProximalPhalanxOutlineCorrect & 891 & 80 & IMAGE\\ 
ProximalPhalanxOutlineAgeGroup & 605 & 80 & IMAGE\\ 
PhalangesOutlinesCorrect & 2658 & 80 & IMAGE\\ 
MiddlePhalanxTW & 553 & 80 & IMAGE\\ 
MiddlePhalanxOutlineCorrect & 891 & 80 & IMAGE\\ 
MiddlePhalanxOutlineAgeGroup & 554 & 80 & IMAGE\\ 
DistalPhalanxTW & 539 & 80 & IMAGE\\ 
DistalPhalanxOutlineCorrect & 876 & 80 & IMAGE\\ 
DistalPhalanxOutlineAgeGroup & 539 & 80 & IMAGE\\ 
TwoLeadECG & 1162 & 82 & ECG\\ 
MoteStrain & 1272 & 84 & SENSOR\\ 
ECG200 & 200 & 96 & ECG\\ 
MedicalImages & 1141 & 99 & IMAGE\\ 
TwoPatterns & 5000 & 128 & SIMULATED\\ 
SwedishLeaf & 1125 & 128 & IMAGE\\ 
CBF & 930 & 128 & SIMULATED\\ 
FacesUCR & 2250 & 131 & IMAGE\\ 
FaceAll & 2250 & 131 & IMAGE\\ 
ECGFiveDays & 884 & 136 & ECG\\ 
ECG5000 & 5000 & 140 & ECG\\ 
Plane & 210 & 144 & SENSOR\\ 
GunPoint & 200 & 150 & MOTION\\
\bottomrule
\end{tabular}
\caption{List of~$27$ UCR time series data sets. 
Columns~\# and~$n$ show the number and length of time series, respectively. 
The last column refers to the respective application domains (e.g.~ECG stands for electrocardiography).}
\label{tab:ucr}
\end{table}

We used data sets derived from random walks and the 27 UCR data sets~\cite{Chen2015} listed in \Cref{tab:ucr}. Due to the  running time of \Cref{algo:DTWMean}, only data sets with short time series were considered. Random walks were used to conduct controlled experiments within the same problem domain in order to investigate mean properties and to assess the performance of heuristics more objectively under different conditions. 

A random walk~$x = (x_1, \ldots, x_n)$ is of the form
\begin{align*}
x_1 &= \varepsilon_1,\\
x_i &= x_{i-1} +  \varepsilon_i \quad \text{for all } 2 \leq i \leq n,
\end{align*}
where the~$\varepsilon_i$ are random numbers drawn from the normal distribution~$N(0,1)$. 
For the UCR data sets we merged the training and test sets.
Time series within each UCR data set have the same length~$n$.
We restricted the experiments to data sets whose time series have length~$n \leq 150$. 

Unless otherwise stated, we generated samples according to the following procedures:
\begin{itemize}
\item $\S{S}_{\rw}$:
For every~$n \in \{10, 20, \ldots, 100\}$, we generated~$1,000$ pairs of random walks of length~$n$ giving a total of~$10,000$ samples of size~$k = 2$. 

\item $\S{S}_{\rw}^k$:
For every~$k \in \{2, \ldots, 6\}$, we generated~$1,000$ samples consisting of~$k$ random walks of length~$n = 6$ giving a total of~$5,000$ samples. 

\item $\S{S}_{\ucr}$:
For every UCR data set, we randomly sampled~$1,000$ different pairs of time series giving a total of~$27,000$ samples of size~$k = 2$. 
\end{itemize}

\subsection{Mean Properties}\label{subsec:exp:properties}
The first set of experiments studies mean properties relevant for devising heuristics and mean-based applications. 
Due to the running time of~\cref{algo:DTWMean}, the majority of experiments focused on properties of condensed means. 
The analysis of properties of arbitrary means are confined to a subset of tiny scale problems.

\subsubsection{Uniqueness}

\begin{table}
\centering
\scriptsize
\begin{tabular}{l@{\qquad}rrrHH@{\quad}rH}
\textbf{Condensed means} \\
\toprule
data set 					  & $P_{\ucm}$&\multicolumn{1}{c}{$\overline{\alpha}$}&$\alpha^*$ &&&\multicolumn{1}{c}{$\overline{\delta}$}& $\delta_{\S{S}}^*$ \\
\midrule
$\S{S}_{\rw}$                   & 100.0 &   1.0 $(\pm  0.00)$ &   1.0 &   0 &   0.0 &  -30.3 $(\pm 25.79)$ &  40.0\\
$\S{S}_{\rw}^k$              & 100.0 &   1.0 $(\pm  0.00)$ &   1.0 &   0 &   0.0 &  -38.5 $(\pm 18.91)$ &  23.3\\
\midrule
ItalyPowerDemand               &  99.4 &   1.0 $(\pm  0.08)$ &   2.0 &  24 &  24.0 &  -5.8 $(\pm 11.29)$ &  16.7\\
synthetic\_control             & 100.0 &   1.0 $(\pm  0.00)$ &   1.0 &  60 &  60.0 &  -8.3 $(\pm 16.87)$ &  26.7\\
SonyAIBORobotSurface           &  43.9 &   2.2 $(\pm  2.01)$ &  16.0 &  70 &  70.0 & -13.9 $(\pm \phantom{0}5.53)$ &   2.9\\
SonyAIBORobotSurfaceII         &  48.2 &   2.1 $(\pm  2.03)$ &  16.0 &  65 &  65.0 & -12.9 $(\pm \phantom{0}7.15)$ &   6.2\\
ProximalPhalanxTW              &  80.2 &   1.2 $(\pm  0.40)$ &   2.0 &  80 &  80.0 & -16.0 $(\pm 13.79)$ &   8.8\\
ProximalPhalanxOutlineCorrect  &  69.5 &   1.3 $(\pm  0.46)$ &   2.0 &  80 &  80.0 & -19.9 $(\pm 13.48)$ &   8.8\\
ProximalPhalanxOutlineAgeGroup &  78.6 &   1.2 $(\pm  0.41)$ &   2.0 &  80 &  80.0 & -17.5 $(\pm 13.15)$ &   8.8\\
PhalangesOutlinesCorrect       &  82.5 &   1.2 $(\pm  0.38)$ &   2.0 &  80 &  80.0 & -17.3 $(\pm 13.24)$ &  11.3\\
MiddlePhalanxTW                &  82.8 &   1.2 $(\pm  0.38)$ &   2.0 &  80 &  80.0 & -15.6 $(\pm 13.67)$ &  11.3\\
MiddlePhalanxOutlineCorrect    &  88.3 &   1.1 $(\pm  0.32)$ &   2.0 &  80 &  80.0 & -13.8 $(\pm 13.76)$ &  10.0\\
MiddlePhalanxOutlineAgeGroup   &  85.2 &   1.1 $(\pm  0.36)$ &   2.0 &  80 &  80.0 & -14.7 $(\pm 13.64)$ &  11.3\\
DistalPhalanxTW                &  74.5 &   1.3 $(\pm  0.44)$ &   2.0 &  80 &  80.0 & -17.9 $(\pm 13.20)$ &  10.0\\
DistalPhalanxOutlineCorrect    &  81.7 &   1.2 $(\pm  0.39)$ &   2.0 &  80 &  80.0 & -16.4 $(\pm 13.46)$ &  11.3\\
DistalPhalanxOutlineAgeGroup   &  76.6 &   1.2 $(\pm  0.42)$ &   2.0 &  80 &  80.0 & -17.6 $(\pm 13.21)$ &  10.0\\
TwoLeadECG                     &  94.7 &   1.1 $(\pm  0.26)$ &   4.0 &  82 &  82.0 &  -9.3 $(\pm \phantom{0}8.07)$ &   9.8\\
MoteStrain                     & 100.0 &   1.0 $(\pm  0.00)$ &   1.0 &  84 &  84.0 & -30.9 $(\pm 16.55)$ &   7.1\\
ECG200                         & 100.0 &   1.0 $(\pm  0.00)$ &   1.0 &  96 &  96.0 &  -8.8 $(\pm \phantom{0}9.59)$ &  12.5\\
MedicalImages                  & 100.0 &   1.0 $(\pm  0.00)$ &   1.0 &  99 &  99.0 & -32.1 $(\pm 14.34)$ &   7.1\\
Two\_Patterns                  &   0.5 &   7.9 $(\pm  4.66)$ &  16.0 & 128 & 128.0 & -27.3 $(\pm \phantom{0}6.40)$ &   1.6\\
SwedishLeaf                    & 100.0 &   1.0 $(\pm  0.00)$ &   1.0 & 128 & 128.0 &   0.9 $(\pm \phantom{0}6.13)$ &  18.8\\
CBF                            & 100.0 &   1.0 $(\pm  0.00)$ &   1.0 & 128 & 128.0 &   2.8 $(\pm \phantom{0}6.82)$ &  19.5\\
FacesUCR                       &  99.7 &   1.0 $(\pm  0.05)$ &   2.0 & 131 & 131.0 &  -8.2 $(\pm \phantom{0}5.85)$ &   4.6\\
FaceAll                        &  99.8 &   1.0 $(\pm  0.04)$ &   2.0 & 131 & 131.0 &  -8.1 $(\pm \phantom{0}5.72)$ &   9.2\\
ECGFiveDays                    &  97.4 &   1.0 $(\pm  0.16)$ &   2.0 & 136 & 136.0 & -22.3 $(\pm 13.87)$ &   8.1\\
ECG5000                        & 100.0 &   1.0 $(\pm  0.00)$ &   1.0 & 140 & 140.0 & -21.5 $(\pm 16.04)$ &  10.0\\
Plane                          & 100.0 &   1.0 $(\pm  0.00)$ &   1.0 & 144 & 144.0 &  -2.3 $(\pm \phantom{0}4.68)$ &  11.1\\
Gun\_Point                     & 100.0 &   1.0 $(\pm  0.00)$ &   1.0 & 150 & 150.0 & -49.1 $(\pm 14.02)$ &   0.0\\
\bottomrule 
\\
\textbf{Non-condensed means} \\
\toprule
data set 					  			& $P_{\ncm}$\\
\midrule
$\S{S}_{\rw}$ with $10 \leq n \leq 40$	& 0.3\\
ItalyPowerDemand               			& 2.1\\
\bottomrule
\\
\textbf{Notation}\\
\toprule
\multicolumn{8}{l}{%
\begin{tabular}{l@{\qquad}l}
$P_{\ucm}$    & percentage of unique condensed means \\
$P_{\ncm}$    & percentage of non-condensed means\\
$\overline{\alpha}$    & average number of condensed means \\
$\alpha^*$  & maximum number of condensed means \\
$\overline{\delta}$    & average length-deviation 
\end{tabular}}\\
\bottomrule
\end{tabular}
\caption{The table on the top shows properties on uniqueness and length of condensed means. Numbers in parentheses show standard deviations. The table below shows the percentage of non-condensed means for a restricted set of samples. The bottom table describes the notation used as column identifiers.}
\label{tab:res_ucr_cm}
\end{table}

Non-unique means can cause problems in theory and practice. For example, proving that sample means are consistent estimators of population means becomes more complicated for non-unique means.
In addition, non-unique means can introduce undesired ambiguities into mean-based applications. For example, the performance of $k$-means clustering in DTW spaces 
\cite{HNF08,PFWNCK14,PFWNCK16,Soheily-Khah2016} does not only depend on the choice of initial means but also on the choice of recomputed means during optimization. Consequently, the extent of these difficulties depends---at least in principle---on the prevalence of non-unique means. 
In the following, we investigate the prevalence of non-unique condensed (and non-condensed) means.

\paragraph{Setup:}
We applied \Cref{algo:DTWMean} to all samples of type~$\S{S}_{\rw}$,~$\S{S}_{\rw}^k$, and~$\S{S}_{\ucr}$. The percentage~$P_{\ucm}$ of unique condensed means as well as the average and maximum number of condensed means, denoted by~$\overline{\alpha}$ and~$\alpha^*$, were recorded.
For samples consisting of pairs of time series of length $n \leq 40$, we tested the existence of non-condensed means. These samples are of type $\S{S}_{\rw}$ with $k = 2$ and $n \in \cbrace{10, 20, 30, 40}$ and the $1,000$ samples from the \emph{ItalyPowerDemand} data set.

\paragraph{Results and Discussion:}

\begin{figure*}[t]
\centering
\includegraphics[width=0.48\textwidth]{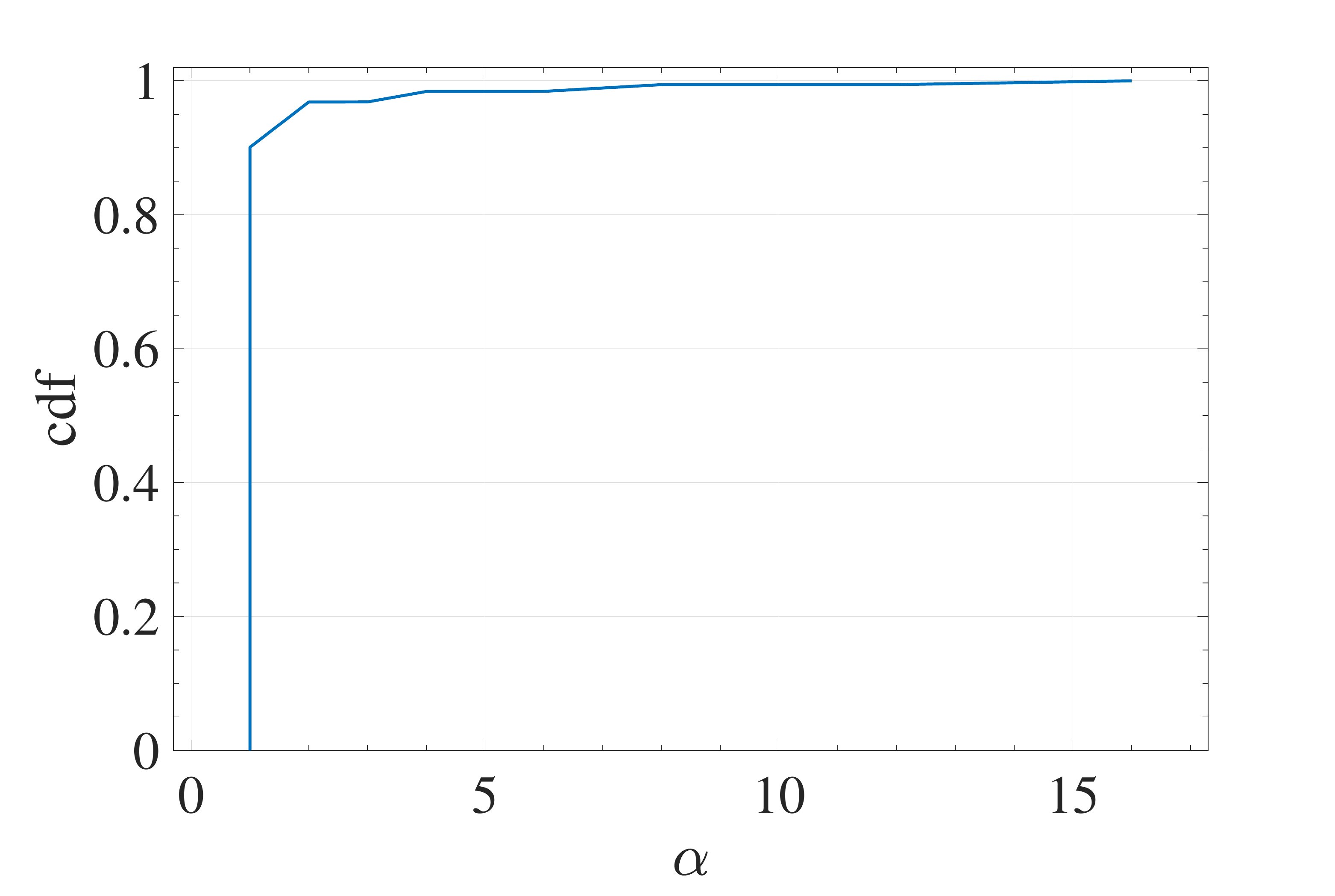}
\hfill
\includegraphics[width=0.48\textwidth]{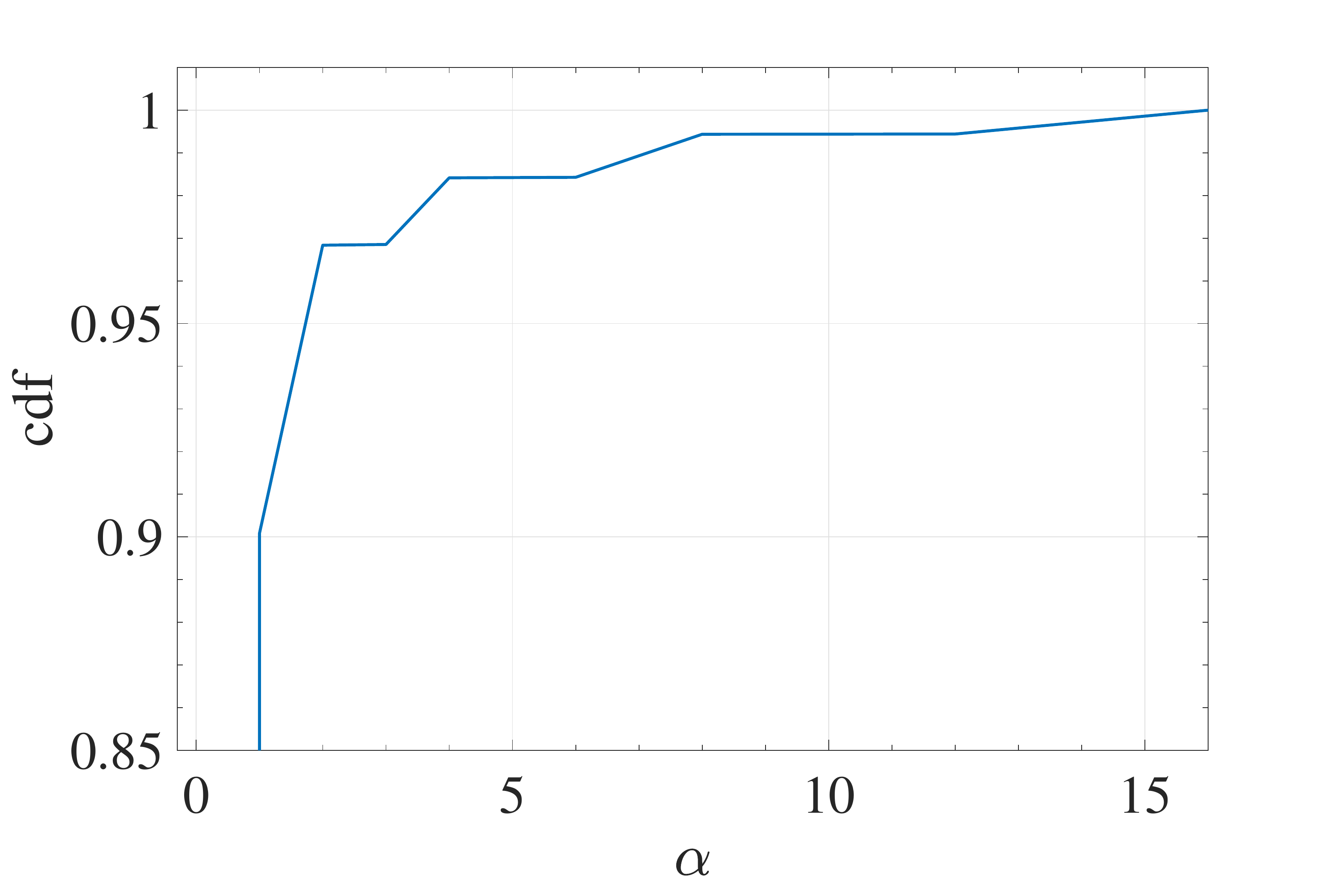}
\caption{Estimated cumulative distribution function (cdf) of the number~$\alpha$ of condensed means over all $42,000$ samples. The right plot shows an excerpt of the left plot with the cdf restricted to the interval $[0.85, 1]$.}
\label{fig:exp_cdf_P_cm}
\end{figure*}

\Cref{tab:res_ucr_cm} summarizes the results. \Cref{fig:exp_cdf_P_cm} shows the estimated cumulative distribution function of the number~$\alpha$ of condensed means over all $42,000$ samples. We made the following observations:
\begin{enumerate}
\item A condensed mean is unique for the majority of samples (see column $P_{\ucm}$ of \Cref{tab:res_ucr_cm} and \Cref{fig:exp_cdf_P_cm}). 
\item Non-condensed means occur exceptionally, that is, in less than $2.5 \%$ of all samples with short time series (see column $P_{\ncm}$ of \Cref{tab:res_ucr_cm}).
\item The average and maximum number of condensed means is between one and two for all but four UCR data sets (see columns $\overline{\alpha}$ and $\alpha^*$ of \Cref{tab:res_ucr_cm}).
\end{enumerate}
These findings indicate that unique (condensed) means are more likely than non-unique means. The implications of the proposed observations are twofold: First, mean-based methods such as $k$-means clustering are less likely to be prone to problems caused by ambiguities. Second, the observations give rise to the hypothesis that a (condensed) mean of a sample is unique \emph{almost everywhere} in a measure-theoretic sense. If this hypothesis is true, then the aforementioned problems caused by non-unique means are hopefully negligible in practice (we remark that optimal warping paths are unique almost everywhere~\cite{JS17} which might indicate that the above hypothesis holds).

\begin{figure*}[ht!]
\centering
\begin{subfigure}{1\textwidth}
\includegraphics[width=1\textwidth]{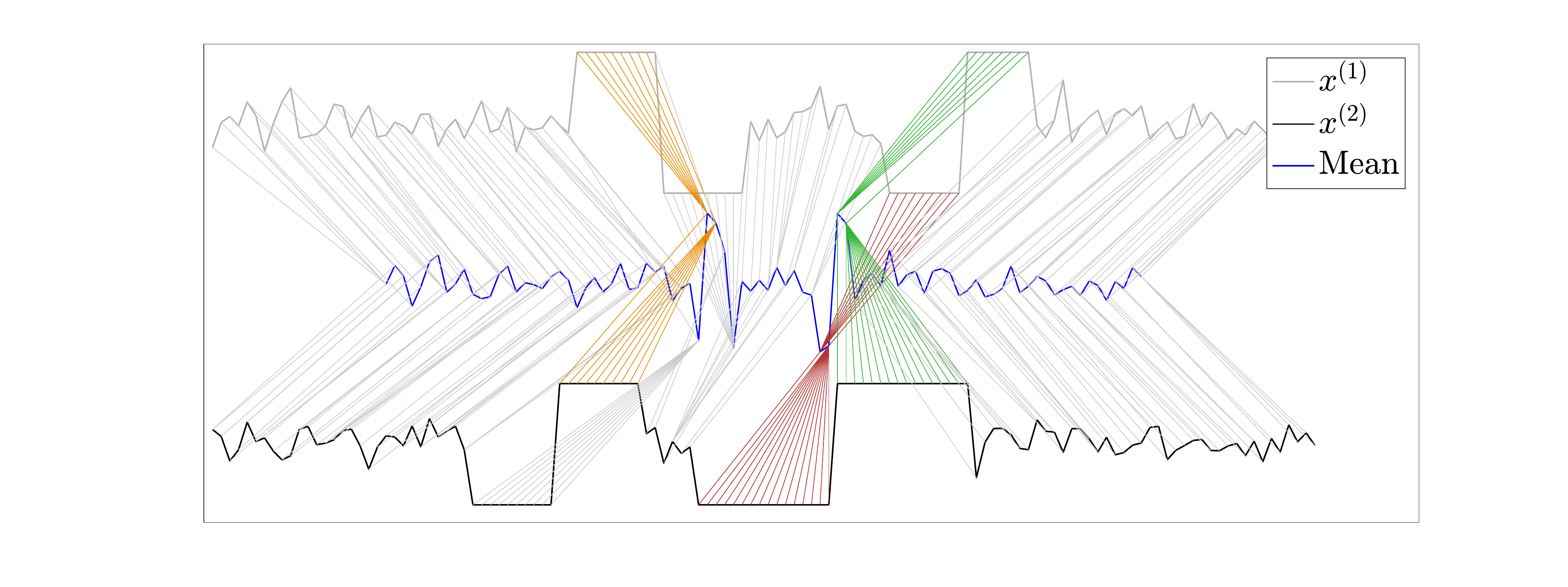}
\vspace*{-2em}
\caption{}
\end{subfigure}

\begin{subfigure}{0.49\textwidth}
\includegraphics[width=1\textwidth]{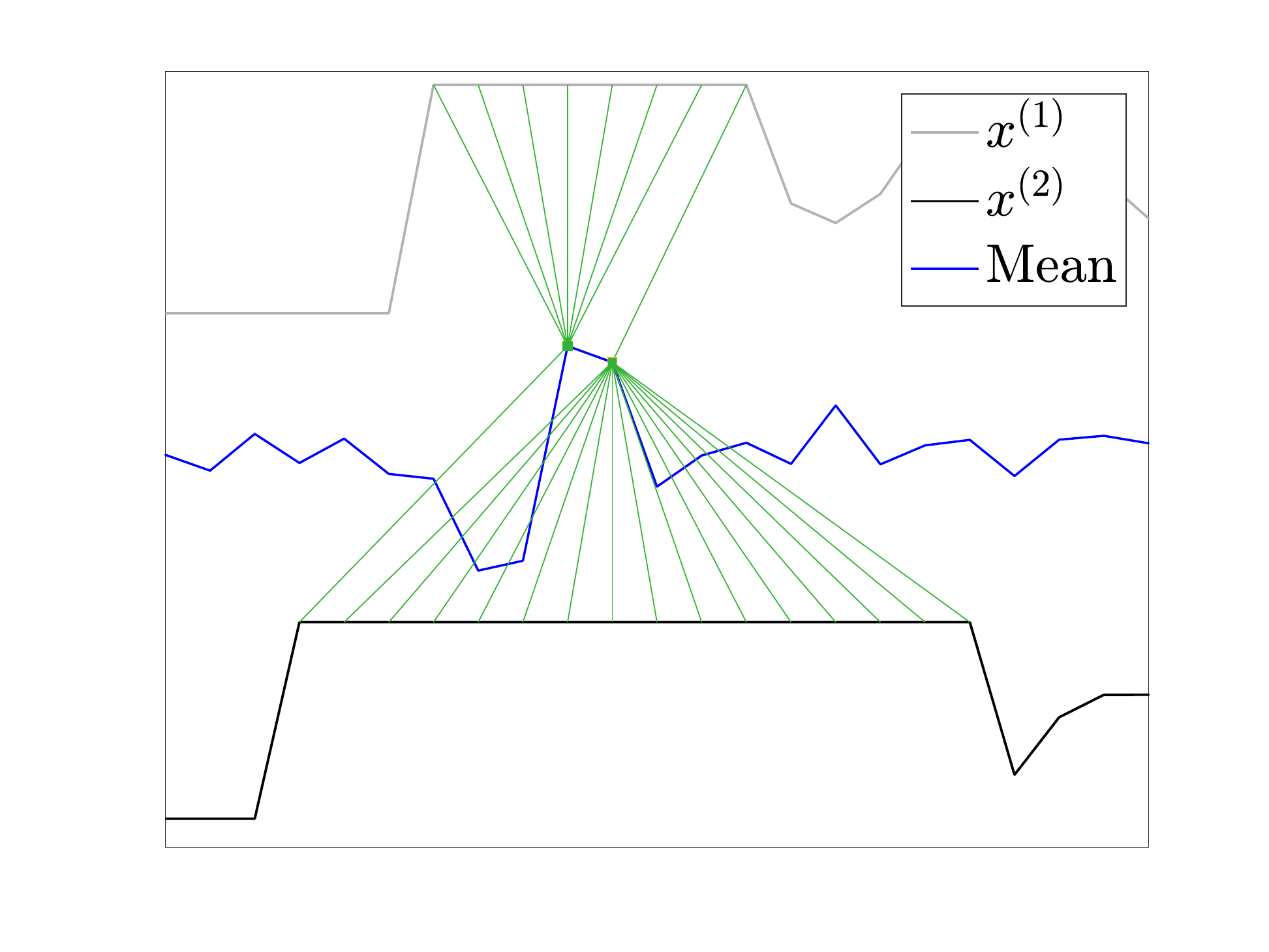}
\vspace*{-2em}
\caption{}
\end{subfigure}
\begin{subfigure}{0.49\textwidth}
\includegraphics[width=1\textwidth]{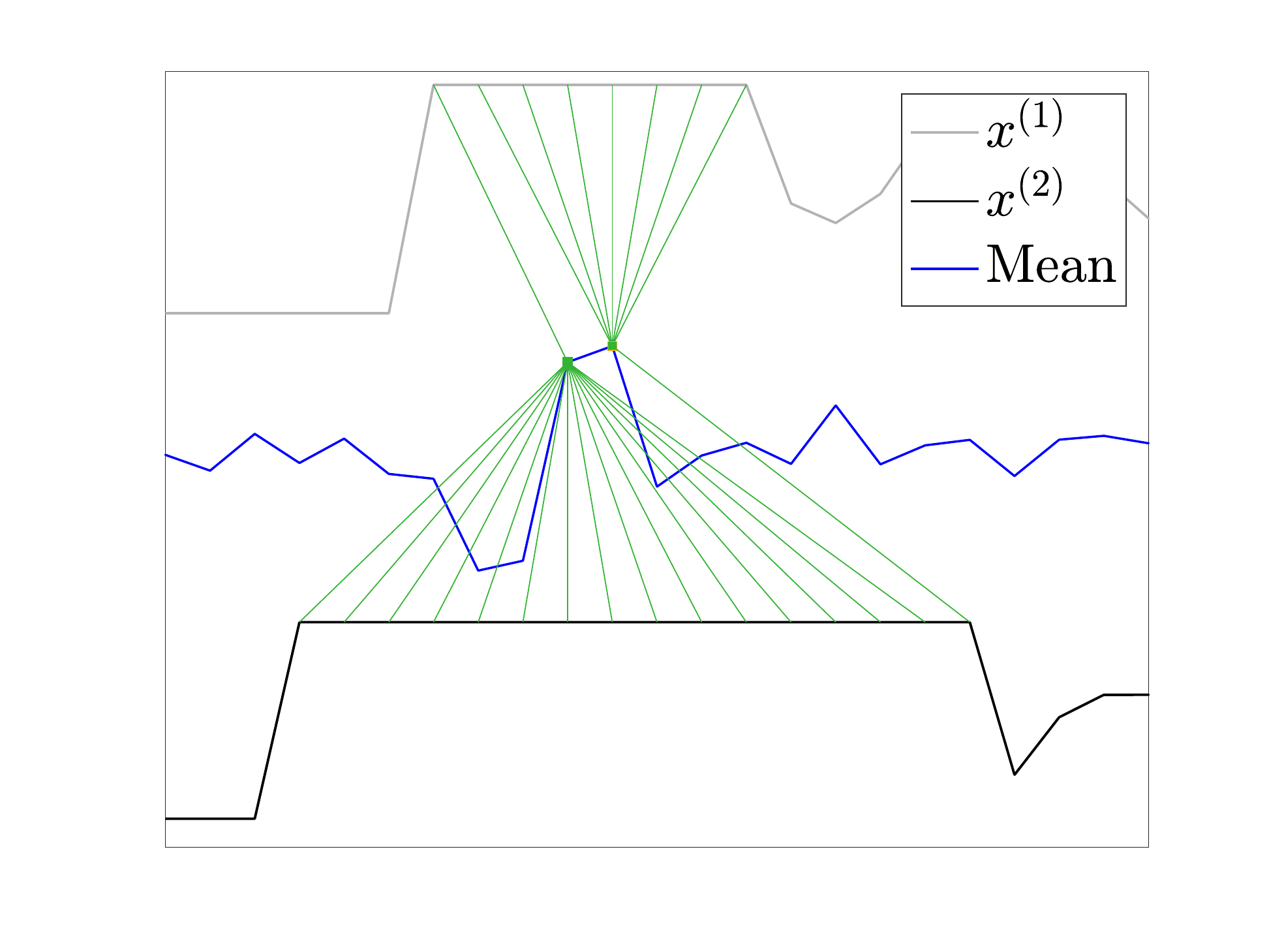}
\vspace*{-2em}
\caption{}
\end{subfigure}
\caption{Two time series $x^{(1)}$ and $x^{(2)}$ from the \emph{Two\_Patterns} data set with eight different condensed means. Plot (a) shows an example of a mean and corresponding optimal warping paths between the mean and both sample time series. The three colored warpings show regions that cause non-uniqueness. Plots (b) and (c) show a detailed view of the green warpings. In (b) the mean of (a) is depicted. In (c) another mean is shown. In (b) and (c) both plateaus of $x^{(1)}$ and $x^{(2)}$ are aligned with only two points in the mean; one point is aligned with all but one points of the plateau of $x^{(1)}$, the other point is aligned with all but one points of the plateau of $x^{(2)}$. The only difference in (b) and (c) is the order of the two mean points.
  Each pair of warped plateaus provides two possibilities to construct a mean. Consequently, the three warped pairs of plateaus in (a) induce $2^3 = 8$ condensed means.}
\label{fig:exp_TwoPatterns}
\end{figure*}

A notable exception is the (simulated) \emph{Two\_Patterns} data set with only $0.5\%$ of samples with a unique condensed mean. \Cref{fig:exp_TwoPatterns} shows two typical time series of this data set and describes why their condensed mean is not uniquely determined.
From the description it follows that warping of two plateaus can cause non-uniqueness. Such plateaus are common in the \emph{Two\_Patterns} data set which explains the low percentage of unique condensed means.

\subsubsection{Length}\label{sec:length}
For a given sample, the current state-of-the-art heuristics \cite{CB17,HNF08,PKG11,SJ17} approximate a mean by approximately solving a constrained \DTW problem. The constrained problem restricts the set of feasible solutions to the subset~$\S{T}_m$ of time series of length~$m$. The parameter~$m$ is typically chosen within the range of the lengths of the sample time series. 
The two main questions are whether the subset $\S{T}_m$ contains a mean at all and what the best possible approximation we can achieve is when constraining the solution set to~$\S{T}_m$?
We start with the first question.

\paragraph{Setup:}
We applied \Cref{algo:DTWMean} to all samples of type~$\S{S}_{\rw}$,~$\S{S}_{\rw}^k$, and~$\S{S}_{\ucr}$. For every sample $\mathcal{X}$, the set of all condensed means was computed and the lengths were recorded. We computed the \emph{length-deviation} 
\[
\delta_\mathcal{X} := 100\cdot \frac{m_{\cm}-n}{n},
\]
where~$n$ is the length of the sample time series and~$m_{\cm}$ is the average length of all condensed means of $\mathcal{X}$. Negative (positive) values of~$\delta_\mathcal{X}$ mean that~$m_{\cm}$ is~$|\delta_\mathcal{X}|$ percent smaller (larger) than~$n$.

\paragraph{Results and Discussion:}

\begin{figure*}[t]
\begin{subfigure}{0.49\textwidth}
\includegraphics[width=\textwidth]{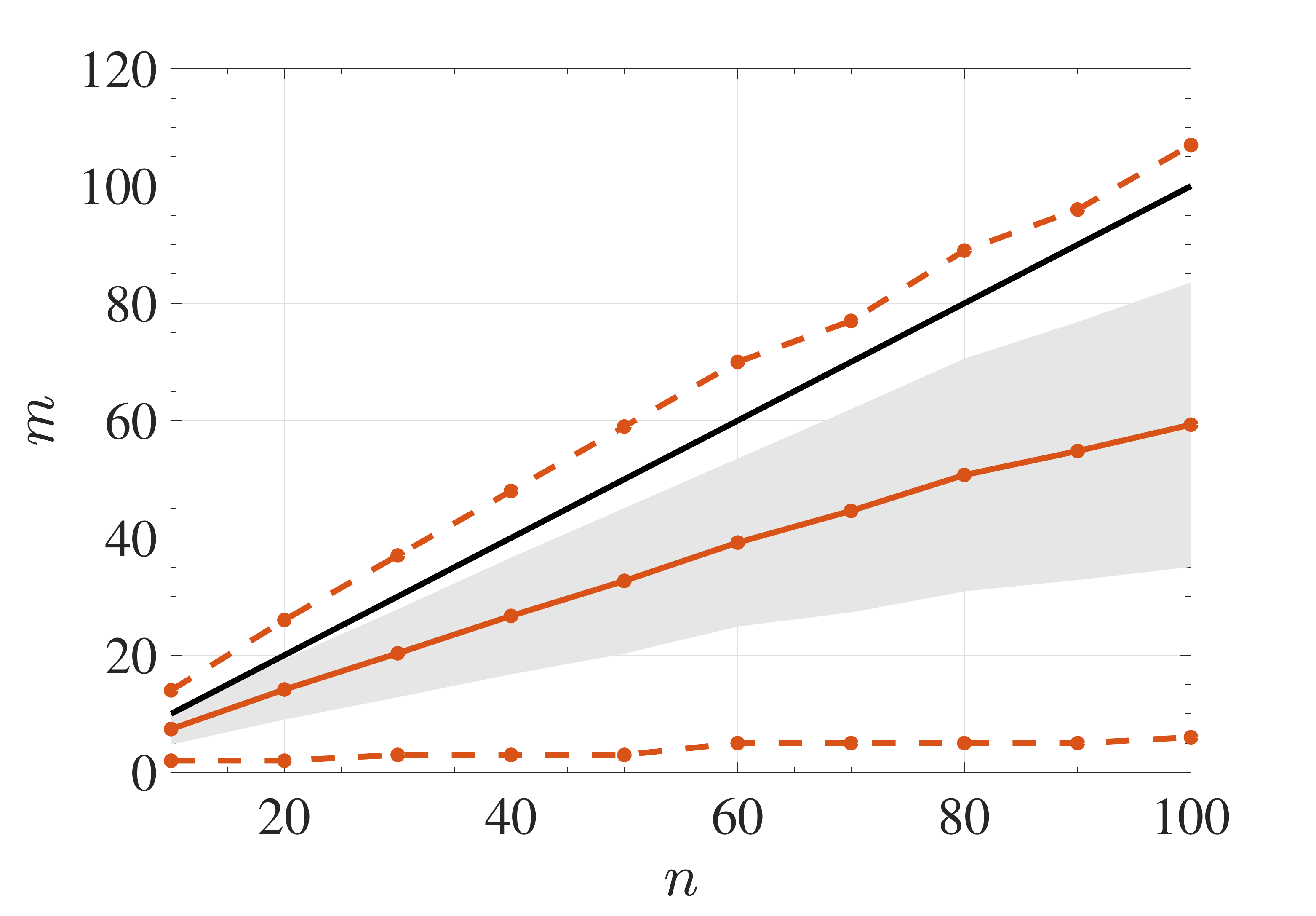}
\end{subfigure}
\hfill
\begin{subfigure}{0.49\textwidth}
\includegraphics[width=\textwidth]{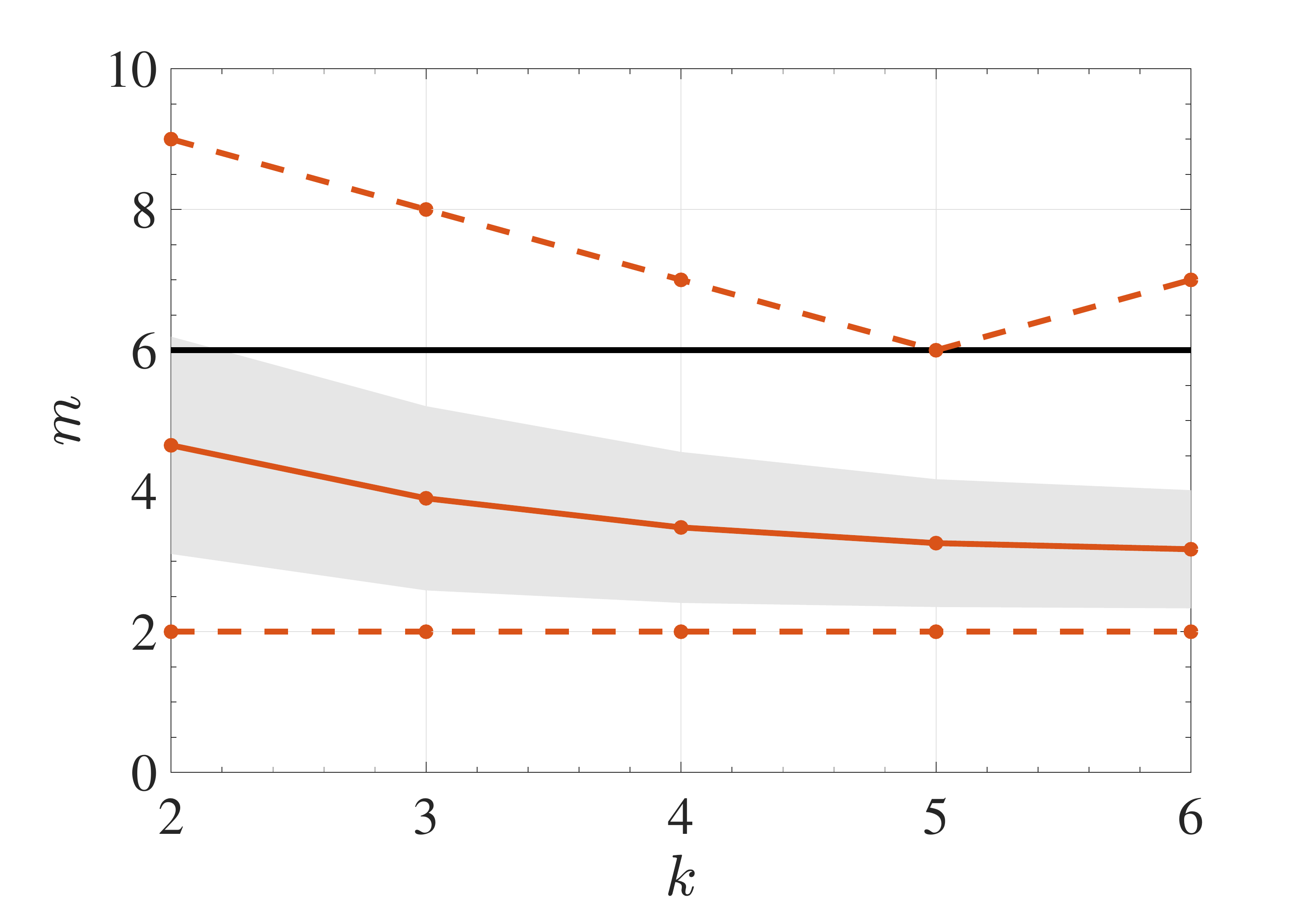}
\end{subfigure}
\caption{Average length $m$ of condensed means (red lines) as functions of the length of sample time series $n$ (left) and sample size $k$ (right). 
The gray shaded areas depict the standard deviations and the dashed red lines the minimum and maximum length of condensed means. 
The black lines show the length of the sample time series.}
\label{fig:exp_rw_length-deviation}
\end{figure*}

\Cref{tab:res_ucr_cm} and \Cref{fig:exp_rw_length-deviation} summarize the results. To discuss the results, we consider the \emph{constrained Fr\'echet variation}
\[
F_m^* =  \min_{z \in \S{T}_m} F(z)
\] 
for every length $m \in \N$.

The first observation to be made is that different condensed means of the same sample have the same length for all $42,000$ samples. Recall that non-condensed means occur only rarely (see column~$P_{\ncm}$ of \Cref{tab:res_ucr_cm}). Thus, means of a sample $\mathcal{X}$ are typically condensed with identical length~$m_{\cm}$. Under these conditions, the function~$F_m^*$ has a unique minimum at~$m_{\cm}$. Thus, for a majority of samples, the solution space $\S{T}_m$ does not contain a mean for any $m \neq m_{\cm}$. 
This gives rise to the question of how the length $m_{\cm}$ of condensed means is related to the length~$n$ of the sample time series. We observed the following:
\begin{enumerate}
\item The average length of condensed means is (substantially) less than the length of the sample time series for all but two UCR data sets (see column~$\overline{\delta}$ of \Cref{tab:res_ucr_cm}). 
\item The average length $m_{\cm}$ of condensed means increases with increasing length $n$ of random walks (see \Cref{fig:exp_rw_length-deviation} (left)). The best linear fit in a squared error sense has slope $0.58$ indicating that $m_{\cm}$ increases slower than $n$. No correlation between average length-deviations and length of sample time series could be observed across different UCR data sets.
\item The average length $m_{\cm}$ of condensed means decreases with increasing sample size $k$ of random walks (see \Cref{fig:exp_rw_length-deviation} (right)).
\end{enumerate}

The above observations have the following implications: Most state-of-the-art heuristics were tested on sample time series with identical length~$n$ using~$m = n$ as predefined mean length~\cite{CB17,PKG11,SJ17}. The empirical findings suggest that the length of a condensed mean is substantially shorter than the standard setting $m = n$. Consequently, in most cases, the solution space $\S{T}_n$ of state-of-the-art methods does not contain a mean. Thus, setting the mean length equal to the length of the input time series introduces a \emph{structural} error that can not be overcome by any solver of the corresponding constrained \DTW problem. 
The results suggest to consider mean lengths which are smaller than $n$.
Note that in this case the computation time of mean-algorithms should decrease.

We now address the second main question, that is, 
how the choice of a mean length $m$ affects the structural error
$\varepsilon_m := F_m^* - F_*$, where~$F_m^* = \min_{z \in \S{T}_m} F(z)$ is the constrained and~$F_* = \min_{z \in \S{T}} F(z)$ is the unconstrained Fr\'echet variation.
To estimate the structural error, we study the constrained Fr\'echet variation $F_m^*$ as a function
of $m$. 

\paragraph{Setup:}
For every~$n \in \cbrace{5, 10, 15, \ldots, 40}$ we generated~$1,000$ pairs of random walks. We applied \Cref{algo:DTWMean} to all samples and computed the constrained Fr\'echet variations~$F_m^*$ for all~$m \in \{1,\ldots,60\}$.  

\paragraph{Results and Discussion:}

\begin{figure*}[t]
\centering
\includegraphics[width=0.65\textwidth]{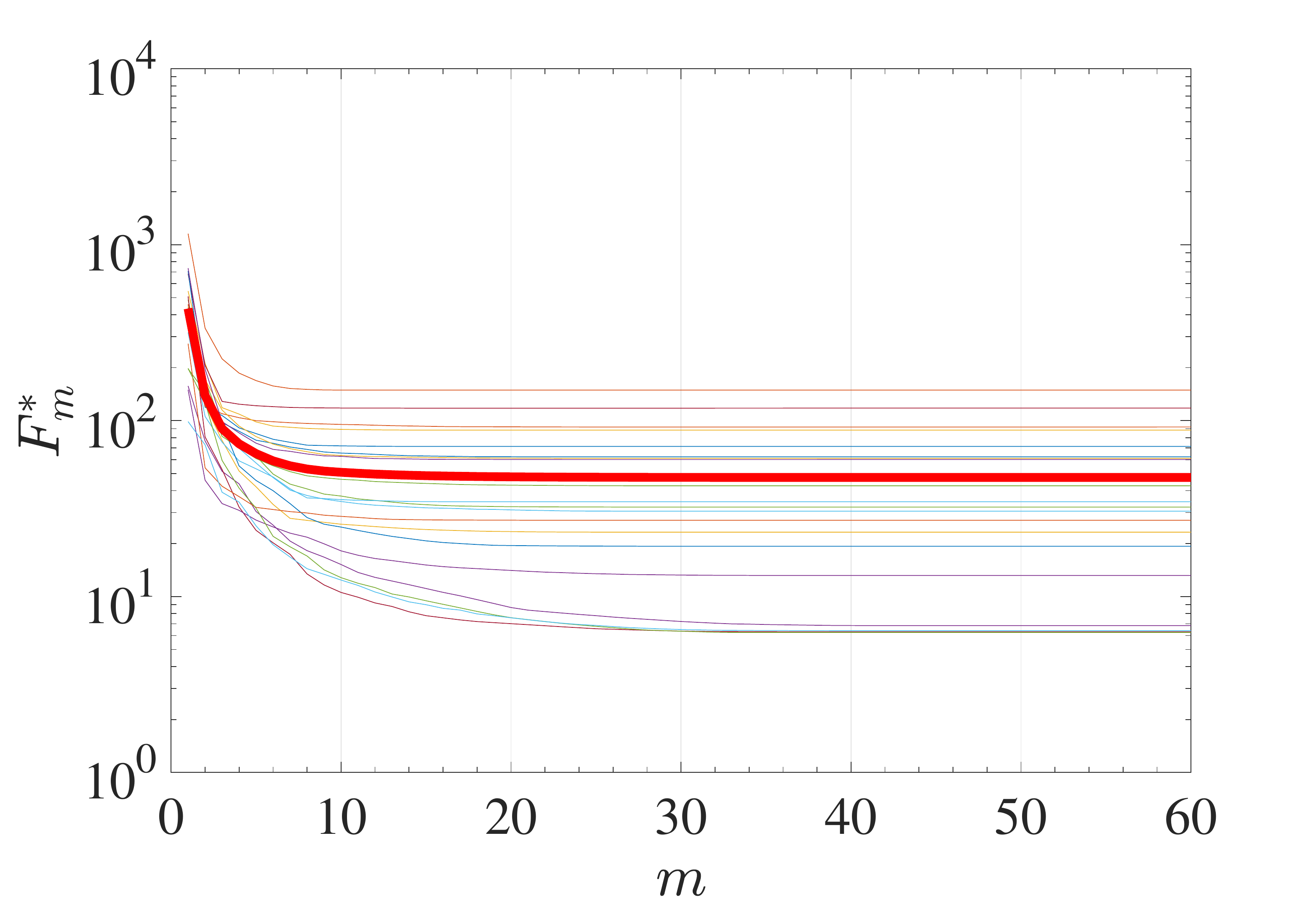}
\caption{Constrained Fr\'echet variations~$F_m^*$ as functions of length~$m\in\{1,\ldots,60\}$ shown in lin-log scale. 
The functions~$F_m^*$  were computed for~$1,000$ samples consisting of two random walks of length $n = 40$. The plot depicts~$20$ randomly selected functions~$F_m^*$. The highlighted red line is the average function~$\overline{F}_m^{\,*}$ over all~$1,000$ functions.}
\label{fig:exp_rwK2LFn_20}
\end{figure*}

\begin{figure*}[t!]
\centering
\begin{subfigure}{0.8\textwidth}
\includegraphics[width=\textwidth]{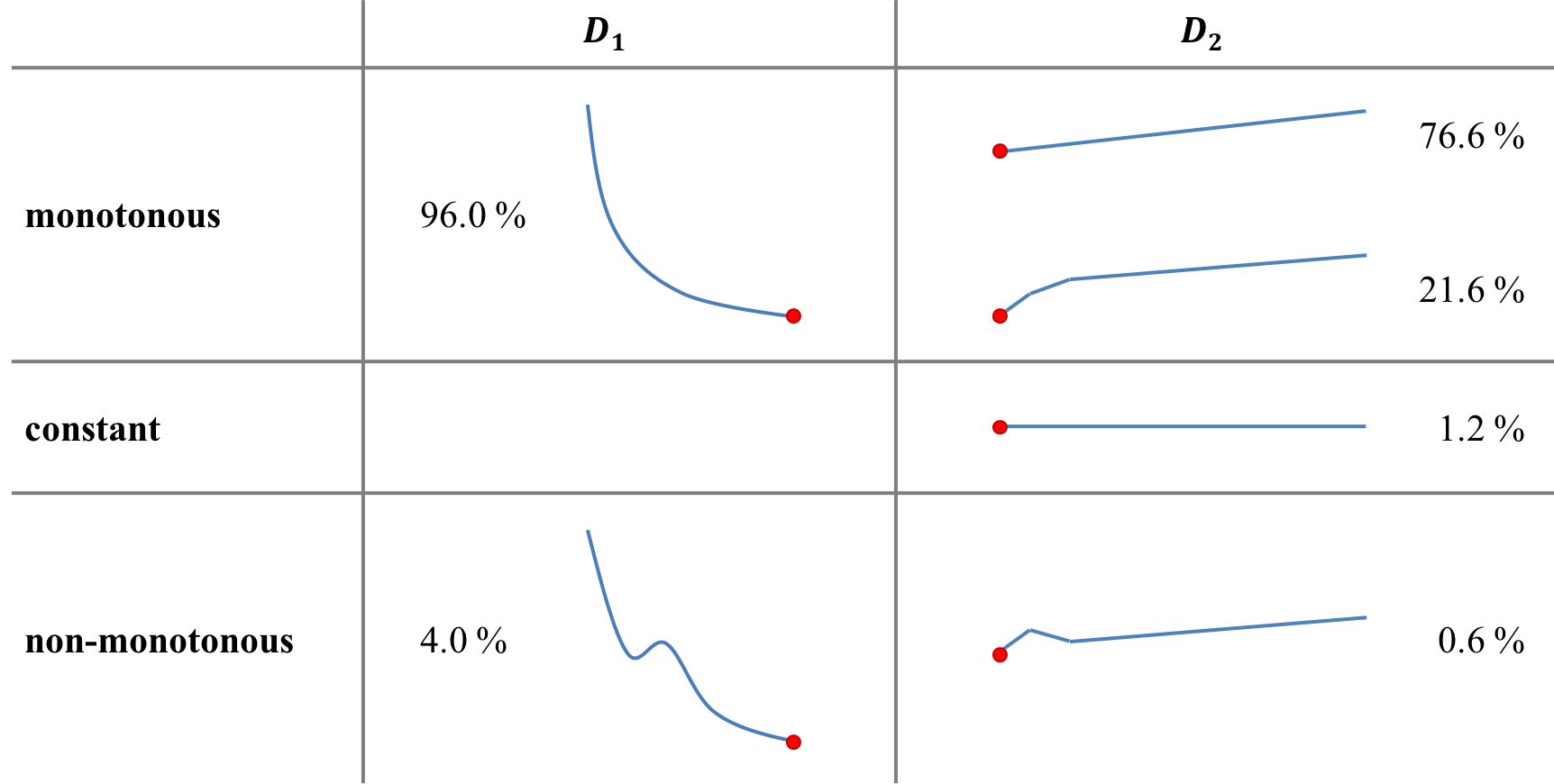}
\caption{Characteristic behavior of~$F_m^*$ on~$\S{D}_1$ and~$\S{D}_2$.}
\end{subfigure}

\vspace{0.5cm}

\begin{subfigure}{0.8\textwidth}
\includegraphics[width=\textwidth]{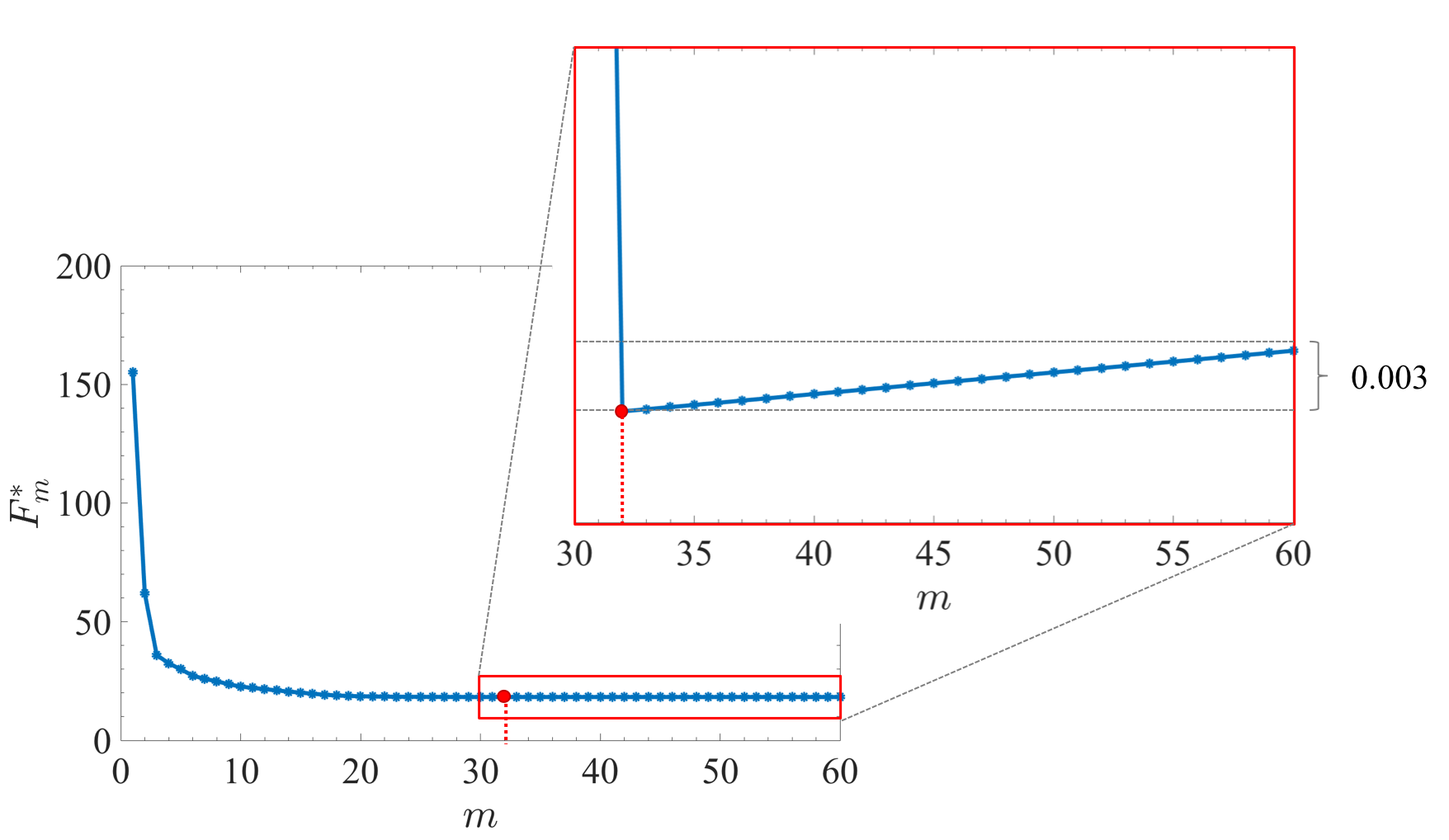}
\caption{Typical shape of $F_m^*$.}
\end{subfigure}
\caption{Plot (a) sketches the main characteristics of all observed shapes of~$F_m^*$ on~$\S{D}_1$ and~$\S{D}_2$ together with their percentage frequency. Sketches exaggerate bumps and slopes to better highlight the main features. Plot (b) shows a typical example of the most common shape of~$F_m^*$ for a sample of two random walks of length $n = 40$. The function~$F_m^*$ strictly decreases on~$\S{D}_1$ until it reaches its global minimum at $m_{\cm} = 32$ as highlighted by the red circle. Then the function~$F_m^*$ strictly increases on~$\S{D}_2$. The best linear fit of~$F_m^*$ on~$\S{D}_2$ has a very small slope of approximately~$10^{-5}$. }
\label{fig:exp_rwK2LFn}
\end{figure*}

\Cref{fig:exp_rwK2LFn_20,fig:exp_rwK2LFn} depict the main characteristics of the functions~$F_m^*$. 
As before, condensed means of the same sample always had identical length~$m_{\cm}$. We use this result for our discussion on the shape of~$F_m^*$ by decomposing its domain~$\S{D}=\{1,\ldots,60\}$ into two sub-domains $\S{D}_1 = \cbrace{1, \ldots, m_{\cm}}$ and $\S{D}_2 = \cbrace{m_{\cm}, \ldots, 60}$.

We observed the following typical behavior: The function $F_m^*$ first rapidly decreases on~$\S{D}_1$ until it reaches its global minimum at $m_{\cm}$ and then increases on~$\S{D}_2$ with a small slope. The shape of $F_m^*$ roughly resembles an exponential decay on~$\S{D}_1$ followed by a linear tail on~$\S{D}_2$. 
\Cref{fig:exp_rwK2LFn}~(a) categorizes all observed shapes of~$F_m^*$ on~$\S{D}_1$ and~$\S{D}_2$. The most common shape of~$F_m^*$ is strictly monotonous on~$\S{D}_1$ (96.0\%) and~$\S{D}_2$ (98.2 \%). \Cref{fig:exp_rwK2LFn}~(b) presents a typical example of~$F_m^*$. Non-monotonous or constant parts in a curve occur only rarely. These findings indicate that the function~$F_m^*$ is strictly convex in most cases. This implies that the further away $m$ is from the global minimum $m_{\cm}$, the larger the structural error~$\varepsilon_m$ is. 

\Cref{tab:exp_rwK2LFn} summarizes the slopes of the best linear fits of~$F_m^*$ on~$\S{D}_2$ and the error percentages $E = 100 \cdot(F_n^*-F_*)/F_*$ of the constrained Fr\'echet variation at parameter $m = n$, where $n$ is the length of the sample time series. 
The median slopes on~$\S{D}_2$ are low $a_{\med} \approx 0.0034$ for $n = 5$ and tend to further decrease with increasing length $n$ to  $a_{\med} \approx 0.0001$ for $n = 40$. As a consequence of the low median slopes, the median error percentages decline from $E_{\med} \approx 0.42$ to less than $E_{\med} \approx 0.01$ for increasing $n$. These findings indicate that the common practice of setting the parameter $m$ of the constrained \DTW problem to the length $n$ of the sample time series results in a low structural error on median. 
Due to the low median slopes on~$\S{D}_2$, solving the constrained \DTW problem instead of the unconstrained \DTW problem can in principle give good approximations for values $m > m_{\cm}$.
The average and maximum values for the slopes and error percentages, however, show that there are outliers that result in large structural error between $37 \%$ and $120 \%$.

\begin{table}
\centering
\begin{tabular}{r@{\qquad}rrrr@{\qquad}rrrr}
\toprule
 n & $a_{\med}$ & $a_{\avg}$ & $a_{\std}$ & $a_{\max}$ & $E_{\med}$ & $E_{\avg}$ & $E_{\std}$ & $E_{\max}$ \\
 \midrule
 5 & 0.0034 & 0.12 & 0.36 & 3.78 & 0.423 & 4.43 & 9.59 & 67.40 \\
10 & 0.0008 & 0.05 & 0.25 & 3.37 & 0.057 & 2.13 & 8.50 & 119.42\\
15 & 0.0005 & 0.05 & 0.24 & 2.80 & 0.028 & 1.49 & 7.30 & 92.65\\
20 & 0.0002 & 0.03 & 0.14 & 1.83 & 0.012 & 0.69 & 3.36 & 40.22\\
25 & 0.0002 & 0.02 & 0.14 & 3.15 & 0.008 & 0.47 & 2.77 & 37.70\\
30 & 0.0001 & 0.01 & 0.10 & 1.63 & 0.004 & 0.37 & 2.42 & 40.37\\
35 & 0.0001 & 0.01 & 0.06 & 1.35 & 0.003 & 0.20 & 2.22 & 52.44\\
40 & 0.0001 & 0.01 & 0.12 & 2.16 & 0.002 & 0.28 & 2.24 & 51.20\\
\bottomrule
\end{tabular}
\caption{Slopes $a$ of best linear fits of the second part of $F_m^*$ and error percentages $E$ of the structural error. Subscripts~$\med$, $\avg$, $\std$, $\max$ refer to the median, average, standard deviation, and maximum, respectively, of $1,000$ pairs of random walks of length $n$.}
\label{tab:exp_rwK2LFn}
\end{table}

\subsubsection{Shape}

One problem that is often associated with dynamic time warping is shape averaging. \citet{NR09} proposed a shape averaging algorithm that uses dynamic time warping, but differs from minimizing the Fr\'echet function. \citet{sun2017} claimed that their approach of minimizing the Fr\'echet function preserves shape characteristics. \citet{CB17} criticized that solutions found by DBA may have a shape that is not representative for the sample time series. 

It is well-known that dynamic time warping tends to warp similar shapes (subsequences) of two time series onto each other. 
Hence, at first glance one may expect that a mean preserves shapes. 
However, the commonly and here considered definition of a mean is statistically motivated.
We clarify in this section that a mean in DTW spaces does not preserve shapes in general.

\paragraph{Setup:}
We selected two sample time series from the UCR data set \emph{ECG200}. Both time series were normalized with zero mean and standard deviation one. We applied \Cref{algo:DTWMean} and plotted the sample time series and their mean.

\begin{figure}[t]
\centering
\begin{subfigure}[c]{0.495\textwidth}
\includegraphics[width=1\textwidth]{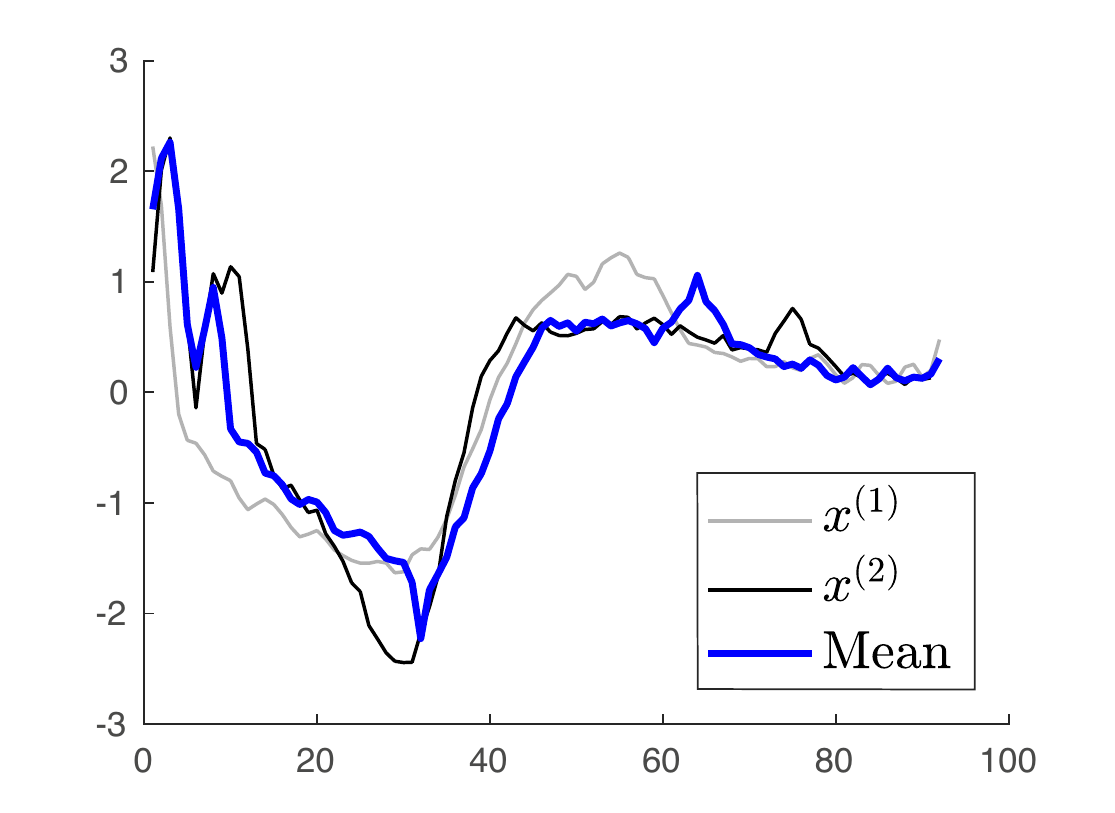}
\subcaption{Original time series $x^{(1)} = x^{(1)}_\text{o}$.}
\end{subfigure}
\hfill
\begin{subfigure}[c]{0.495\textwidth}
\includegraphics[width=1\textwidth]{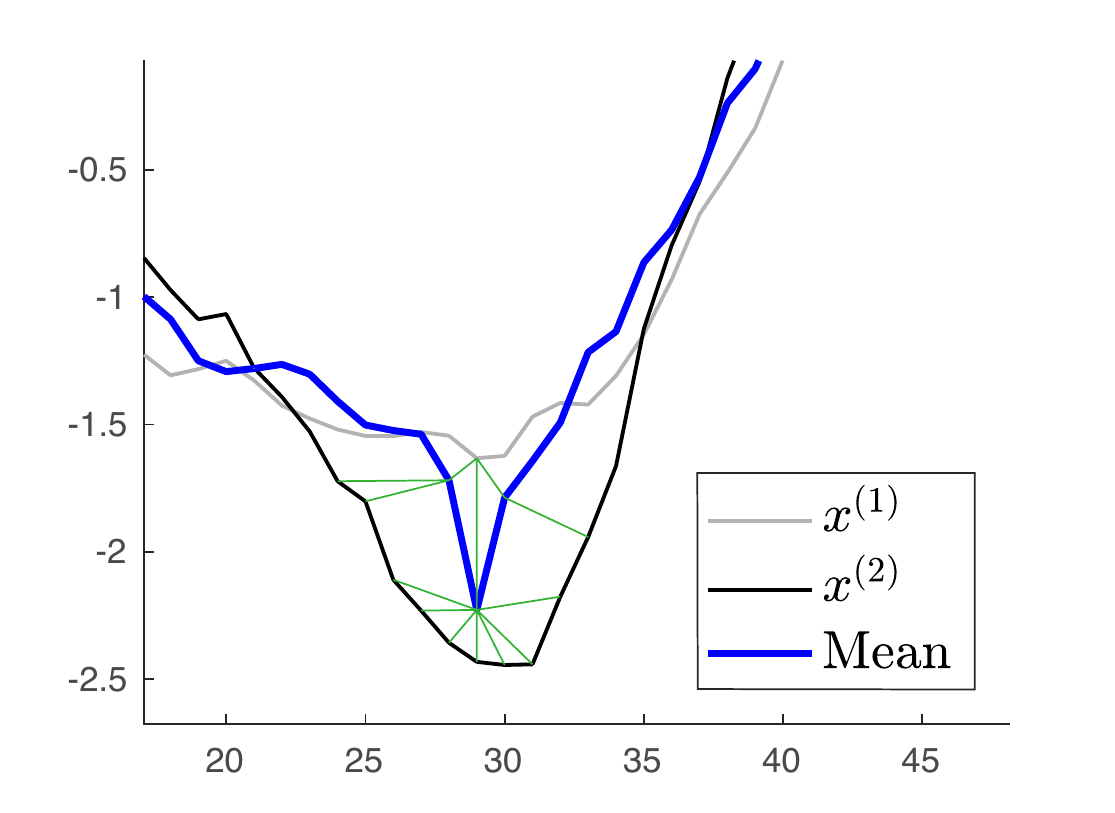}
\subcaption{Detailed view and warpings (green). }
\end{subfigure}
\caption{Mean with kink for two time series $x^{(1)}$ and $x^{(2)}$. Plot (b) shifts the mean along the time axis for clarification.}
\label{fig:kinks}
\end{figure}

\begin{figure}[t]
\begin{subfigure}[c]{0.495\textwidth}
\includegraphics[width=1\textwidth]{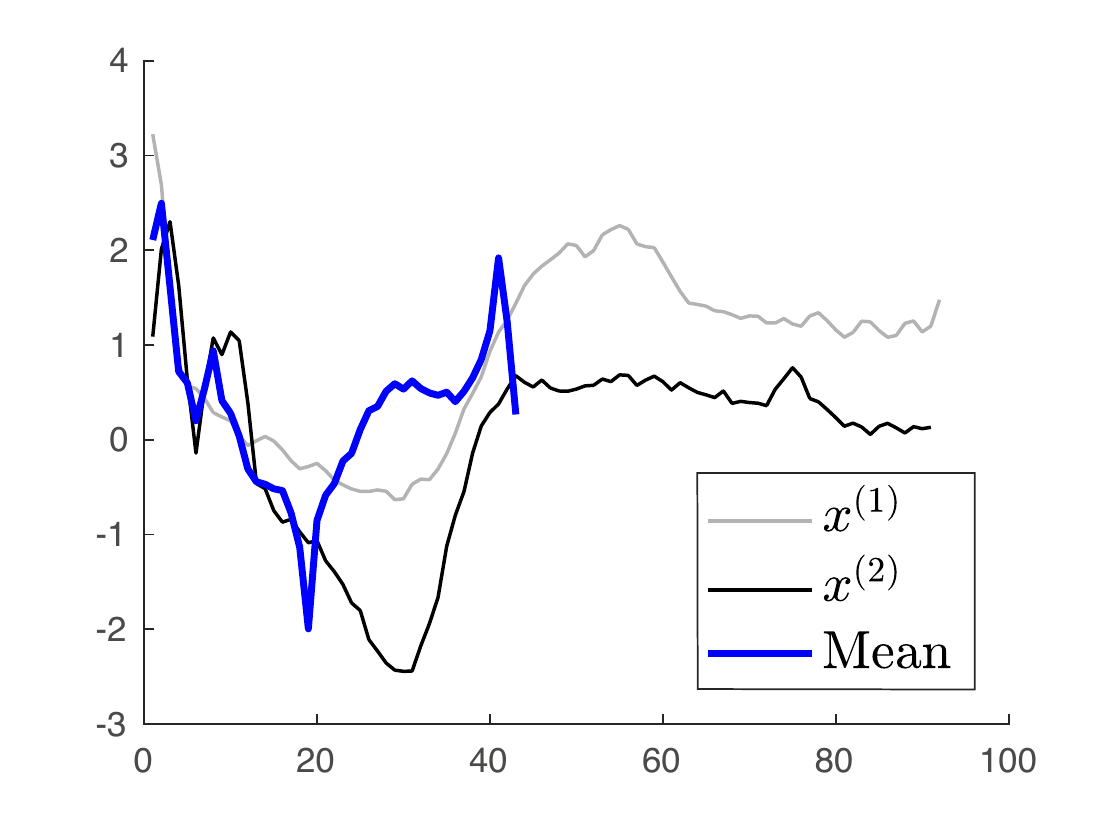}
\subcaption{Shifted time series $x^{(1)} =  x^{(1)}_\text{o} +1$.}
\end{subfigure}
\hfill
\begin{subfigure}[c]{0.495\textwidth}
\includegraphics[width=1\textwidth]{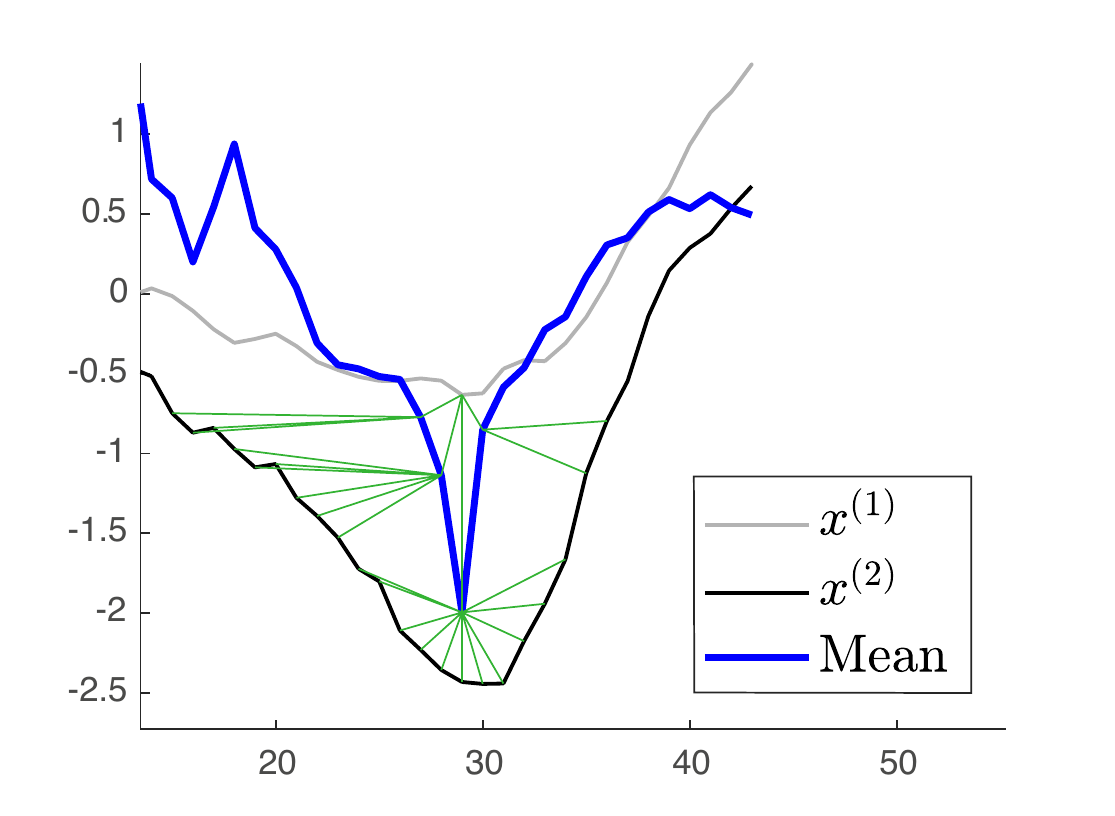}
\subcaption{Detailed view and warpings (green).}
\end{subfigure}
\caption{Mean for the two time series from \Cref{fig:kinks}, but $x^{(1)}$ is shifted along the $y$-axis. Plot (b) shifts the mean along the time axis for clarification.}
\label{fig:kinks2}
\end{figure}

\paragraph{Results and Discussion:}
\Cref{fig:kinks,fig:kinks2} show a mean of the two sample time series $x^{(1)}$ and $x^{(2)}$. In \Cref{fig:kinks}~(a) the mean has a small kink around time index $30$. This kink was also observed in the solution found by DBA in a similar experiment~\cite{CB17}. \citet{CB17} criticized that this kink is not present in any of the sample time series. We now can conclude that presence of kinks is a peculiarity of means in DTW spaces, rather than a property of the heuristics. \Cref{fig:kinks}~(b) illustrates the warpings near the kink. The kink can be explained as follows: Both sample time series have broad valleys but on a different scale. In this case, the lowest point of the upper time series~$x^{(1)}$ is warped through mean elements onto multiple points of the lower time series~$x^{(2)}$, which results in a kink in the mean.

To strengthen this effect, we shifted $x^{(1)}$ upwards in \Cref{fig:kinks2}. As expected, the kink becomes larger, because more points of $x^{(2)}$ are warped onto a single point of $x^{(1)}$ through mean elements. Moreover, the length of the mean reduces significantly. We conclude that a mean is not necessarily a suitable representative of the sample time series in terms of shape. Normalization of the sample time series may help to reduce this effect.

\subsubsection{Summary}\label{sec:summary1}

The main findings of our first series of experiments are:
\begin{enumerate}
\item A mean of a sample is usually condensed, unique and shorter than the sample time series.
\item Choosing a mean length that is larger than the optimal value induces only a small structural error in most cases.
\item A mean does not necessarily preserve shapes of the sample time series.
\end{enumerate}
The first two points show that in practice one might hope not to encounter critical problems
by non-unique means or by selecting a wrong mean length.
The third point suggests that smoother shapes of a mean require different objective functions to be minimized or different distance measures.

\subsection{Performance of Heuristics}

The goal of this series of experiments is to assess the performance of state-of-the-art heuristics in terms of minimizing the Fréchet variation. 

\subsubsection{Experimental Setup}\label{subsubsec:exp:algorithms}

\paragraph{Algorithms:}

\Cref{tab:algs} lists the mean-algorithms considered in this experiment.
\begin{table}[h!]
\centering
\begin{tabular}{llc}
\toprule
Algorithm & Acr. & Ref. \\
\midrule
Exact dynamic programming & \aedp & Alg.~\ref{algo:DTWMean}\\ 
Multiple alignment & \asym & \cite{HNF08,PG12}\\
DTW barycenter averaging & \adba & \cite{PKG11}\\
Soft-dtw & \asoft & \cite{CB17}\\
Batch subgradient & \absg & \cite{CB17,SJ17}\\
Stochastic subgradient & \assg & \cite{SJ17}\\
\bottomrule
\end{tabular}
\caption{List of algorithms compared in our experiments.}
\label{tab:algs}
\end{table}

We implemented \aedp and \asym (for two time series) in Java. For  \asoft  and \absg we used the python implementation provided by~\citet{BlondelCode17}. For \adba and \assg we used the implementation in Java \cite{SJcode16}.

\paragraph{Setup:}

All six algorithms were applied to $27,000$ samples of type $\S{S}_{\ucr}$ and to $10,000$ samples of type $\S{S}_{\rw}$. For the $5,000$ samples of type $\S{S}_{\rw}^k$ all mean algorithms but \asym were applied. 

\aedp and \asym required no parameter optimization. For the other algorithms, the following settings were used:
We optimized the algorithms on every sample using different parameter configurations and reported the best result. Every configuration was composed of an initialization method and an optional parameter selection. As initialization we used (i) the arithmetic mean of the sample, (ii) a random time series from the sample, and (iii) a random time series drawn from a normal distribution with zero mean and standard deviation one. \adba and \absg required no additional parameter selection. For \asoft, we selected the best smoothing parameter~$\gamma\in\{0.001, 0.01, 0.1, 1\}$ and for \assg the best initial step size~$\eta_0\in\{0.25, 0.2, 0.15, 0.1\}$. The step size of \assg  was linearly decreased from $\eta_0$ to $\eta_0/10$ within the first $100$~epochs and then remained constant for the remaining $100$ epochs (one epoch is a full pass through the sample). Thus, for \adba and \absg, we picked the best result from the three initialization methods for each sample, and for \asoft and \assg we picked the best result from twelve parameter configurations (three initializations $\times$ four parameter values). The length of the mean was set beforehand to the length of the sample time series (time series of a sample always have identical length). For every sample and every parameter configuration, all algorithms terminated after $200$ epochs at the latest. The tolerance parameter of \asoft  and \absg was set to~$\varepsilon = 10^{-6}$.

\paragraph{Performance Metric:}
We consider error percentages to assess the solution quality. The error percentage of algorithm $A$ for sample $\S{S}$ is defined by
\[
E = 100\cdot (F_A - F_*)/F_*,
\] 
where~$F_A$ is the solution obtained by algorithm~$A$ and~$F_* = \min_{z\in\T} F(z)$ is the optimal solution obtained by \aedp.

\subsubsection{Results and Discussion}

\Cref{fig:exp_ucr_k2,fig:exp_rw_k2} depict the performance profiles of the algorithms and quantitative summaries of their error percentages for samples of type $\S{S}_{\ucr}$ and $\S{S}_{\rw}$, respectively.\footnote{Appendix \ref{app:pp} describes performance profiles in more detail.}
The performance profiles exhibit the same pattern in general but differ by some shifts in the curves which is due to different data domains.
(Appendix \ref{app:results} presents the results in more detail.)

\begin{figure*}[t!]
\begin{subfigure}{0.54\textwidth}
\includegraphics[width=\textwidth]{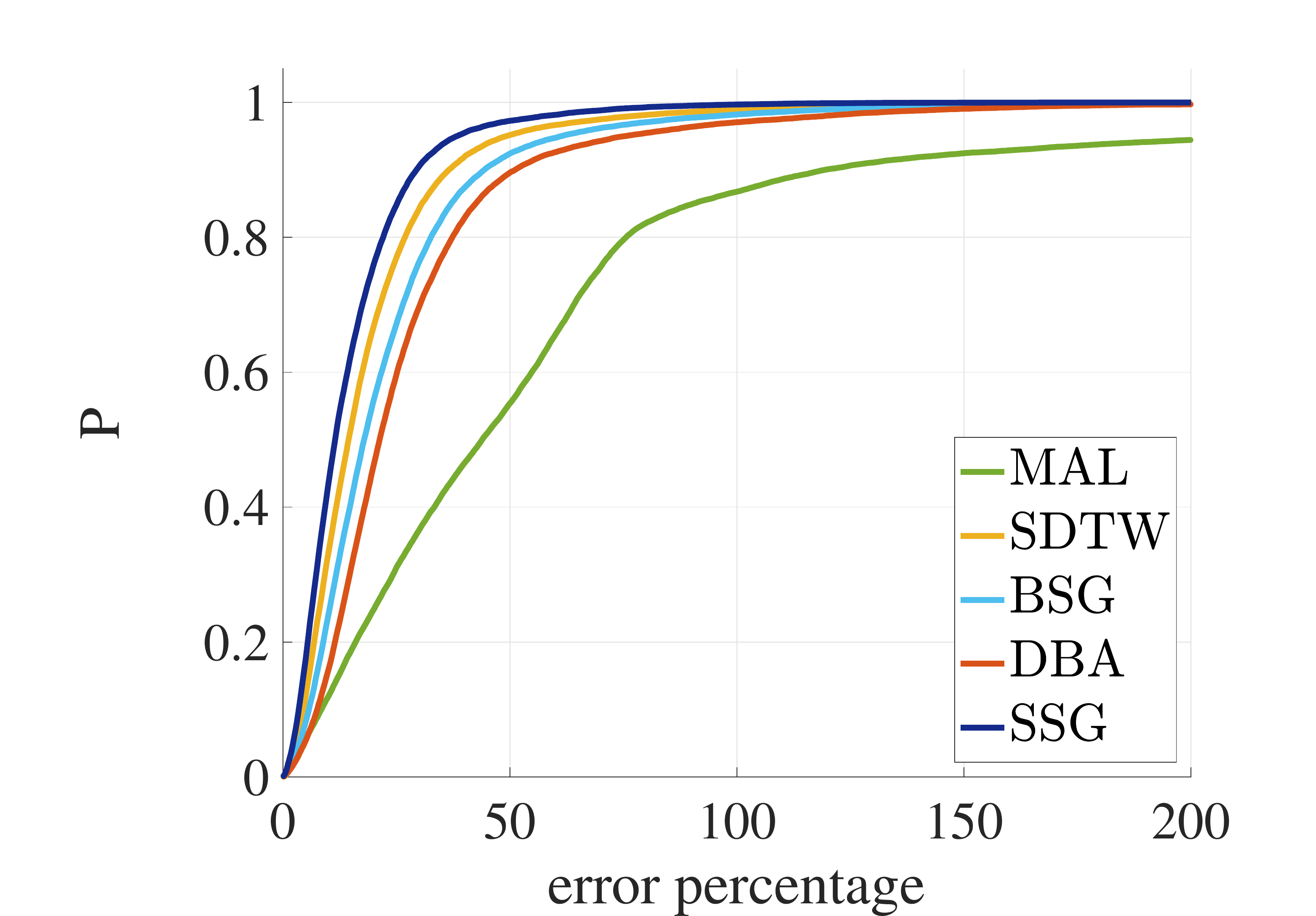}
\end{subfigure}
\begin{subfigure}{0.44\textwidth}
\scriptsize
\begin{tabular}[b]{l@{\qquad}rrrr}
\toprule
 	   &  avg &   std & max & eq \\
\midrule
\asym  & 68.3 & 107.8 & 1105.5 & 3 \\
\asoft & 19.1 &  18.5 &  381.9 & 0 \\
\absg  & 23.5 &  22.4 &  394.1 & 2 \\
\adba  & 28.1 &  27.3 &  437.0 & 0 \\
\assg  & 15.1 &  14.0 &  214.1 & 1 \\
\bottomrule
\end{tabular}
\end{subfigure}
\caption{Results for $\S{S}_{\ucr}$-samples. \textbf{Left:} Performance profiles of all heuristics (a larger area under the curve indicates better performance). The performance profile of the exact algorithm \aedp is the constant line~$P = 1$ and therefore not highlighted. The other performance profiles are truncated at $200\%$ error for the sake of presentation. 
A point~$(E_A, P_A)$ on the curve of an algorithm~$A$ states that solutions obtained by~$A$ deviate by at most~$E_A$ percent from the optimal solution with probability~$P_A$ (estimated over~$27{,}000$ different samples). \textbf{Right:} Average error percentage~(avg), standard deviation~(std), and maximum error percentage~(max) of the heuristics. The last column shows how often an optimal solution was found by the heuristic.}
\label{fig:exp_ucr_k2}
\end{figure*}

\begin{figure*}[t!]
\begin{subfigure}{0.54\textwidth}
\includegraphics[width=\textwidth]{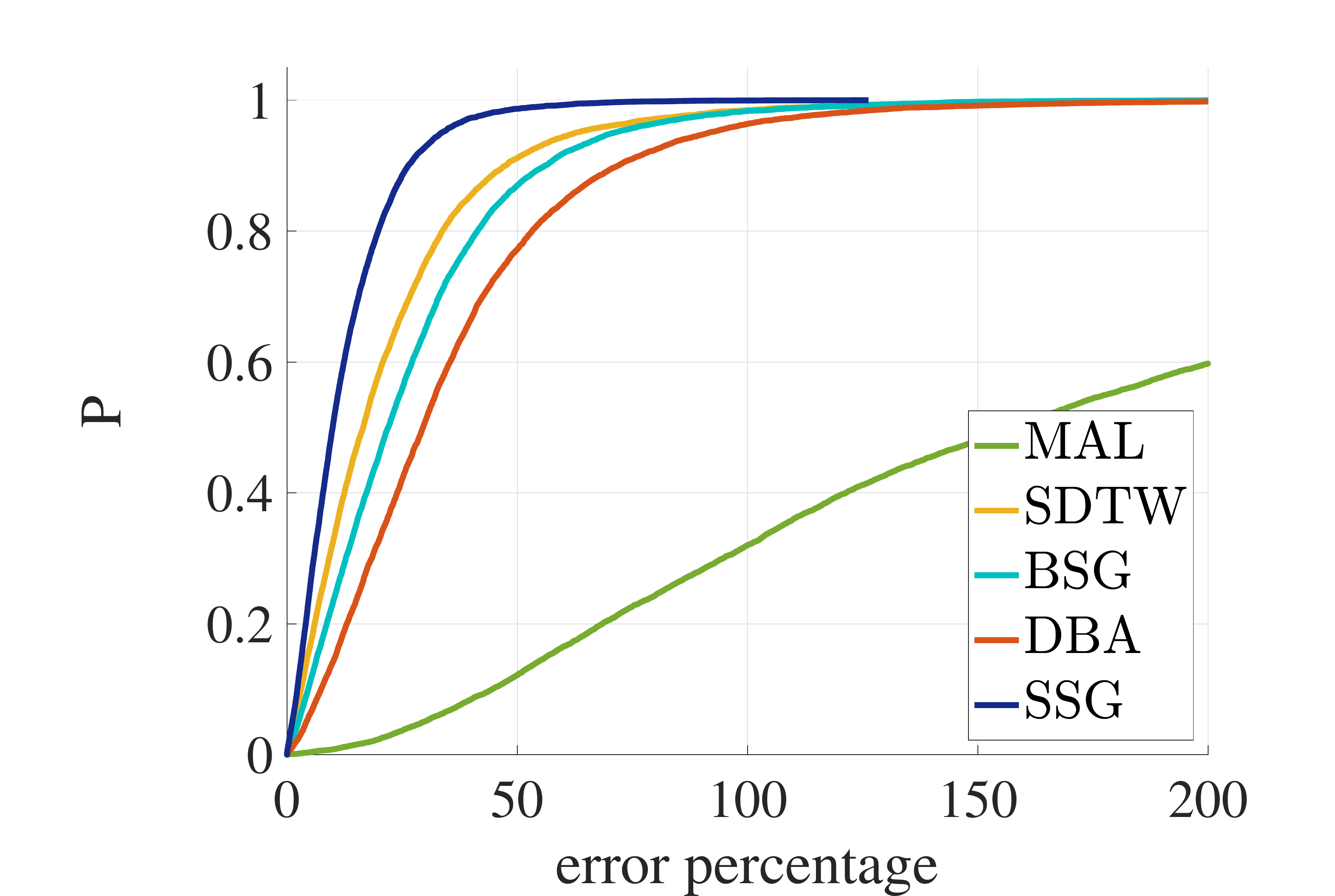}
\end{subfigure}
\begin{subfigure}{0.44\textwidth}
\scriptsize
\begin{tabular}[b]{l@{\qquad}rrrr}
\toprule
 	   &  avg &   std & max & eq \\
\midrule
\asym  & 210.7 & 174.7 & 1239.0 & 3 \\
\asoft &  22.5 &  22.7 &  254.9 & 4 \\
\absg  &  27.4 &  23.7 &  268.9 & 6 \\
\adba  &  36.0 &  29.5 &  361.9 & 3 \\
\assg  &  12.8 &  11.3 &  126.3 & 2 \\
\bottomrule
\end{tabular}
\end{subfigure}
\caption{Results for~$\S{S}_{\rw}$-samples. \textbf{Left:} Performance profiles of all heuristics. \textbf{Right:} Average error percentage~(avg), standard deviation~(std), and maximum error percentage~(max) of the heuristics. The last column shows how often an optimal solution was found by the heuristic.}
\label{fig:exp_rw_k2}
\end{figure*}

\paragraph{Performance of Multiple Alignment (MAL).}
\Cref{ssec:exact} refutes the claim that \DTW can be solved by a multiple alignment approach.
The remaining question is to which extent~\asym fails to solve \DTW.

\Cref{fig:exp_ucr_k2,fig:exp_rw_k2} show that~\asym exactly solved \DTW in only three out of~$27,000$ $\S{S}_{\ucr}$-samples and another three out of~$10,000$ $\S{S}_{\rw}$-samples. The average error percentages of~\asym on~$\S{S}_{\ucr}$ and~$\S{S}_{\rw}$ are~$68.3$ and~$210.7$, respectively. The error percentage of~\asym is larger than~$50 \%$ for more than~$40 \%$ of all~$\S{S}_{\ucr}$-samples and for roughly~$90 \%$ of all~$\S{S}_{\rw}$-samples. These observations suggest that~\asym is far from being optimal or competitive.
This may explain why algorithms based on pairwise mean computation via \asym such as NLAAF \cite{GMTS96} and PSA \cite{NR09} are not competitive~\cite{PKG11,Soheily-Khah2015}.

\paragraph{Performance of State-of-the-Art Heuristics.}
We discuss the performance of \asoft, \absg, \adba, and \assg. The best and most robust method is the stochastic subgradient method (\assg) with~$15.1 \%$ and $12.8 \%$ average error percentage on~$\S{S}_{\ucr}$- and~$\S{S}_{\rw}$-samples, respectively. Standard deviation and maximum error percentages of~\assg are lowest among all other heuristics.
Nevertheless, the solution quality of all heuristics is rather poor leaving much space for improvements.

\paragraph{Performance of Progressive Alignment (PSA).}
Progressive alignment methods combine pairwise averages beginning with the two most similar time series. In each step, two time series are averaged and replaced by their weighted average. The weights correspond to the number of sample time series involved in creating the time series. The currently best performing progressive alignment method for time series averaging is called prioritized shape averaging (PSA)~\cite{NR09}. This method applies a variant of \asym for pairwise averaging. Empirical results suggest that PSA is not competitive to \adba~\cite{PKG11,Soheily-Khah2015}. We evaluate the performance of a modified variant of PSA using \aedp for pairwise averaging.

\begin{figure}
\begin{subfigure}{0.49\textwidth}
\includegraphics[width=\textwidth]{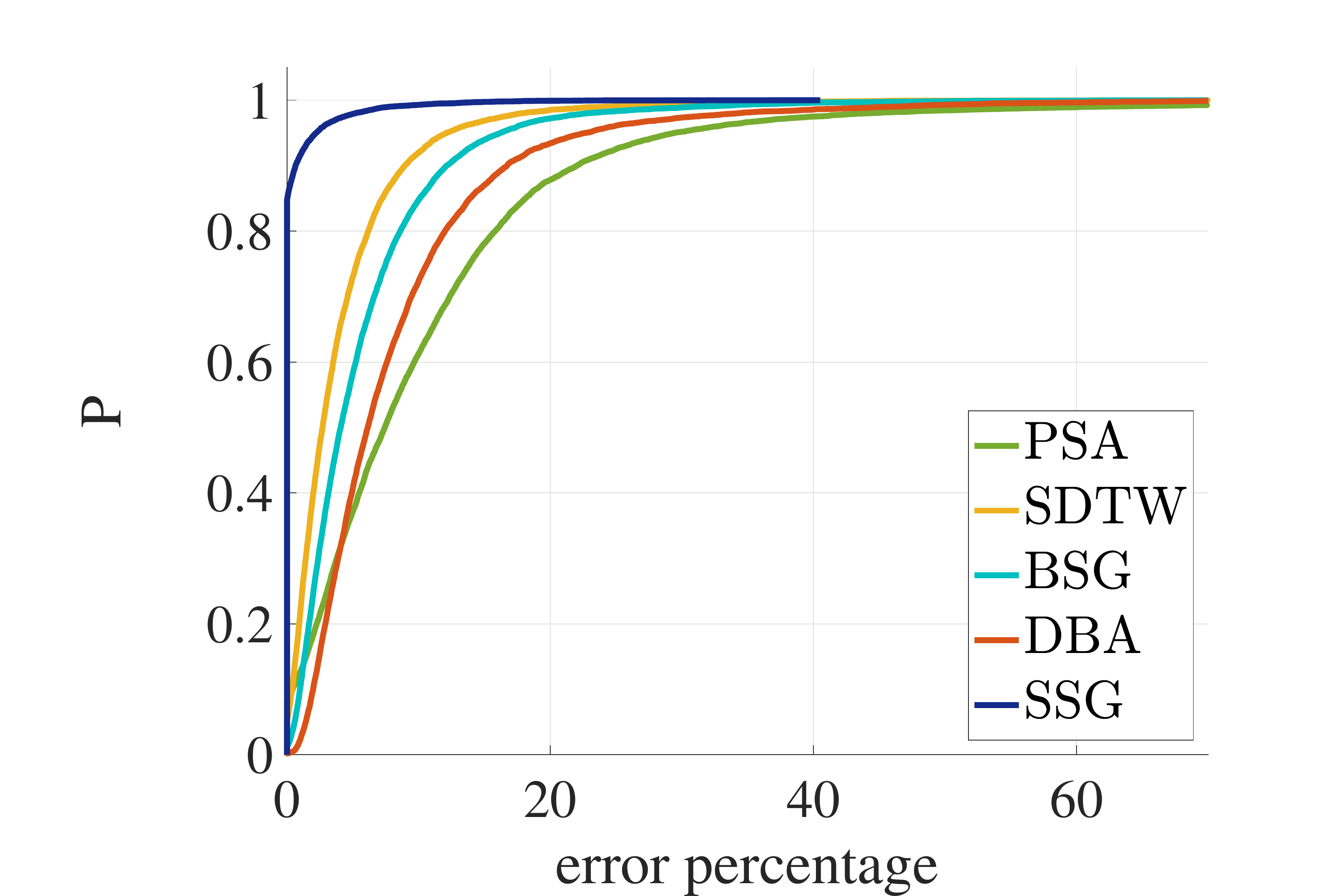}
\caption{Performance profiles}
\end{subfigure}
\hfill
\begin{subfigure}{0.49\textwidth}
\includegraphics[width=\textwidth]{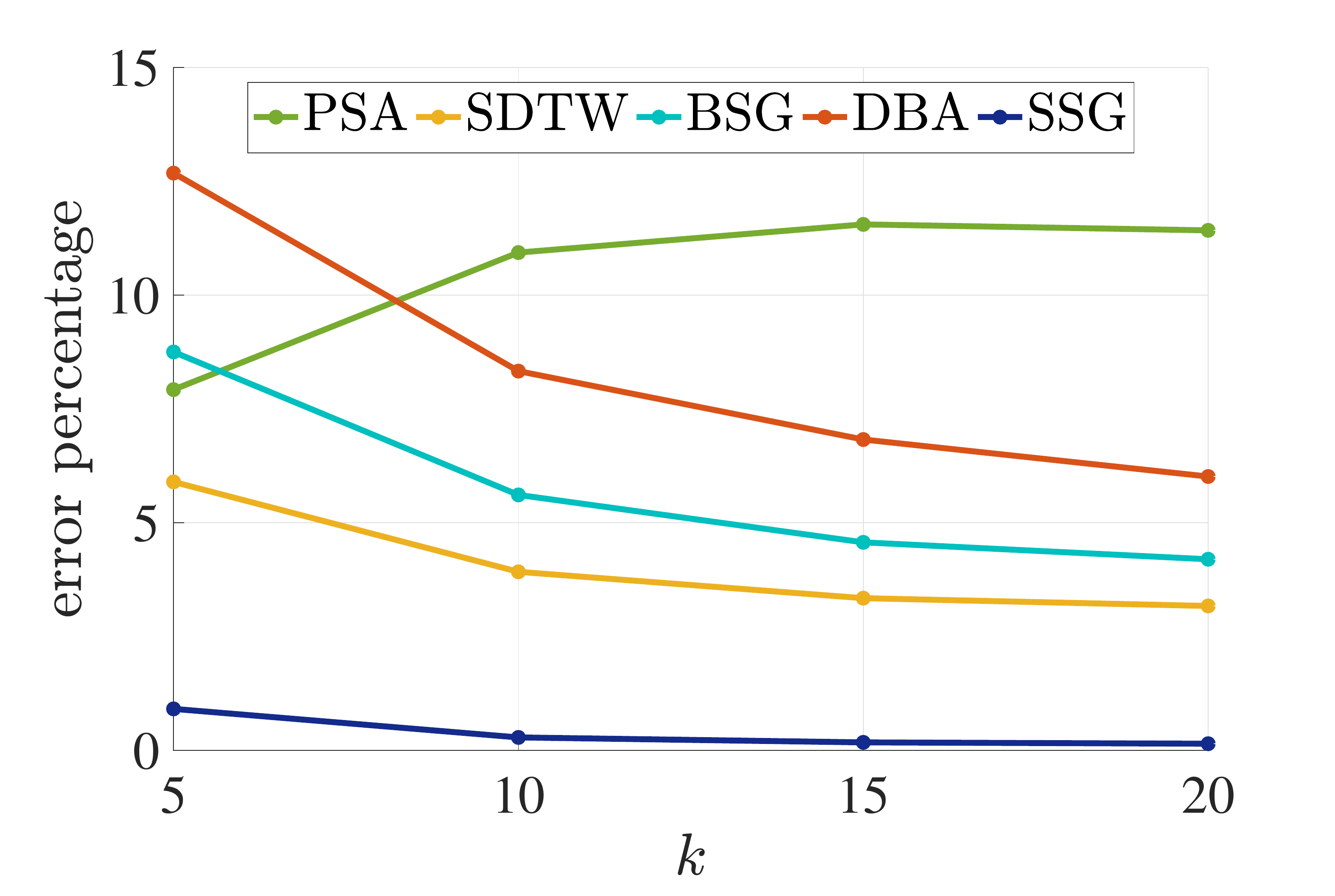}
\caption{Error percentage as a function of $k$}
\end{subfigure}
\caption{Results on samples~$\S{S}_{\ucr}^k$. Plot (a) shows the performance profiles and plot (b) the average error percentages as a function of the sample size $k$.}
\label{fig:exp_ucr_K}
\end{figure}

As test data, we used samples of type $\S{S}_{\ucr}^k$: For every every UCR data set and every sample size $k \in \cbrace{5, 10, 15, 20}$, we randomly selected~$100$ samples of~$k$ time series giving a total of~$10,800$ samples. We applied the modified PSA algorithm and the algorithms \asoft, \absg, \adba, and \assg on all~$10,800$ samples (the setup is the same as described in \Cref{subsubsec:exp:algorithms}). 

\Cref{fig:exp_ucr_K} summarizes the results. The performance profiles exhibit similar patterns for $\S{S}_{\ucr}^k$-samples as those obtained for $\S{S}_{\ucr}$-samples (\Cref{fig:exp_ucr_k2}).
PSA is the worst performing heuristic. \Cref{fig:exp_ucr_K}~(b) shows that the average error percentage of PSA increases with increasing sample size~$k$, whereas the average error percentage of all other heuristics exhibit the opposite trend.
These results show that the weak performance of progressive alignment methods is not explained by
suboptimal pairwise mean computation.

\paragraph{Error Decomposition.}
The previous results revealed a poor solution quality of state-of-the-art heuristics.
In \Cref{sec:length}, we have seen that choosing a fixed mean length~$m$ introduces a certain structural error.
The error~$\varepsilon$ of an algorithm~$A$ can thus be written as
\[
\varepsilon =  F_A \;-\; F_* =  \underbrace{\args{F_A \;-\; F_m^*}}_{\text{approximation error}} + \quad \underbrace{\args{F_m^* \;-\; F_*}}_{\text{structural error } \varepsilon_m},
\]
where $F_A$ is the value returned by algorithm~$A$, $F_* = \min_{z\in\T} F(z)$ is the Fr\'echet variation, and~$F_m^* = \min_{z\in\T_m} F(z)$ is the constrained Fr\'echet variation.
The approximation error~$F_A \;-\; F_m^*$ represents the inability of algorithm~$A$ to solve the constrained \DTW problem.

We conducted some experiments to assess the contribution of the approximation and the structural error to the total error of the considered heuristics.
For every length~$n \in \cbrace{5, 10, 15, \ldots, 40}$, we generated~$1,000$ samples of pairs of random walks of length $n$ giving a total of~$8,000$ samples. We applied~\asoft, \absg, \adba, and \assg~on all samples using the same setting as described in \Cref{subsubsec:exp:algorithms}.

\begin{figure}[t]
\begin{subfigure}{0.49\textwidth}
\includegraphics[width=\textwidth]{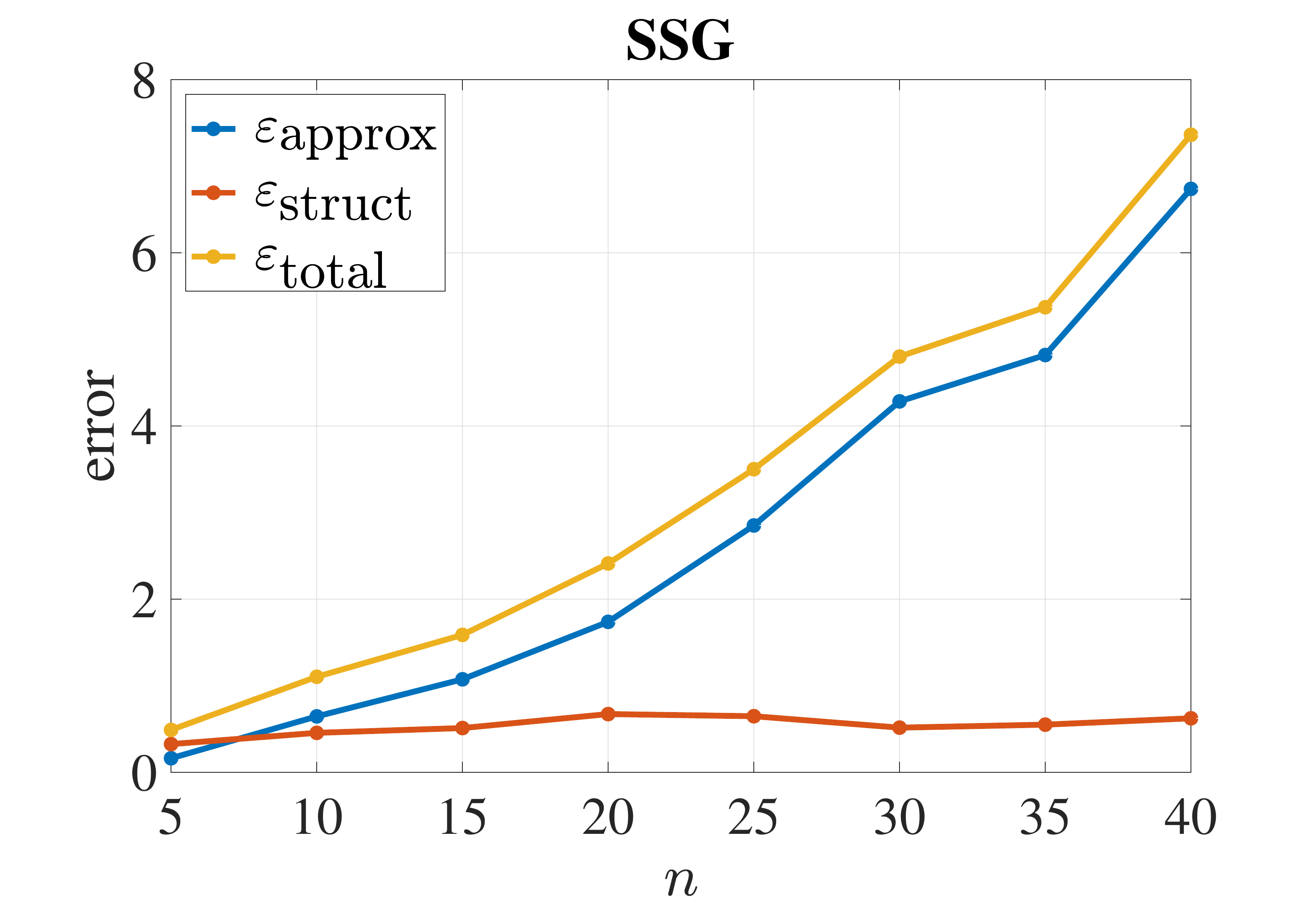}
\caption{Average errors}
\end{subfigure}
\hfill
\begin{subfigure}{0.49\textwidth}
\includegraphics[width=\textwidth]{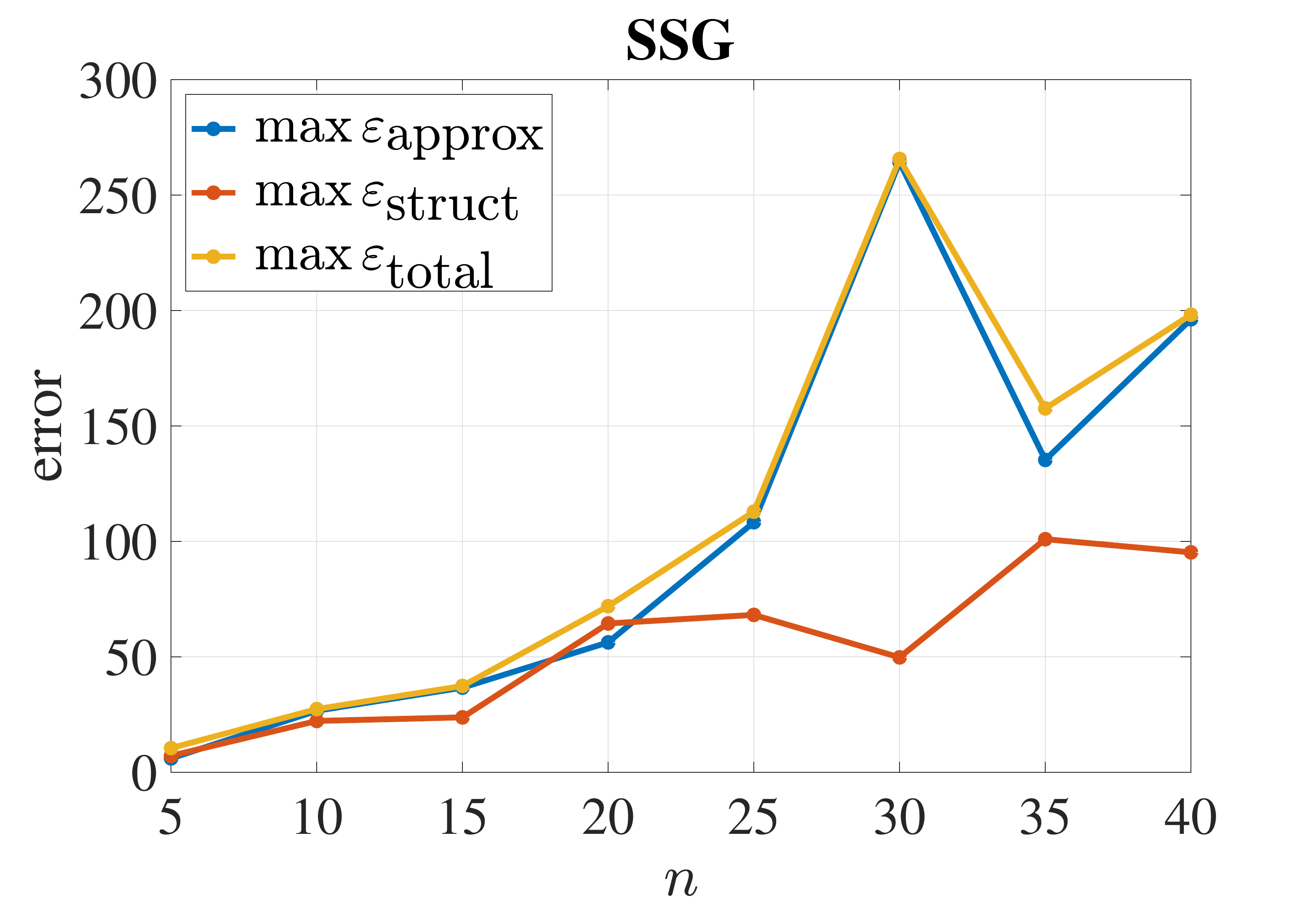}
\caption{Maximum errors}
\end{subfigure}
\caption{Error decomposition of~\assg on pairs of random walks with $5 \leq n \leq 40$. 
Both plots show the total error~$\varepsilon_{\total}$ (yellow), the approximation error~$\varepsilon_{\approx}$ (blue), and the structural error~$\varepsilon_{\struct}$ (red). 
Plot (a) averages and plot (b) maximizes the errors over~$1,000$ samples at each~$n$.}
\label{fig:exp_rw_K2_L_err_decomp}
\end{figure}

\Cref{fig:exp_rw_K2_L_err_decomp} depicts the results of the best performing heuristic (\assg). The other three heuristics exhibited the same behavior with identical structural error but different approximation errors. We made the following observations:
\begin{enumerate}
\item The total error is dominated by the approximation error (for larger~$n$).
\item Both total and approximation error increase with increasing length $n$.
\item The structural error is independent of the length $n$. 
\item In exceptional cases the structural error can be large. 
\end{enumerate}
These observations indicate that the total error is mainly caused by the approximation error. In contrast, the structural error introduced by fixing the mean length becomes negligible for longer sample time series and is large only in very few cases (as concluded in \Cref{sec:summary1}). 
The results suggest that devising heuristics should mainly focus on improving the approximation error.

\begin{figure}[t]
\begin{subfigure}{0.49\textwidth}
\includegraphics[width=\textwidth]{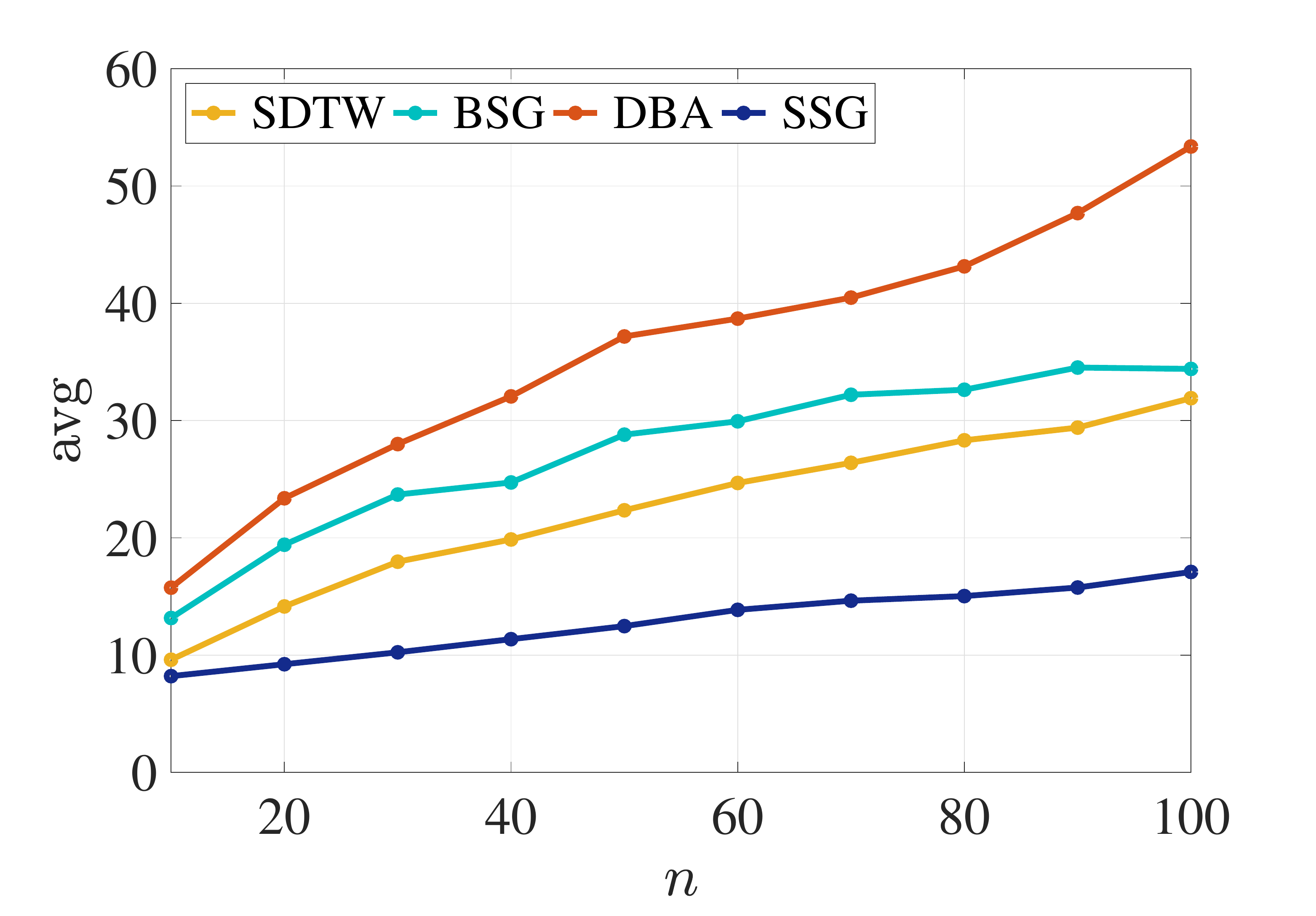}
\end{subfigure}
\hfill
\begin{subfigure}{0.49\textwidth}
\includegraphics[width=\textwidth]{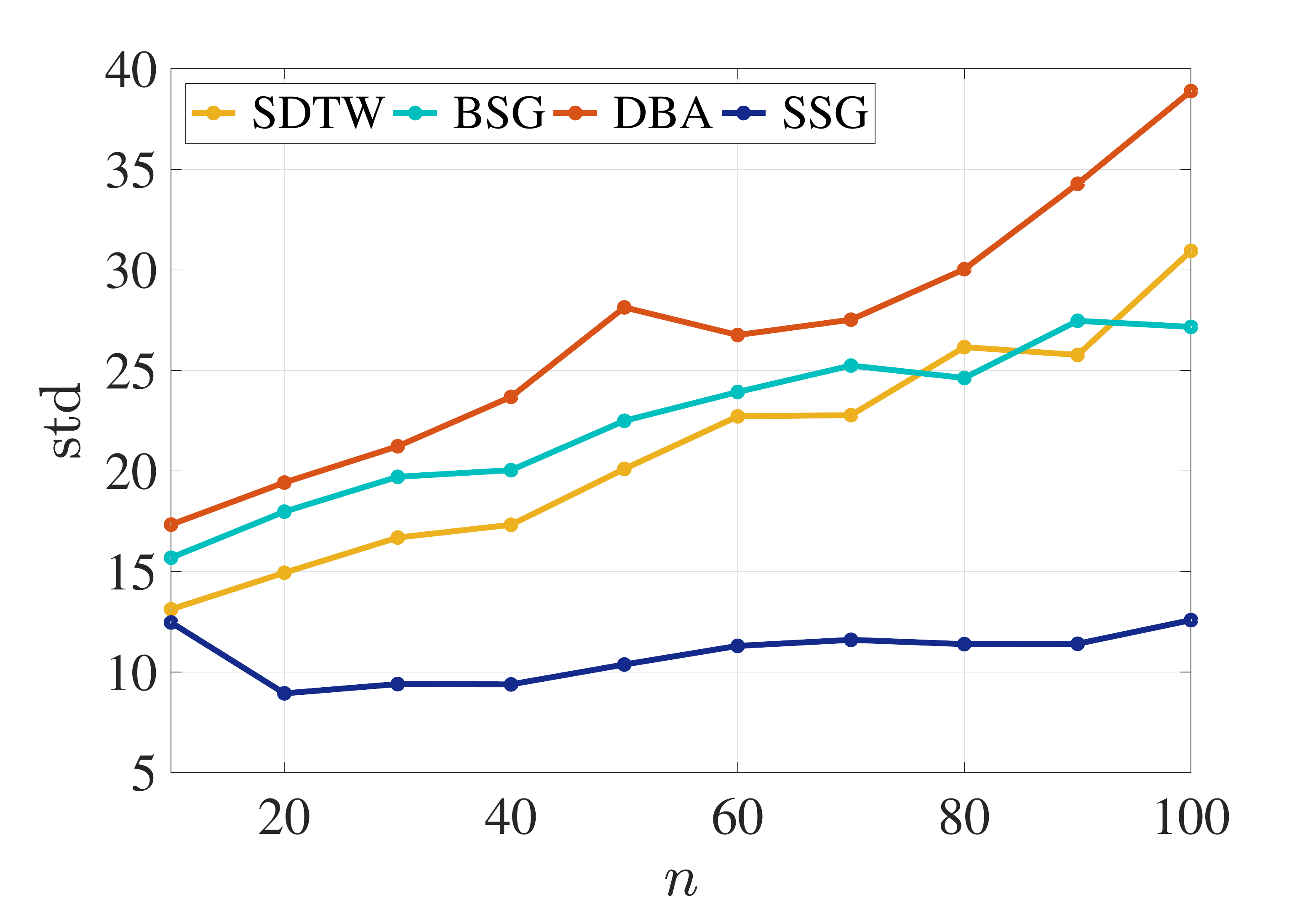}
\end{subfigure}
\begin{subfigure}{0.49\textwidth}
\includegraphics[width=\textwidth]{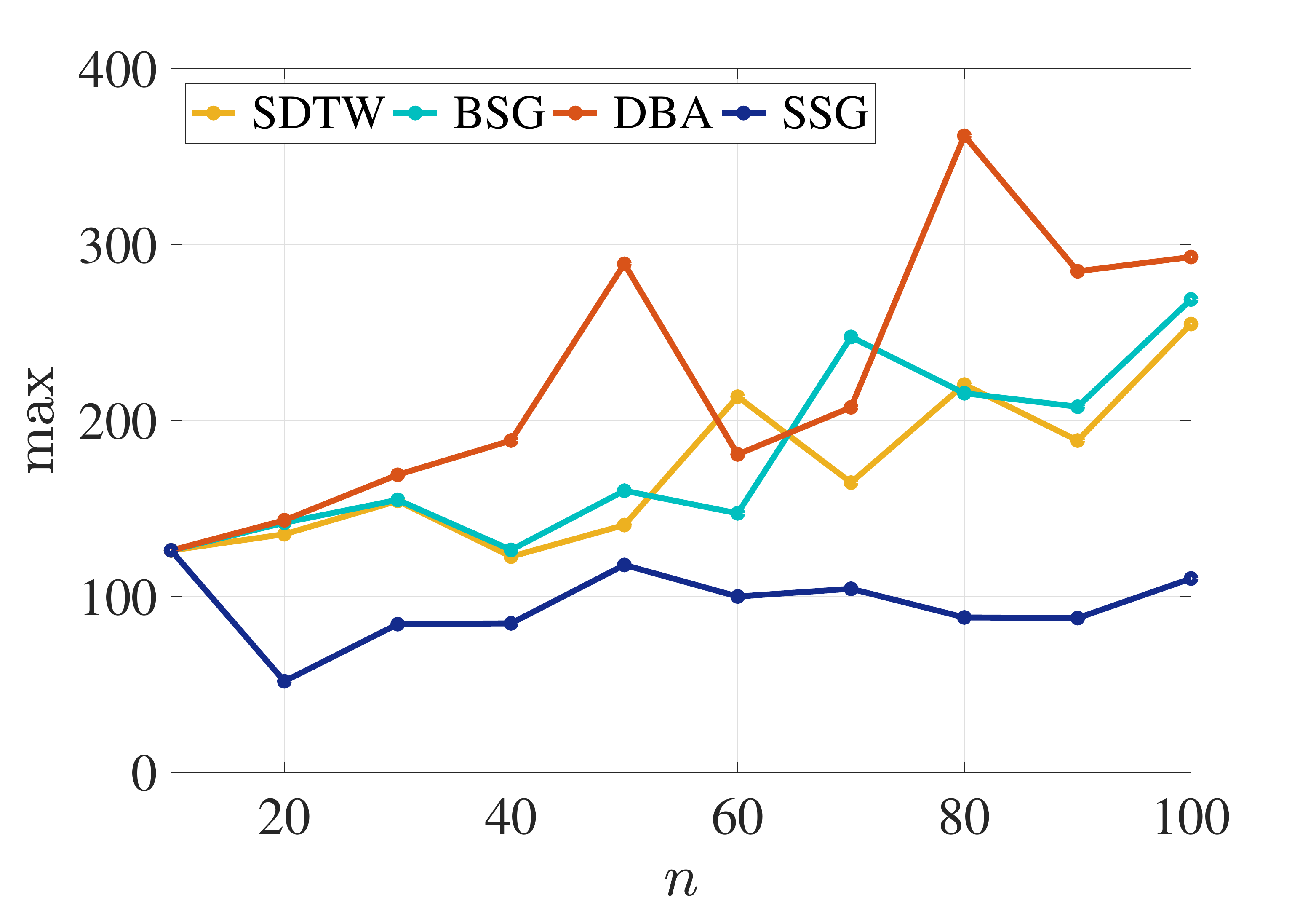}
\end{subfigure}
\hfill
\begin{subfigure}{0.49\textwidth}
\includegraphics[width=\textwidth]{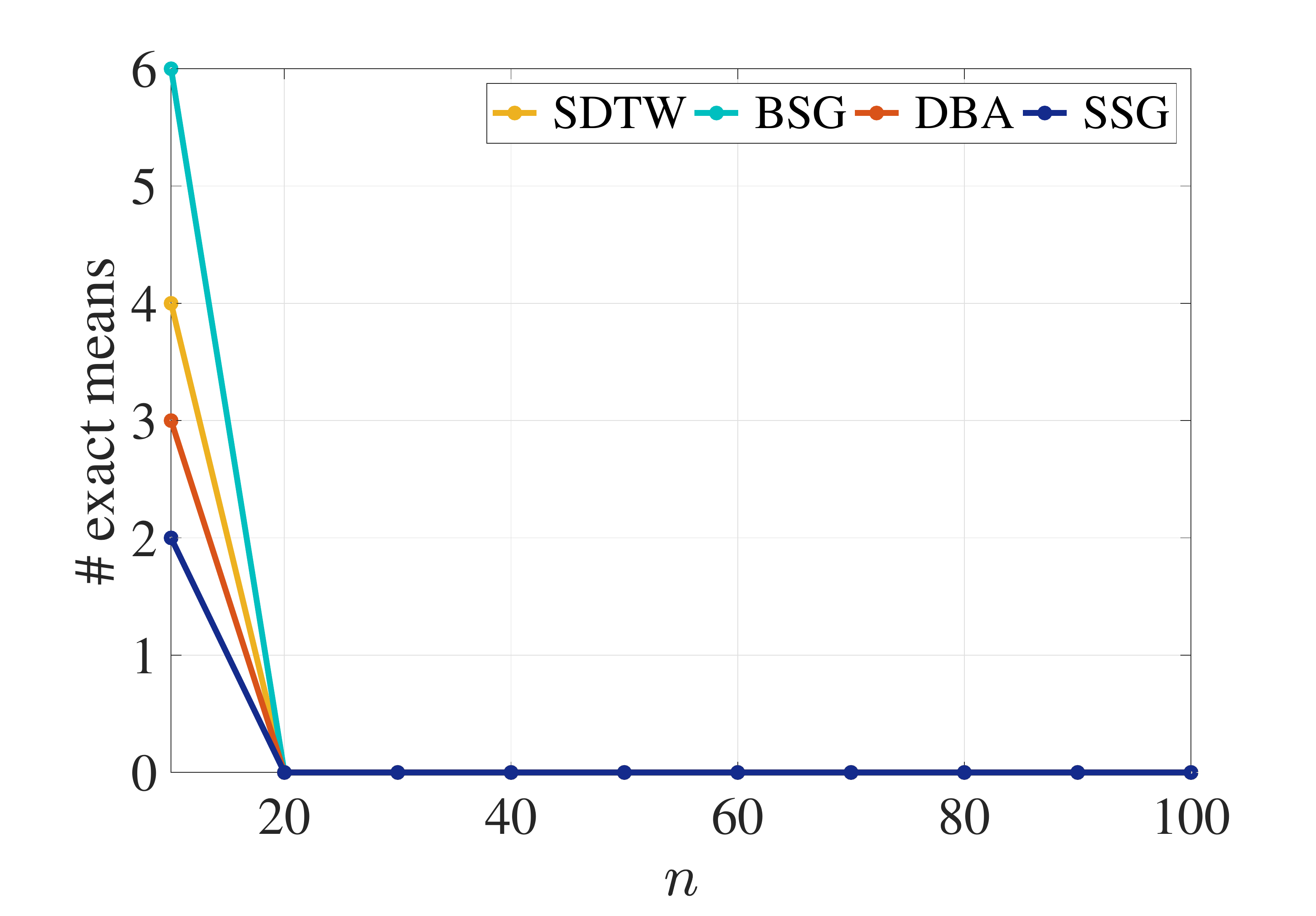}
\end{subfigure}
\caption{Results on samples of type $\S{S}_{\rw}$ consisting of $k = 2$ random walks of length $n$. Average error percentage~(avg), standard deviation~(std), maximum error percentage~(max) and number of exact solutions are plotted as a function of the length $n$.}
\label{fig:exp_rw_K2_Lengths}
\end{figure}

\paragraph{Effect of Sample Time Series Length~$n$.}
We investigated how the performance of~\asoft, \absg, \adba, and \assg~is related to the length $n$ of the sample time series for random walk samples of type $\S{S}_{\rw}$.

\Cref{fig:exp_rw_K2_Lengths} summarizes the results. The trend is that the performance of all heuristics with respect to solution quality (avg) and robustness (std, max) deteriorates with increasing length~$n$. Since we observed before that the structural error is independent of $n$ and likely to be negligible, we conclude that the heuristics perform worse in solving the constrained \DTW problem for larger~$n$. 
No state-of-the-art method found an optimal solution for length $n \geq 20$.

\begin{figure}[t]
\begin{subfigure}{0.49\textwidth}
\includegraphics[width=\textwidth]{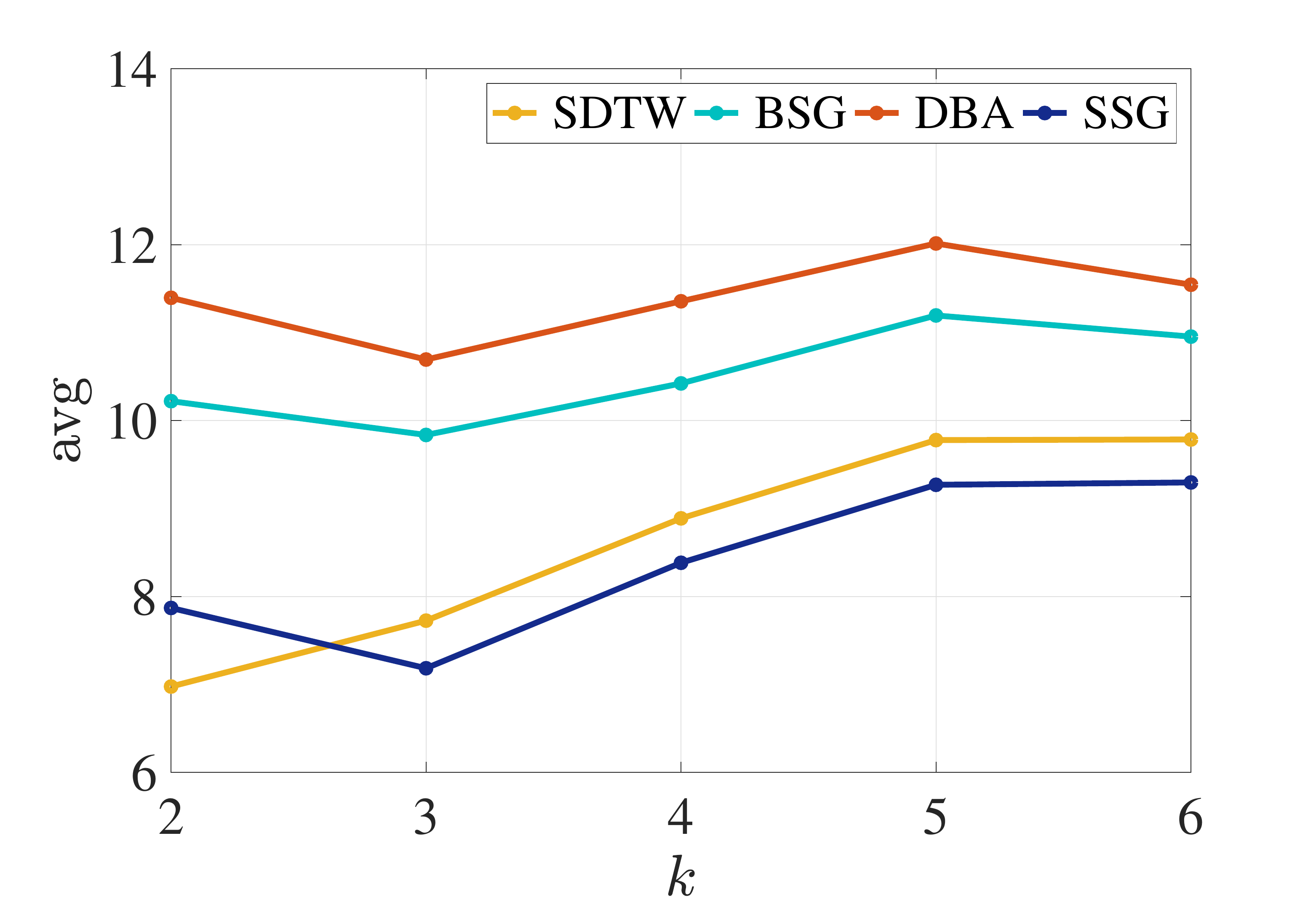}
\end{subfigure}
\hfill
\begin{subfigure}{0.49\textwidth}
\includegraphics[width=\textwidth]{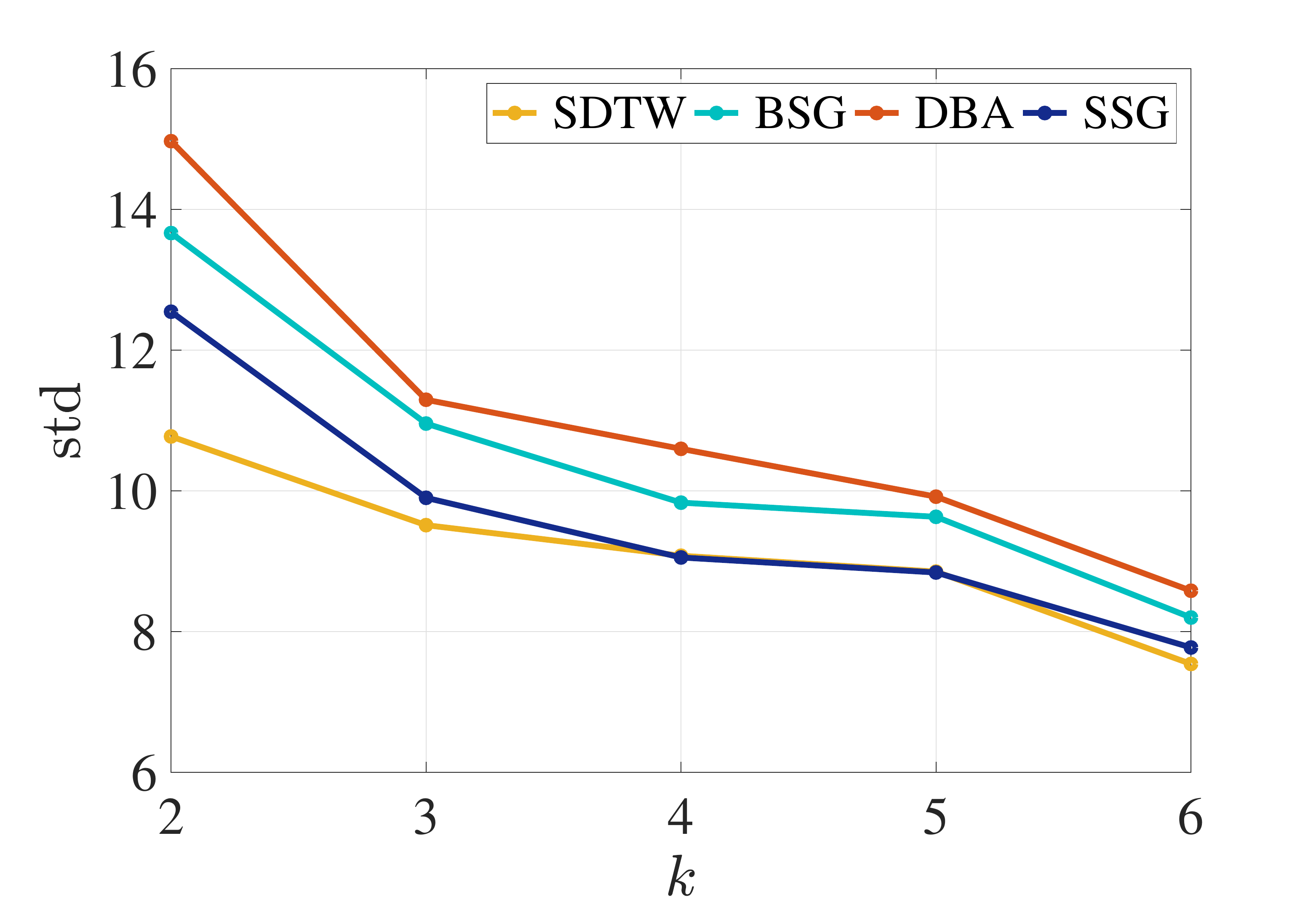}
\end{subfigure}
\begin{subfigure}{0.49\textwidth}
\includegraphics[width=\textwidth]{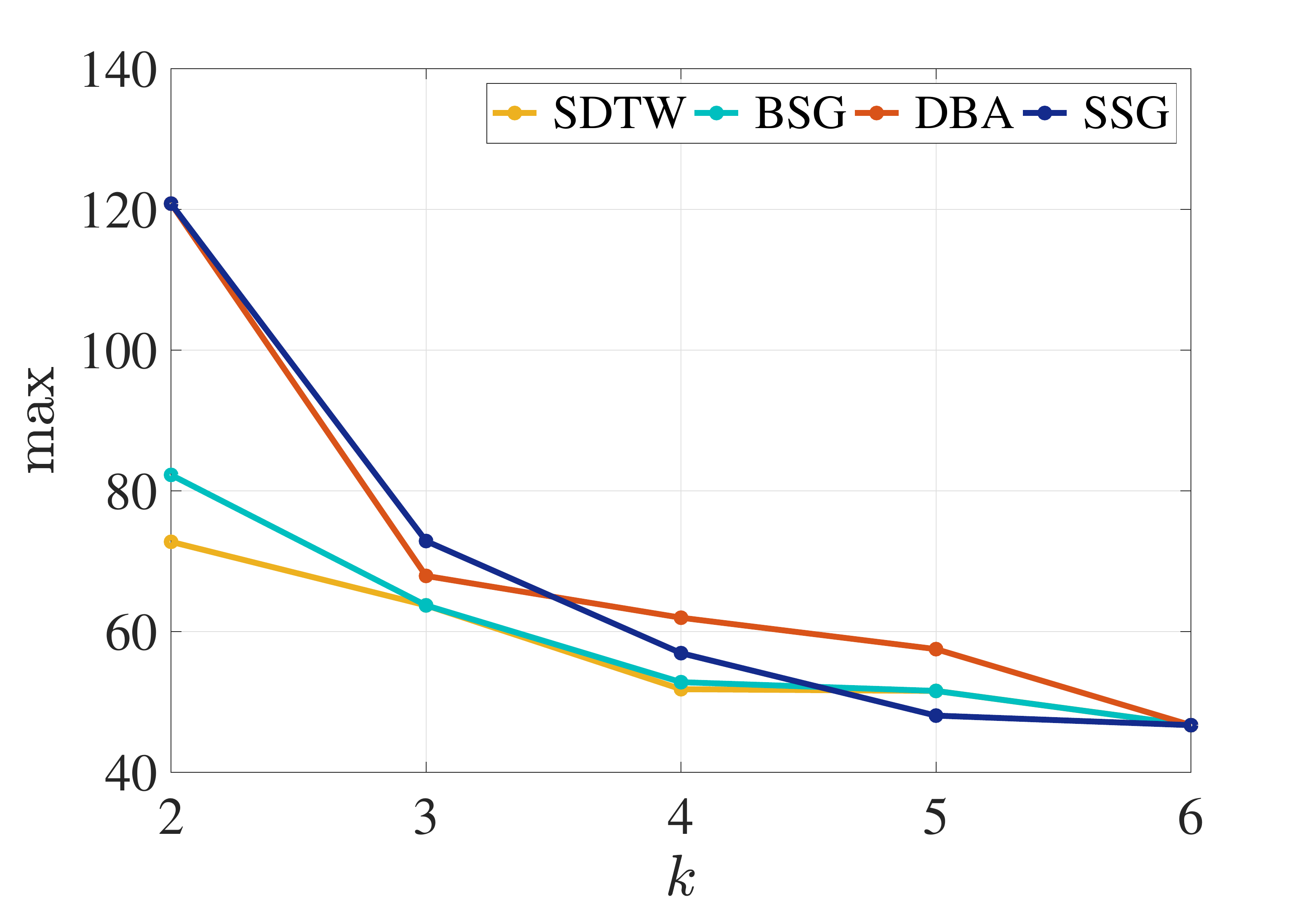}
\end{subfigure}
\hfill
\begin{subfigure}{0.49\textwidth}
\includegraphics[width=\textwidth]{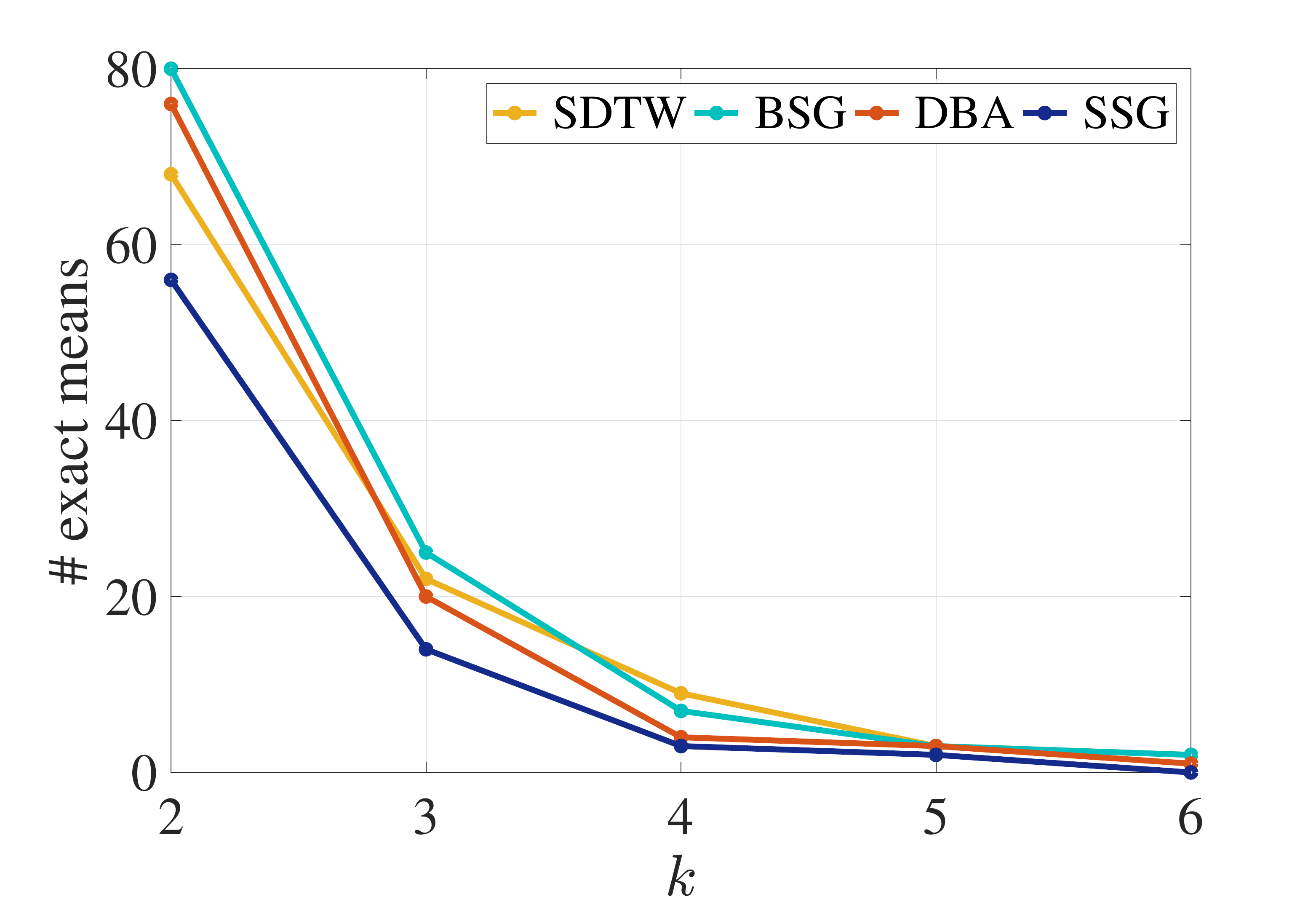}
\end{subfigure}
\caption{Results on samples $\S{S}_{\rw}^k$ each of which consists of $k$ random walks. Average error percentage~(avg), standard deviation~(std), maximum error percentage~(max) and number of exact solutions are plotted as a function of the sample size $k$.}
\label{fig:exp_rw_L6_sample-sizes}
\end{figure}

\paragraph{Effect of Sample Size~$k$.}
We analyzed how the performance of \asoft, \absg, \adba, and \assg~depends on the sample size $k$ for samples of type $\S{S}_{\rw}^k$. 

\Cref{fig:exp_rw_L6_sample-sizes} shows the results. We observed no clear trend with respect to average error-percentage and a  decline of the number of exact solutions found by all heuristics. In contrast, robustness (that is, standard deviation and maximum error percentage) of all four heuristics substantially improved with increasing $k$. 

\subsubsection{Summary}

The main findings of our second series of experiments are:
\begin{enumerate}
  \item State-of-the-art heuristics do not perform well on average due to a large approximation error.
  \item \asym is not competitive to state-of-the-art heuristics (for $k = 2$). 
  \item PSA using exact weighted means is not competitive to state-of-the-art heuristics.
  \end{enumerate}
  Concerning running times, our exact dynamic program is clearly slower than state-of-the-art heuristics (for example, our implementation required 66 seconds on average to compute a mean of two time series of length 100, which is slower by a factor of $10^5$ compared to \adba).

\section{Conclusion}
We developed an exact exponential-time algorithm for \wDTW and
conducted extensive experiments (on small samples containing up to 6 time series of different lengths up to 150) to investigate characteristics of an exact mean and to benchmark existing state-of-the-art heuristics in terms of their solution quality.
On the positive side, we found that a mean is often unique and that fixing the length of a mean in advance does usually not affect the solution quality much.
On the negative side, we showed that basically all heuristics perform poorly in the worst case.
These findings urge for devising better algorithms to solve \DTW.
Our empirical observations (e.g. concerning the typical mean length) might prove useful for this.
As a side result, we also developed a polynomial-time algorithm for binary time series.

We conclude with some challenges for future research.
From an algorithmic point of view it is interesting to investigate whether one can extend the polynomial-time solvability of \BDTW to larger ``alphabet'' sizes; already the case of 
alphabet size three is open.
Another open question is whether \wDTW is polynomial-time solvable for time series of constant lengths (maybe it is even fixed-parameter tractable with respect to the maximum length).
Finally, we wonder whether there are other practically relevant restrictions 
of \DTW that make the problem more tractable, for example, fixing the length of a mean.
Another example, also motivated from a practical point of view, is to compute means with a given size of time windows (also known as the Sakoe-Chiba band~\cite{SC78}).
On a high level, a time window constrains the warping path to only select tuples within a specified range.
On an applied side, our experimental results strongly motivate the 
search for further improved, more ``robust'' heuristics.

\paragraph{Acknowledgements.}
This work was supported by the Deutsche Forschungsgemeinschaft under grants {JA~2109/4-1} and {NI~369/13-2}, and by a Feodor Lynen return fellowship of the Alexander von Humboldt Foundation.  
The work on the theoretical part of this paper started at the research retreat
of the Algorithmics and Computational Complexity group, TU Berlin, 
held at Boiensdorf, Baltic Sea, April~2017, 
with MB, TF, VF, and~RN participating.

\makeatletter
\renewcommand\bibsection%
{
  \section*{\refname
    \@mkboth{\MakeUppercase{\refname}}{\MakeUppercase{\refname}}}
}
\makeatother

{ 
\bibliographystyle{abbrvnat}
\bibliography{./dmkd/ref}

\begin{thebibliography}{34}
\providecommand{\natexlab}[1]{#1}
\providecommand{\url}[1]{\texttt{#1}}
\expandafter\ifx\csname urlstyle\endcsname\relax
  \providecommand{\doi}[1]{doi: #1}\else
  \providecommand{\doi}{doi: \begingroup \urlstyle{rm}\Url}\fi

\bibitem[Abdulla et~al.(2003)Abdulla, Chow, and Sin]{Abdulla2003}
W.~H. Abdulla, D.~Chow, and G.~Sin.
\newblock Cross-words reference template for dtw-based speech recognition
  systems.
\newblock In \emph{Proceeding of the {IEEE} Conference on Convergent
  Technologies for the Asia-Pacific Region (TENCON 2003)}, volume~4, pages
  1576--1579. IEEE, 2003.

\bibitem[Aghabozorgi et~al.(2015)Aghabozorgi, Shirkhorshidi, and Wah]{ASW15}
S.~Aghabozorgi, A.~S. Shirkhorshidi, and T.~Y. Wah.
\newblock Time-series clustering -- {A} decade review.
\newblock \emph{Information Systems}, 53:\penalty0 16--38, 2015.

\bibitem[Blondel(2017)]{BlondelCode17}
M.~Blondel.
\newblock Python implementation of soft-{DTW}.
\newblock \url{https://github.com/mblondel/soft-dtw}, 2017.

\bibitem[Bonizzoni and {Della Vedova}(2001)]{BD01}
P.~Bonizzoni and G.~{Della Vedova}.
\newblock The complexity of multiple sequence alignment with {SP}-score that is
  a metric.
\newblock \emph{Theoretical Computer Science}, 259\penalty0 (1--2):\penalty0
  63--79, 2001.

\bibitem[Bringmann and K{\"{u}}nnemann(2015)]{BK15}
K.~Bringmann and M.~K{\"{u}}nnemann.
\newblock Quadratic conditional lower bounds for string problems and dynamic
  time warping.
\newblock In \emph{Proceedings of the 56th Annual {IEEE} Symposium on
  Foundations of Computer Science (FOCS~'15)}, pages 79--97. IEEE, 2015.

\bibitem[Bulteau et~al.(2018)Bulteau, Froese, and Niedermeier]{BFN18}
L.~Bulteau, V.~Froese, and R.~Niedermeier.
\newblock Hardness of consensus problems for circular strings and time series
  averaging.
\newblock \emph{CoRR}, abs/1804.02854, 2018.

\bibitem[Chen et~al.(2015)Chen, Keogh, Hu, Begum, Bagnall, Mueen, and
  Batista]{Chen2015}
Y.~Chen, E.~Keogh, B.~Hu, N.~Begum, A.~Bagnall, A.~Mueen, and G.~Batista.
\newblock The {UCR} time series classification archive, 2015.
\newblock URL \url{www.cs.ucr.edu/\textasciitilde eamonn/time\_series\_data/}.

\bibitem[Cuturi and Blondel(2017)]{CB17}
M.~Cuturi and M.~Blondel.
\newblock Soft-{DTW}: a differentiable loss function for time-series.
\newblock In \emph{Proceedings of the 34th International Conference on Machine
  Learning (ICML~'17)}, volume~70 of \emph{Proceedings of Machine Learning
  Research}, pages 894--903. PMLR, 2017.

\bibitem[Dolan and Mor\'e(2002)]{Dolan2002}
E.~Dolan and J.~Mor\'e.
\newblock Benchmarking optimization software with performance profiles.
\newblock \emph{Mathematical Programming}, 91\penalty0 (2):\penalty0 201--213,
  2002.

\bibitem[Fahiman et~al.(2017)Fahiman, Bezdek, Erfani, Leckie, and
  Palaniswami]{FBELP17}
F.~Fahiman, J.~C. Bezdek, S.~M. Erfani, C.~Leckie, and M.~Palaniswami.
\newblock Fuzzy c-{S}hape: {A} new algorithm for clustering finite time series
  waveforms.
\newblock In \emph{Proceeding of the 2017 IEEE International Conference on
  Fuzzy Systems (FUZZ-IEEE)}, pages 1--8. IEEE, 2017.

\bibitem[Fr\'{e}chet(1948)]{Frechet1948}
M.~Fr\'{e}chet.
\newblock Les \'el\'ements al\'eatoires de nature quelconque dans un espace
  distanci\'e.
\newblock \emph{Annales de l'{I}nstitut Henri Poincar\'e}, pages 215--310,
  1948.

\bibitem[Gold and Sharir(2017)]{GS17}
O.~Gold and M.~Sharir.
\newblock Dynamic time warping and geometric edit distance: Breaking the
  quadratic barrier.
\newblock In \emph{Proceedings of the 44th International Colloquium on
  Automata, Languages, and Programming (ICALP~'17)}, volume~80 of
  \emph{{LIPIcs}}, pages 25:1--25:14. Schloss Dagstuhl--Leibniz-Zentrum fuer
  Informatik, 2017.

\bibitem[Gupta et~al.(1996)Gupta, Molfese, Tammana, and Simos]{GMTS96}
L.~Gupta, D.~L. Molfese, R.~Tammana, and P.~G. Simos.
\newblock Nonlinear alignment and averaging for estimating the evoked
  potential.
\newblock \emph{IEEE Transactions on Biomedical Engineering}, 43\penalty0
  (4):\penalty0 348--356, 1996.

\bibitem[Gusfield(1997)]{G97}
D.~Gusfield.
\newblock \emph{Algorithms on Strings, Trees and Sequences}.
\newblock Cambridge University Press, 1997.

\bibitem[Hautamaki et~al.(2008)Hautamaki, Nykanen, and Franti]{HNF08}
V.~Hautamaki, P.~Nykanen, and P.~Franti.
\newblock Time-series clustering by approximate prototypes.
\newblock In \emph{Proceedings of the 19th International Conference on Pattern
  Recognition (ICPR~'08)}, pages 1--4. IEEE, 2008.

\bibitem[Jain and Schultz(2017)]{JS17}
B.~Jain and D.~Schultz.
\newblock Optimal warping paths are unique for almost every pair of time
  series.
\newblock \emph{arXiv:1705.05681}, 2017.

\bibitem[Jain and Schultz(2018)]{JS18}
B.~Jain and D.~Schultz.
\newblock A reduction theorem for the sample mean in dynamic time warping
  spaces.
\newblock \emph{CoRR}, abs/1610.04460v3, 2018.

\bibitem[Marteau(2016)]{Marteau16}
P.-F. Marteau.
\newblock Times series averaging and denoising from a probabilistic perspective
  on time-elastic kernels.
\newblock \emph{CoRR}, abs/1611.09194, 2016.

\bibitem[Morel et~al.(2018)Morel, Achard, Kulpa, and Dubuisson]{Morel2018}
M.~Morel, C.~Achard, R.~Kulpa, and S.~Dubuisson.
\newblock Time-series averaging using constrained dynamic time warping with
  tolerance.
\newblock \emph{Pattern Recognition}, 74:\penalty0 77--89, 2018.

\bibitem[Nicolas and Rivals(2005)]{NR05}
F.~Nicolas and E.~Rivals.
\newblock Hardness results for the center and median string problems under the
  weighted and unweighted edit distances.
\newblock \emph{Journal of Discrete Algorithms}, 3\penalty0 (2):\penalty0
  390--415, 2005.

\bibitem[Niennattrakul and Ratanamahatana(2009)]{NR09}
V.~Niennattrakul and C.~A. Ratanamahatana.
\newblock Shape averaging under time warping.
\newblock In \emph{Proceedings of the 6th International Conference on
  Electrical Engineering/Electronics, Computer, Telecommunications and
  Information Technology (ECTI-CON~'09)}, volume~02, pages 626--629, 2009.

\bibitem[Oates et~al.(1999)Oates, Firoiu, and Cohen]{Oates1999}
T.~Oates, L.~Firoiu, and P.~R. Cohen.
\newblock Clustering time series with hidden {M}arkov models and dynamic time
  warping.
\newblock In \emph{Proceedings of the IJCAI'99 Workshop on Neural, Symbolic and
  Reinforcement Learning Methods for Sequence Learning}, pages 17--21.
  Citeseer, 1999.

\bibitem[Paparrizos and Gravano(2017)]{PG17}
J.~Paparrizos and L.~Gravano.
\newblock Fast and accurate time-series clustering.
\newblock \emph{ACM Transactions on Database Systems}, 42\penalty0
  (2):\penalty0 8:1--8:49, 2017.

\bibitem[Petitjean and Gan{\c{c}}arski(2012)]{PG12}
F.~Petitjean and P.~Gan{\c{c}}arski.
\newblock Summarizing a set of time series by averaging: From {S}teiner
  sequence to compact multiple alignment.
\newblock \emph{Theoretical Computer Science}, 414\penalty0 (1):\penalty0
  76--91, 2012.

\bibitem[Petitjean et~al.(2011)Petitjean, Ketterlin, and
  Gan{\c{c}}arski]{PKG11}
F.~Petitjean, A.~Ketterlin, and P.~Gan{\c{c}}arski.
\newblock A global averaging method for dynamic time warping, with applications
  to clustering.
\newblock \emph{Pattern Recognition}, 44\penalty0 (3):\penalty0 678--693, 2011.

\bibitem[Petitjean et~al.(2014)Petitjean, Forestier, Webb, Nicholson, Chen, and
  Keogh]{PFWNCK14}
F.~Petitjean, G.~Forestier, G.~I. Webb, A.~E. Nicholson, Y.~Chen, and E.~Keogh.
\newblock Dynamic time warping averaging of time series allows faster and more
  accurate classification.
\newblock In \emph{Proceeding of the 2014 IEEE International Conference on Data
  Mining (ICDM'14)}, pages 470--479. IEEE, 2014.

\bibitem[Petitjean et~al.(2016)Petitjean, Forestier, Webb, Nicholson, Chen, and
  Keogh]{PFWNCK16}
F.~Petitjean, G.~Forestier, G.~I. Webb, A.~E. Nicholson, Y.~Chen, and E.~Keogh.
\newblock Faster and more accurate classification of time series by exploiting
  a novel dynamic time warping averaging algorithm.
\newblock \emph{Knowledge and Information Systems}, 47\penalty0 (1):\penalty0
  1--26, 2016.

\bibitem[Rabiner and Wilpon(1979)]{Rabiner1979}
L.~Rabiner and J.~Wilpon.
\newblock Considerations in applying clustering techniques to
  speaker-independent word recognition.
\newblock \emph{The Journal of the Acoustical Society of America}, 66\penalty0
  (3):\penalty0 663--673, 1979.

\bibitem[Sakoe and Chiba(1978)]{SC78}
H.~Sakoe and S.~Chiba.
\newblock Dynamic programming algorithm optimization for spoken word
  recognition.
\newblock \emph{IEEE Transactions on Acoustics, Speech, and Signal Processing},
  26\penalty0 (1):\penalty0 43--49, 1978.

\bibitem[Schultz and Jain(2018)]{SJ17}
D.~Schultz and B.~Jain.
\newblock Nonsmooth analysis and subgradient methods for averaging in dynamic
  time warping spaces.
\newblock \emph{Pattern Recognition}, 74\penalty0 (Supplement C):\penalty0
  340--358, 2018.

\bibitem[Schultz and Jain(2016)]{SJcode16}
D.~Schultz and B.~J. Jain.
\newblock Sample mean algorithms for averaging in dynamic time warping spaces,
  2016.
\newblock URL \url{https://doi.org/10.5281/zenodo.216233}.

\bibitem[Soheily-Khah et~al.(2015)Soheily-Khah, Douzal-Chouakria, and
  Gaussier]{Soheily-Khah2015}
S.~Soheily-Khah, A.~Douzal-Chouakria, and E.~Gaussier.
\newblock Progressive and iterative approaches for time series averaging.
\newblock In \emph{Proceedings of the 1st International Conference on Advanced
  Analytics and Learning on Temporal Data-Volume 1425}, pages 111--117.
  CEUR-WS.org, 2015.

\bibitem[Soheily-Khah et~al.(2016)Soheily-Khah, Douzal-Chouakria, and
  Gaussier]{Soheily-Khah2016}
S.~Soheily-Khah, A.~Douzal-Chouakria, and E.~Gaussier.
\newblock Generalized {$k$}-means-based clustering for temporal data under
  weighted and kernel time warp.
\newblock \emph{Pattern Recognition Letters}, 75:\penalty0 63--69, 2016.

\bibitem[Sun et~al.(2017)Sun, Liu, Yu, and Chen]{sun2017}
T.~Sun, H.~Liu, H.~Yu, and C.~P. Chen.
\newblock Degree-pruning dynamic programming approaches to central time series
  minimizing dynamic time warping distance.
\newblock \emph{IEEE {T}ransactions on {C}ybernetics}, 47\penalty0
  (7):\penalty0 1719--1729, 2017.

\end{thebibliography}
}

\newpage
\appendix
\section{Performance Profiles}\label{app:pp}

To compare the performance of the mean algorithms, we used a slight variation of the performance profiles proposed by \citet{Dolan2002}. A performance profile is a cumulative distribution function for a performance metric. Here, the chosen performance metric is the error percentage from the exact solution. 

To define a performance profile, we assume that $\mathbb{A}$ is a set of mean algorithms and $\mathbb{S}$ is a set of samples each of which consists of $k$ time series. For each sample $\S{X} \in \mathbb{X}$ and each mean algorithm $A \in \mathbb{A}$, we define $E_{A,\S{S}}$ as the error percentage obtained by applying algorithm $A$ on sample $\S{S}$. The performance profile of algorithm $A$ over all samples $\S{S} \in \mathbb{S}$ is the empirical cumulative distribution function defined by
\[
P_A(\tau) = \frac{1}{\abs{\mathbb{S}}} \abs{\cbrace{\S{S} \in \mathbb{S} \,:\, E_{A,\S{S}} \leq \tau}}
\]
for all $\tau \geq 0$. Thus, $P_A(\tau)$ is the estimated probability that the error percentage of algorithm $A$ is at most $\tau$. The value $P_A(0)$ is the estimated probability that algorithm $A$ finds an exact solution.

\section{Results}\label{app:results}

\begin{table}[t]
\centering
\scriptsize
\begin{tabular}{l@{\qquad}rrrrr}
\toprule
UCR data set                    & \asym & \asoft & \absg & \adba & \assg\\
\midrule
ItalyPowerDemand               &  26.1 &  13.8 &  17.6 &  19.2 &  \B 10.5 \\
synthetic\_control             &  54.7 &  22.1 &  26.5 &  29.4 &  \B 17.9 \\
SonyAIBORobotSurface           &  33.6 &  20.6 &  26.0 &  28.9 &  \B 17.2 \\
SonyAIBORobotSurfaceII         &  28.4 &  20.0 &  23.8 &  26.7 &  \B 17.8 \\
ProximalPhalanxTW              &  35.2 &  11.8 &  15.3 &  17.1 &  \B 10.6 \\
ProximalPhalanxOutlineCorrect  &  35.1 &  13.4 &  16.9 &  18.4 &  \B 11.9 \\
ProximalPhalanxOutlineAgeGroup &  36.9 &  11.7 &  15.3 &  17.4 &  \B 10.8 \\
PhalangesOutlinesCorrect       &  49.7 &  15.2 &  19.5 &  22.3 &  \B 13.2 \\
MiddlePhalanxTW                &  36.8 &  12.5 &  14.8 &  17.3 &  \B 10.0 \\
MiddlePhalanxOutlineCorrect    &  38.7 &  13.3 &  16.2 &  18.4 &  \B 10.8 \\
MiddlePhalanxOutlineAgeGroup   &  37.0 &  12.8 &  15.1 &  17.6 &  \B  9.9 \\
DistalPhalanxTW                &  41.9 &  12.0 &  15.3 &  18.6 &  \B 10.5 \\
DistalPhalanxOutlineCorrect    &  48.4 &  13.4 &  17.4 &  21.2 &  \B 11.6 \\
DistalPhalanxOutlineAgeGroup   &  43.6 &  12.4 &  15.5 &  18.6 &  \B 10.6 \\
TwoLeadECG                     &  44.4 &  12.0 &  15.3 &  17.7 &  \B  9.6 \\
MoteStrain                     &  79.7 &  17.5 &  23.3 &  29.2 &  \B 12.6 \\
ECG200                         &  51.4 &  20.5 &  23.5 &  26.5 &  \B 14.2 \\
MedicalImages                  &  84.6 &  16.8 &  21.0 &  27.6 &  \B 10.6 \\
Two\_Patterns                  & 184.0 &  73.5 &  82.8 &  92.8 &  \B 56.4 \\
SwedishLeaf                    &  56.2 &  18.0 &  23.2 &  30.6 &  \B 13.4 \\
CBF                            &  28.5 &  31.2 &  33.8 &  36.9 &  \B 26.4 \\
FacesUCR                       &  38.4 &  22.5 &  27.6 &  30.2 &  \B 18.0 \\
FaceAll                        &  37.7 &  22.6 &  27.6 &  30.2 &  \B 18.3 \\
ECGFiveDays                    &  41.7 &  11.0 &  13.0 &  15.9 &  \B  8.5 \\
ECG5000                        &  77.8 &  15.3 &  18.0 &  22.4 &  \B  9.9 \\
Plane                          &  83.1 &  13.6 &  18.6 &  30.0 &  \B  9.7 \\
Gun\_Point                     & 489.8 &  36.2 &  52.0 &  77.0 &  \B 27.4 \\
\midrule
\textbf{Total}                 &  68.3 &  19.1 &  23.5 &  28.1 &  \B 15.1 \\
\bottomrule
\end{tabular}
\caption{Average error percentage of the five heuristics on samples $\S{S}_{\ucr}$ of size $k = 2$ grouped by UCR data set. Averages were taken over $1,000$ samples randomly drawn from each data set.}
\label{tab:res_ucr_k2}
\end{table}

\begin{table}[t]
\centering
\scriptsize
\begin{tabular}{r@{\qquad}rrrrr}
\toprule
$n$  & \asym & \asoft & \absg & \adba & \assg \\
\midrule
10  &  64.3 &   9.6 &  13.2 &  15.8 &  \B 8.2  \\
20  & 107.4 &  14.2 &  19.4 &  23.4 &  \B 9.2  \\
30  & 152.1 &  18.0 &  23.7 &  28.0 &  \B 10.2 \\
40  & 179.1 &  19.9 &  24.7 &  32.1 &  \B 11.4 \\
50  & 206.5 &  22.4 &  28.8 &  37.2 &  \B 12.5 \\
60  & 231.2 &  24.7 &  29.9 &  38.7 &  \B 13.9 \\
70  & 253.2 &  26.4 &  32.2 &  40.5 &  \B 14.6 \\
80  & 269.7 &  28.3 &  32.6 &  43.2 &  \B 15.0 \\
90  & 303.9 &  29.4 &  34.5 &  47.7 &  \B 15.8 \\
100 & 339.8 &  31.9 &  34.4 &  53.4 &  \B 17.1 \\
\midrule
\textbf{Total} & 210.7 &  22.5 &  27.4 &  36.0 &  \B 12.8 \\
\bottomrule
\end{tabular}
\caption{Average error percentage of the five heuristics on $\S{S}_{\rw}$-samples of size $k = 2$ grouped by length $n$ of random walks. Averages were taken over $1,000$ samples for each length $n$.}
\label{tab:res_rw_k2}
\end{table}

\begin{table}
\centering
\begin{tabular}{ll@{\qquad}rrrr}
\toprule
&&  avg &   std & max & eq \\
\midrule
$k = 2$\\
&\asoft &   7.0 &  10.8 &  72.8 &  68.0 \\
&\absg  &  10.2 &  13.7 &  82.3 &  80.0 \\
&\adba  &  11.4 &  15.0 & 120.8 &  76.0 \\
&\assg  &   7.9 &  12.5 & 120.8 &  56.0 \\
\midrule
$k = 3$\\
&\asoft &   7.7 &   9.5 &  63.7 &  22.0 \\
&\absg  &   9.8 &  11.0 &  63.7 &  25.0 \\
&\adba  &  10.7 &  11.3 &  67.9 &  20.0 \\
&\assg  &   7.2 &   9.9 &  72.9 &  14.0 \\
\midrule
$k = 4$\\
&\asoft &   8.9 &   9.1 &  51.8 &   9.0 \\
&\absg  &  10.4 &   9.8 &  52.8 &   7.0 \\
&\adba  &  11.4 &  10.6 &  62.0 &   4.0 \\
&\assg  &   8.4 &   9.1 &  57.0 &   3.0 \\
\midrule
$k = 5$\\
&\asoft &   9.8 &   8.9 &  51.6 &   3.0 \\
&\absg  &  11.2 &   9.6 &  51.6 &   3.0 \\
&\adba  &  12.0 &   9.9 &  57.5 &   3.0 \\
&\assg  &   9.3 &   8.8 &  48.1 &   2.0 \\
\midrule
$k = 6$\\
&\asoft &   9.8 &   7.5 &  46.7 &   2.0 \\
&\absg  &  11.0 &   8.2 &  46.7 &   2.0 \\
&\adba  &  11.5 &   8.6 &  46.7 &   1.0 \\
&\assg  &   9.3 &   7.8 &  46.7 &   0.0 \\
\bottomrule
\end{tabular}
\caption{Average error percentage of the five heuristics on $\S{S}_{\rw}^k$-samples of varying sample size $k$. Length of random walks was $n = 6$. Averages were taken over $1,000$ samples for each sample size $k$.}
\label{tab:res_rw_L6_k}
\end{table}

\end{document}